\documentclass[11.5pt]{article}

\usepackage{amssymb,amsmath,amsthm,bbm,enumitem,xcolor}
\usepackage{eurosym}
\usepackage[a4paper, total={6.3in, 9.2in}]{geometry}
\usepackage{setspace}
\usepackage{hyperref}
\usepackage{authblk}
\usepackage{graphicx,caption,subcaption}
\usepackage[title]{appendix}
 \usepackage{mathtools}
\usepackage{algorithm,algpseudocode}
\usepackage{tikz}
\usetikzlibrary{shapes.geometric, arrows, decorations.markings,shapes.multipart,cd,positioning,patterns,snakes}
\tikzstyle{startstop} = [rectangle, rounded corners, minimum width=2cm, minimum height=1cm,text centered, draw=black]
\tikzstyle{vecArrow} = [thick, decoration={markings,mark=at position
   1 with {\arrow[semithick]{open triangle 60}}},
   double distance=1.4pt, shorten >= 5.5pt,
   preaction = {decorate},
   postaction = {draw,line width=1.4pt, white,shorten >= 4.5pt}]
\tikzstyle{innerWhite} = [semithick, white,line width=1.4pt, shorten >= 4.5pt]
   \tikzset{|/.tip={Bar[width=.8ex,round]}}

\usepackage{apacite} 
\AtBeginDocument{%

   \urlstyle{APACsame}%
}

\usepackage[round]{natbib}

\theoremstyle{definition}
\newtheorem{definition}{Definition}[section]
\theoremstyle{definition}
\newtheorem{example}{Example}[section]
\theoremstyle{definition}

\theoremstyle{definition}
\newtheorem{proposition}{Proposition}[section]
\theoremstyle{definition}

\theoremstyle{remark}
\newtheorem{remark}{Remark}[section]
\theoremstyle{definition}
\newtheorem{lemma}{Lemma}[section]

\theoremstyle{definition}
\newtheorem{problem}{Problem}

\setlist[itemize]{leftmargin=0.4cm,labelindent=\parindent}

\title{Capital-Allocation-Induced Risk Sharing}
\date{\today}

\author[$\dagger$]{Wing Fung Chong}
\author[$\star$]{Runhuan Feng}
\author[$\star\star$]{Kenneth Tsz Hin Ng}

\affil[$\dagger$]{Department of Statistics and Actuarial Science, School of Computing and Data Science, The University of Hong Kong. Email: chongwf@hku.hk.}
\affil[$\star$]{School of Economics and Management, Tsinghua University. Email: fengrh@sem.tsinghua.edu.cn.}
\affil[$\star\star$]{Department of Mathematics, The Ohio State University. Email: ng.499@osu.edu.}

\begin{document}
\maketitle

\begin{abstract} 
This article proposes a new class of risk-sharing rules by exploring the relationship between capital allocation and risk sharing. While the former is concerned with \textit{ex-ante} allocating capitals to different lines of business within a corporation based on the relationship among the individual risks, often also through the aggregate risk, the latter is an arrangement which collects risks from and allocates them to, also \textit{ex-ante}, a group of participants. Drawing on this analogy, we introduce a novel idea of inducing risk-sharing rules by randomizing existing capital allocation principles. Such an approach derives new risk-sharing rules complementing known results in the literature, which were largely based on economic principles and Pareto optimality.
\end{abstract}

\begin{flushleft}
    \textit{Keywords: Capital allocation, risk sharing, randomization, comonotonicity}.

\end{flushleft}

\section{Introduction}

Risk sharing is prevalent in many industries, including supply chain risk pooling, collaborative logistics, healthcare resource allocation, financial risk management, and project delivery partnerships. A pension fund manager preparing to invest in a significant risk transfer transaction, like the \euro 2.1 billion corporate loan portfolio recently shared between PGGM and UniCredit Bulbank\footnote{\url{https://europeanpensions.net/ep/Dutch-PGGM-and-UniCredit-Bulbank-strike-2-1bn-risk-sharing-deal.php}}, must decide how to distribute credit risk among institutional investors, each with different risk tolerance and capital constraint. A construction firm entering an integrated project delivery agreement, where the owner, designer, contractor, and subcontractor must agree upfront on how to share cost overruns when they arise; see \cite{eissa}. These are daily operational challenges faced by organizations managing complex risks under uncertainty.

While these risk sharing problems may appear distinct from the challenges faced by a bank's Chief Risk Officer allocating economic capital across trading desks, or an insurance executive determining how much capital each business line should hold to satisfy regulators, yet, beneath the surface, the problems of risk sharing and capital allocation share a similar mathematical structure. Both involve a group of entities, participants in a risk pool, lines of business (LOBs) within a firm, or partners in a joint venture, who must agree on an allocation of uncertain future outcomes. In capital allocation, the allocation is deterministic: the firm decides today how much capital each unit must hold against future losses. In risk sharing, the allocation is random: participants agree today on rules that determine how actual losses will be distributed when they occur.
This paper reveals that these two techniques on managing risk, capital allocation and risk sharing, are not merely analogous but are mathematically dual in the following sense. We shall show that every class of capital allocation principles can be transformed into a risk-sharing rule through a novel randomization approach; conversely, many well-known risk-sharing rules can be interpreted as randomized capital allocations principles.

    \subsection{Capital Allocation}
    Capital allocation is a procedure for a corporate to allocate a capital to each of its LOBs.
    Mathematically, let ${\bf X}=(X_i)_{i=1}^n$ be a random risk vector where $X_i$, $i=1,2,\dots,n$, is the random loss of the $i$-th LOB, and let $K_i\left({\bf X}\right)$ be the {\it ex-ante} allocated risk capital to this LOB. A capital allocation principle is a mapping

        \begin{equation*}
            {\bf X} = (X_1,\dots,X_n) \mapsto {\bf K}\left({\bf X}\right) = (K_1\left({\bf X}\right),\dots,K_n\left({\bf X}\right)) \in \mathbb{R}^n,
        \end{equation*}  
    such that $\sum_{i=1}^nK_i\left({\bf X}\right) = K\left({\bf X}\right)$.  Depending on how the aggregate capital $K\left({\bf X}\right)$ is determined, capital allocation can mainly be classified into top-down and bottom-up principles.
      \begin{itemize}

        \item {\bf Top-down principle}. The total capital $K\left({\bf X}\right)$ 
        is first determined, for example, by some risk measure of the aggregate risk $S=\sum_{i=1}^{n}X_i$, which is subsequently allocated to the LOBs.
        
            \item {\bf Bottom-up principle}. The capitals $K_i\left({\bf X}\right)$, $i=1,2,\dots,n$, of the LOBs are first determined, which are subsequently summed up to the total capital $K\left({\bf X}\right)$ for the corporate.

            
        \end{itemize}

\noindent
        The main purposes of capital allocation are twofold. First, it protects the LOBs from default in case of detrimental scenarios. Second, it provides a useful reference for the corporate to measure the performance among different LOBs.
        
        As mentioned in \cite{merton:perold:1993}, capital allocation incentivizes diversifying the risks of the LOBs, since the marginal increase of the risk of a LOB, when determined as an individual firm, as opposed to as a part of a bigger corporate, are generally different. They also noted that the tempting approach of simply allocating the standalone amount of capital to all LOBs is infeasible, since the firm’s total required capital is very often smaller than the sum of the standalone capital requirements of the individual LOBs. Moreover, such an approach would jeopardize the return on capital of each LOB, since the allocated capital would exceed the amount justified by its actual contribution to the firm’s overall risk. This inefficiency can also be interpreted through the frictional cost of capital, defined as the discrepancy between the target profit, which is the profit required to earn the cost of capital on the allocated amount, and the expected profit of the LOB; see, e.g., \cite{bauer:2016} and the references therein. Over-allocation of capital increases this frictional cost, as business lines must generate higher profits merely to meet the required return on excessive capital holdings. Hence, capital allocation can also be viewed as a cost allocation procedure, and a naive standalone allocation to all LOBs is clearly not efficient.

    \subsection{Risk Sharing}
    Risk-sharing is a  practice of redistributing risks among a group of participants, where each of them brings their own risk into a sharing pool and agrees to a specific sharing rule \textit{ex-ante}.  The aggregate loss in the pool is subsequently distributed  among the participants after a detrimental event according to the collectively agreed sharing rule. Therefore, unlike capital allocation, risk-sharing is random in nature, which is determined \textit{ex-ante}, but realized \textit{ex-post}. Mathematically, let ${\bf X}=(X_i)_{i=1}^n$ be a random risk vector where $X_i$, $i=1,2,\dots,n$, is the risk suffered by the $i$-th participant. A risk-sharing rule is a mapping 
        \begin{equation*}
            {\bf X} = (X_1,\dots,X_n) \mapsto {\bf H}({\bf X}) = (H_1({\bf X}),\dots H_n({\bf X})),
        \end{equation*}
    such that
     \begin{equation}
     \label{eq:sum:preserve}
            \sum_{i=1}^n H_i({\bf X}) = S =\sum_{i=1}^{n}X_i, \ \text{almost surely}.
        \end{equation}
   In other words, the $i$-th participant of the pool has to contribute $H_i({\bf x})$ \textit{ex-post} in exchange for the realized loss $X_i=x_i$ that she suffered, where ${\bf x} = (x_1,\dots,x_n)$, such that the aggregate loss is exactly covered by the sharing pool. We shall occasionally call such property \eqref{eq:sum:preserve} as the sum-preserving property in the sequel.
   
   Ideally, a practical risk-sharing rule should provide sufficient incentives to participants so that each of them is entertained and agrees to join the pool at the first place. One fundamental approach is to construct rules that are Pareto-optimal; see, e.g. the seminal work by \cite{borch1992equilibrium}. Others have taken a more axiomatic and/or economic approach and proposed a list of desirable properties that a practical risk-sharing rule should satisfy; see, for example,  \cite{DENUIT2021116,hieber_lucas_2022}. 


    

 \subsection{Contributions and Organizations}
    
    Clearly, both capital allocation and risk sharing utilize diversification. Despite being different in the nature of the allocations, which are deterministic capitals for capital allocation, and random risks for risk-sharing, the similarities between these two strategies have been long identified, at least implicitly. Notably, \cite{DENUIT2021116} presented an example that subtly connects the Euler principle in capital allocation with the conditional mean risk-sharing (CMRS) and the quantile risk-sharing rules. In some occasions, the two terminologies may even be used interchangeably; see, e.g., \cite{filipovic2008optimal}. However, to our best knowledge, there is not much systematic study of the connections between the two strategies in the literature, except a recent paper by \cite{BOONEN2025133}, which linked them by the so-called ${\bf K}$-fair property. In this paper, we bridge the two risk management strategies by proposing a novel randomization approach, which constructs risk-sharing rules from  capital allocation principles.  
    
    The idea of the randomization approach can be outlined as follows. First, we confine ourselves to a rich enough family of capital allocation principles. Next, for each possible realization of the risks, we identify a member allocation principle from the family whose aggregate capital agrees with the realized aggregate loss in that scenario. For example, the members of the family can be given by the top-down Euler principle of allocations based on the same risk measure, but the parametric level of the risk measure differs among member allocation principles; then, find an appropriate parametric level of the risk measure on the aggregate risk which agrees with the realized aggregate loss. This provides us a sampling rule from the family of capital allocation principles and eventually leads to a risk-sharing rule when all possible realizations are assembled. Hence, this method is a scenario-wise approach where losses are allocated according to a specific capital allocation principle within the family for each possible realization. 
    
    
    The characteristics of this approach are two-folds. First, it deviates from conventional approaches of risk-sharing, which are mostly based on certain economic principles and Pareto optimality, and provides an alternative systematic and unified framework of inducing risk-sharing rules.  This also provides new perspectives when interpreting some existing risk-sharing rules. Second, by definition, the induced sharing rule inherits the motivations and properties of its parent family of capital allocation principles in a scenario-wise manner. Specifically, if a particular property is satisfied by every member in the family, it will be directly inherited by the induced risk-sharing rule such that it holds for every possible realization. For example, if we consider a family of Pareto-optimal capital allocation principles, the induced risk-sharing rule will naturally be Pareto-optimal for {\it each} possible realization. This is different from the usual notion of Pareto optimality in the risk-sharing literature, which is a distributional property, whereas the Pareto optimality induced herein is a pointwise property and is satisfied for each possible realization. 

    This paper is organized as follows. In Section \ref{sec:math}, we first provide a provoking example that motivates the idea of inducing risk-sharing rules from capital allocation principles. The generic mathematical formulation of the randomization approach are then provided in Definition \ref{def:induce:risk:sharing}. Specifically, an induced risk-sharing rule is characterized by an underlying family of capital allocation principles along with a parameter sampling rule. To induce a legitimate risk-sharing rule, the randomized aggregate capital needs to match with the aggregate risk in all possible realizations so that the sum-preserving property \eqref{eq:sum:preserve} is satisfied.  This can be achieved if the aggregate capital function is surjective onto the support of the aggregate risk $S$, and admits a measurable right-inverse, a property that we shall refer to as measurable right-invertibility. In Section \ref{sec:risk:aggregation}, we discuss this requirement in detail, with a focus on  this property for top-down principles.
    

In Section \ref{sec:top:down}, we study the induction of risk-sharing rules from top-down allocation principles, where the parameter sampling rule is determined entirely by the chosen aggregate capital method. We place particular emphasis on the optimization and Euler principles, deriving closed-form expressions for the induced sharing rules in certain cases, such as for specific allocation principles or when the underlying risks follow an elliptical distribution. We further highlight the scenario-wise property of the induced risk-sharing rules inherited from the parent capital allocation principles. Several existing sharing rules can be recovered within this top-down framework. For instance, the quantile risk-sharing rule emerges by randomizing the optimization principle with an absolute deviation penalty, providing a new perspective of that rule.

A major focus is on cases where the aggregate capital function is defined by a family of distortion risk measures. For this setting, we develop in Proposition \ref{pp:distortion} sufficient conditions on the underlying family of distortion functions that ensure the measurable right-invertibility of the aggregate capital function over the support  of $S$. We also connect the comonotonicity of the Euler-induced sharing rule with that of the CMRS rule; see Proposition \ref{pp:distortion:euler:comono}. 

In Section \ref{sec:bottom-up}, we turn to bottom-up principles, where the parameter sampling rule is determined by the underlying capital allocation method, as the aggregate capital emerges endogenously. In particular, we study the weighted risk allocation principle introduced in \cite{FURMAN2008263}, and the holistic principle introduced in \cite{chong2021holistic}.  In Propositions \ref{pp:weighted:surjective} and \ref{pp:holistic}, we establish the measurable right-invertibility of the endogenous aggregate capital functions on the support of $S$ for the two principles, respectively.
We further derive the corresponding sharing rule for an elliptically distributed risk vector, discussing how this rule differs from that obtained under the top-down optimization principle. The paper is finally concluded in Section \ref{sec:conclusion}.

    \subsection{Related Literature} 
    Capital allocation is the backbone of the randomization approach introduced in this paper. Numerous allocation principles have been developed in the literature based on different economic motivations.  For top-down principles, \cite{tasche2004allocating} introduced the gradient approach by the Euler principle where allocations are based on the marginal effect of a LOB towards the aggregate risk. \cite{Pesenti:dynamic:risk:2025} generalized the Euler principle with dynamic risk contributions. \cite{bauer:2016} considered a reverse approach which aims at finding the suitable risk measure that would lead to a target marginal cost of risk derived based on the economic environment. The optimization approach in \cite{LAEVEN2004299,dhaene}
    seeks for an allocation principle such that the discrepancies between the individual risks and the allocated capitals are minimized. For axiomatic approaches considered in \cite{denault2001coherent,kalkbrener:2005}, desirable properties of a capital allocation principle are first proposed, and principles of allocations meeting the properties are subsequently deduced. For game-theoretic approaches (see, e.g., \cite{denault2001coherent,TSANAKAS2003239,BOONEN201795}), the objective is formulated as a cost allocation in cooperative games whose solution is given by the Aumann-Shapley value. Another closely related concept is risk decomposition (see, e.g.~\cite{schilling:decomposing:2020}), which systematically attributes aggregate risk to components,
    thereby informing capital allocation decisions.
    
    As  examples of bottom-up principles, \cite{FURMAN2008263} introduced the weighted risk capital allocation,
    subsequently, \cite{Furman03072021} combined the bottom-up and top-down principles into a single unifying approach. \cite{chong2021holistic} proposed a holistic principle where capital aggregation and allocation are solved simultaneously in a single optimization problem;
    in \cite{CHEN2021359}, such holistic principle was utilized for managing resources for pandemic risk.
    
    For a more comprehensive survey on capital allocation, consult \cite{guo2021capital} and the references therein. Notice that different approaches may end up leading to the same allocation principle. For instance, if the risk measure in the game-theoretic approach satisfies some additional properties, the Aumann-Shapley value coincides with the allocated capital based on the Euler principle.


    The primitive formulations of risk-sharing mechanisms can be dated back to the seminal works of \cite{arrow1954existence}, \cite{borch1960safety} and \cite{borch1992equilibrium}, which respectively studied the equilibrium in competitive economy and reinsurance markets. In recent years, research on risk-sharing has seen fruitful development, driven in part by its growing applications, for instance, 
\cite{ABDIKERIMOVA2022735,feng2022peer,abdikerimova2024multiperiod,ghossoub2024pareto,anthropelos2026expansion}. Over decades of development, achieving Pareto optimality has been generally acknowledged as a major principle of a risk-sharing mechanism, which is largely connected to the comonotonicity of the allocated risk; see, e.g., \cite{landsberger1994co,LUDKOVSKI20081181,CARLIER2012207,liu:wang:inf-con:2002}.
    
    Choosing risk measures with desirable properties plays a prominent role in constructing Pareto-optimal risk-sharing. Apart from the popular Value-at-Risk (VaR), considerable attentions have also be given to coherent risk measures in \cite{Artzner:coherent:1999}, and convex risk measures in \cite{follmer2002convex}
    such as the Expected Shortfall (ES), which was shown to be uniquely characterized by desirable economic axioms in \cite{wang:axiom:ES:2021}. Entropy-based coherent and convex risk measures were also studied in \cite{roger:entropy:risk:measure}. As a generalization of VaR and ES, \cite{embrechts:quantile:2018} introduced the Range VaR, and the notion of robustness in risk-sharing. Subsequently, \cite{castagnoli:star:shape:2022} developed the theory of star-shaped risk measures, which weakens the convexity requirement and includes the class of all convex risk measures as well as VaR. In a related contribution, \cite{embrechts:robustness:2022} analyzed robustness properties of risk-measure-based optimization problems, with particular attention to VaR and convex risk measures; \cite{WANG2023322} introduced a scenario-dependent shortfall risk measure with applications in capital allocations; \cite{liu:lambda:VaR:2025} studied risk-sharing problems using Lambda Value at Risk.
    
    Our proposed method sets apart from the aforementioned risk-sharing works. Instead of considering Pareto-optimal sharing in a distributional sense, the randomization approach introduced in this paper allows us to consider a family of capital allocation principles where each member is Pareto-optimal for each realization, for instance, via the optimization principle by \cite{dhaene} and the holistic principle by \cite{chong2021holistic}.
    Furthermore, the resulting shared-risks do not need to be comonotonic to achieve this scenario-wise Pareto optimality.
    


  Apart from being motivated by the Pareto optimality, another popular risk-sharing approach  is the CMRS rule introduced in \cite{DENUIT2012265}. The CMRS rule and its applications were further studied in \cite{denuit_2019}, \cite{denuit2020conditional}, \cite{DENUIT2021116}, \cite{axiomatic:anonymized}.
  \cite{dhaene:2021} introduced the quantile risk-sharing rule, which is equivalent to the comonotonic CMRS rule; \cite{denuit2024conditional} considered the diversification effect of risks in a sharing pool by the CMRS rule.  In this paper, we shall show that the quantile risk sharing and the CMRS rules can be derived by the randomization approach from certain capital allocation principles, which further reinforces their connotations.



\subsection{Notation} 
Consider a probability space $(\Omega,\mathcal{F},\mathbb{P})$, with $\mathbb{E}$ being the expectation operator taken with respect to $\mathbb{P}$. We also denote by $\mathbb{E}^\mathbb{Q}$ the expectation operator taken with respect to a generic probability measure $\mathbb{Q}$ on $(\Omega,\mathcal{F})$. We write $\mathbb{P}\sim \mathbb{Q}$, if $\mathbb{P}$ and $\mathbb{Q}$ are equivalent probability measures. For $p\geq 1$, the space of all $L^p$-integrable random variables with respect to $\mathbb{Q}$ is denoted by $L^p(\Omega,\mathcal{F},\mathbb{Q})$. 

For any real-valued random variable $X$, we let $F_X$, $\bar{F}_X$ and $F_X^{-1}$ be the  cumulative distribution function, the survival function, and the (left inverse) quantile function of $X$, respectively; we use the VaR notation, $\text{VaR}_\cdot(X)$, interchangeably with $F_X^{-1}(\cdot)$. We also consider the right inverse $F_X^{-1+}$ of $F_X$, which is defined as 
    \begin{equation*}
        F_X^{-1+}(p) := \sup\{x \in\mathbb{R} : F_X(x)\leq p \}, \ p\in[0,1].
    \end{equation*}

 Let ${\bf X}=(X_i)_{i=1}^n$ be an $n$-dimensional random risk vector and define the aggregate risk by $S=\sum_{i=1}^n X_i$, where $S$ is finite almost surely. Denote the support of $S$ by $\mathcal{S}_{\bf X}$, which satisfies $\mathcal{S}_{\bf X}\subset [F_S^{-1+}(0),F_S^{-1}(1)]$. Here, we take the convention that for any $a,b\in\mathbb{R}\cup\{\pm\infty\}$, $a<b$, $[a,b]:=\{x\in\mathbb{R}: a\leq x\leq b\}$, so that $[a,b]\subset\mathbb{R}$. For $i,j=1,2,\dots,n$, we also write $\mu_i := \mathbb{E}[X_i]$, $\sigma_i^2:=\text{Var}[X_i]$, $\sigma_{i,j}:= \text{Cov}[X_i,X_j]$, $\mu_S:=\mathbb
 E[S]$, $\sigma_S^2:=\text{Var}[S]$ and $\sigma_{i,S} := \text{Cov}[X_i,S]$.
 

\section{Randomization of Capital Allocation Principles}
\label{sec:math}

    Capital allocation and risk sharing are both risk management strategies that allocate resources or risks based on the risk vector ${\bf X}$. Table \ref{tab:compare:CA:RS} compares them in different aspects.
    
     
    \begin{table}[!h]
        \centering
        \begin{tabular}{|c|c|c|} \hline 
            &    Capital Allocation & Risk Sharing  \\ \hline \hline
          Mapping   & ${\bf X} \mapsto {\bf K}\left({\bf X}\right)$ deterministic & ${\bf X} \mapsto {\bf H}({\bf X})$ random\\ \hline 
          Aggregation & \parbox{.33\textwidth}{\centering \vspace{0.1cm} $\sum_{i=1}^nK_i\left({\bf X}\right)=K\left({\bf X}\right)$ \\ Exogenous $K$ by top-down \\ Endogenous $K$ by bottom-up } & $\sum_{i=1}^nH_i({\bf X})=S=\sum_{i=1}^nX_i$ \\          \hline 
        \end{tabular}
        \caption{A comparison between capital allocation and risk sharing strategies}
        \label{tab:compare:CA:RS}
    \end{table}

     In order to induce risk-sharing rules from capital allocation principles, one should first understand the fundamental difference between the two strategies, i.e., allocations are deterministic in capital allocation, but are random in risk-sharing. Our proposed method grounds on the observation that, a risk-sharing rule can be induced if, upon an appropriate randomization, the randomized aggregate capital  coincides with the aggregate risk $S$ in the sharing pool.   Before getting into the details of this approach, we illustrate this idea by the following motivating example. 
        \begin{example}[Marginal VaR, \cite{dhaene:2021}]
        \label{ex:motivate}
            Assume that the marginal distribution functions for all $X_i$, $i=1,\dots,n$, are strictly increasing. 
            For $\boldsymbol{\lambda} = (\lambda_i)_{i=1}^n\in \mathbb{R}^n_+$, define the random variable $S_{\boldsymbol{\lambda}} :=  \sum_{i=1}^n \lambda_i X_i$. For $\theta\in [0,1]$ and $i=1,2,\dots,n$, let
            \begin{equation}
                K_i(\theta,{\bf X}) := \frac{\partial F_{S_{\boldsymbol{\lambda}}}^{-1}(\theta)}{\partial \lambda_i}\bigg|_{\boldsymbol{\lambda}=(1,\dots,1)}.
                \label{eq:motivating_example_allocation}
            \end{equation}

        The vector ${\bf K}(\theta,{\bf X}) = (K_i(\theta,{\bf X}))_{i=1}^n$ is the allocated capital using the Euler principle with the VaR aggregate capital $K(\theta,{\bf X}) = \sum_{i=1}^n K_i(\theta,{\bf X}) = F_S^{-1}(\theta)=\text{VaR}_{\theta}(S)$. In particular, if we evaluate the allocated capitals in \eqref{eq:motivating_example_allocation} at the random parameter $\Theta = F_S(S)$, the resulting random vector ${\bf K}(\Theta,{\bf X})$ coincides with the CMRS rule; that is, $H_i({\bf X})=K_i(\Theta,{\bf X})= \mathbb{E}[X_i|S]$, for $i=1,2,\dots,n$, with $\sum_{i=1}^n K_i(\Theta,{\bf X})=S$ so the sum-preserving property \eqref{eq:sum:preserve} holds.  
        
        \end{example}

    In Example \ref{ex:motivate}, for each $\theta\in[0,1]$, ${\bf K}(\theta,\cdot)$ defines an Euler capital allocation principle with aggregate capital $\text{VaR}_\theta(S)$.  To obtain a risk-sharing rule, the allocations are evaluated at the random parametric level $\Theta = F_S(S)$ so that \eqref{eq:sum:preserve} is fulfilled. This can be understood as a randomization procedure by sampling from a collection of allocation principles  $\{{\bf K}(\theta,\cdot) : \theta \in[0,1]\}$. Motivated by this example, we introduce the following randomization approach to induce risk-sharing rules from capital allocation principles.

\subsection{General Approach}
Let $\mathcal{X}^n$ be a given collection of $\mathbb{R}^n$-valued random vectors with power set $\mathcal{P}(\mathcal{X}^n)$. A capital allocation principle is essentially a $\mathcal{P}(\mathcal{X}^n)$-measurable function from $\mathcal{X}^n$ to $\mathbb{R}^n$. Let $I\subseteq \mathbb{R}$ be a parameter set, assuming that it is an interval. With a slight abuse of notations, denote ${\bf K}$ as a family of allocation principles parameterized in $I$, i.e., ${\bf K} : I\times \mathcal{X}^n \to \mathbb{R}^n$ is a measurable function, and, for each $\theta\in I$, ${\bf K}(\theta,\cdot)=(K_i(\theta,\cdot))_{i=1}^n$ represents a capital allocation principle. A risk-sharing rule induced by the family of allocation principles ${\bf K}$ is defined as follows. 

\begin{definition}
\label{def:induce:risk:sharing}
    Let ${\bf K} : I\times \mathcal{X}^n \to \mathbb{R}^n$ be a family of capital allocation principles. For any ${\bf X} \in \mathcal{X}^n$, let $\Theta({\bf X})$ be a random variable, taking values in $I$, such that
        \begin{equation}
        \label{eq:sum:preserve:induce}
            K(\Theta({\bf X}),{\bf X}) = S,
        \end{equation}
    where  $K(\theta,\cdot)= \sum_{i=1}^n K_i(\theta,\cdot)$, for any $\theta\in I$. The mapping ${\bf K}(\Theta(\cdot),\cdot) : \mathcal{X}^n\to \mathcal{X}^n$ is called a \textit{risk-sharing rule induced by $({\bf K},\Theta)$}, in which the random variable $\Theta(\cdot)$ is called the \textit{parameter sampling rule} for ${\bf K}$. 
\end{definition}

Definition \ref{def:induce:risk:sharing} outlines a legitimate risk-sharing rule ${\bf K}(\Theta(\cdot),\cdot)$, conditioning on that the parameter sampling random variable $\Theta(\cdot)$ exists. Indeed, its condition \eqref{eq:sum:preserve:induce} implies the sum-preserving property \eqref{eq:sum:preserve} required by a risk-sharing rule. The following steps sketch the procedures of inducing such risk-sharing rule, with Figure \ref{fig:random} further illustrating these.
    \begin{enumerate}
        \item (Parametrization). Let $I\subseteq \mathbb{R}$ be the parameter interval. We consider a family of parametrized capital allocation principles ${\bf K}:I\times \mathcal{X}^n\to\mathbb{R}^n$ with, for any $\theta\in I$ and ${\bf X}\in\mathcal{X}^n$, ${\bf K}(\theta,{\bf X})=(K_i(\theta,{\bf X}))_{i=1}^n$ and $K(\theta,{\bf X})=\sum_{i=1}^n K_i(\theta,{\bf X})$. 

        \item (Inversion). {\it Suppose} that, for any ${\bf X}\in\mathcal{X}^n$, $\theta \in I\mapsto K(\theta,{\bf X})$ is surjective on the support $\mathcal{S}_{\bf X}$ of $S$; that is, $\mathcal{S}_{\bf X}\subseteq K(I,{\bf X})$, and it satisfies a regularity condition (e.g., see Proposition \ref{pp:K:surjective} below). Then, for any ${\bf X}\in\mathcal{X}^n$, we can define a measurable right-inverse function $K^{-1+}(\cdot,{\bf X}):\mathcal{S}_{\bf X} \to I$ such that, for any $s\in \mathcal{S}_{\bf X}$, $K(K^{-1+}(s,{\bf X}),{\bf X})=s$. Throughout this paper, we shall refer such property as \textit{measurable right-invertibility}.

        \item (Randomization). Given that, for any ${\bf X}\in\mathcal{X}^n$, a measurable right-inverse of $K(\cdot,{\bf X})$ exists, we define a random variable $\Theta({\bf X}):= K^{-1+}(S,{\bf X})$, which, by definition, satisfies that $K(\Theta({\bf X}),{\bf X})=K(K^{-1+}(S,{\bf X}),{\bf X})=S$. Hence, for any ${\bf X}\in\mathcal{X}^n$,
        \begin{equation}
        \label{eq:risk:sharing:random}
          {\bf H}({\bf X})  = (H_i({\bf X}))_{i=1}^n  := {\bf K}(\Theta({\bf X}),{\bf X}) =(K_i(\Theta({\bf X}),{\bf X}))_{i=1}^n, 
        \end{equation}
   defines a risk-sharing rule, as it satisfies the sum-preserving property \eqref{eq:sum:preserve}.
        
    \end{enumerate}

    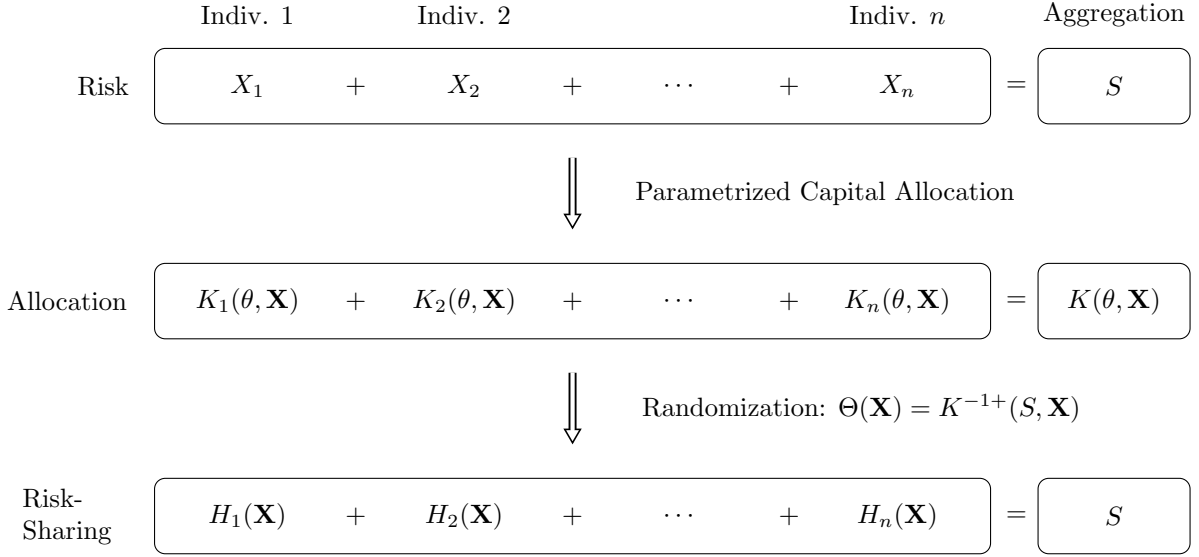
\begin{figure}[!h]
        \centering
        \begin{tikzpicture}[scale=0.95]
            \node at (0,0) {Risk};
            \node at (2,1) {Indiv. 1};
            \node at (5,1) {Indiv. 2};
            \node at (11,1) {Indiv. $n$};
            \node at (14,1) {Aggregation};
            \node (X) [rectangle, rounded corners, minimum width=11cm, minimum height=1cm,text centered, draw=black] at (6.5,0) {  };
            \node at (2,0) {$X_1$};
            \node at (3.5,0) {+};
            \node at (5,0) {$X_2$};
            \node at (6.5,0) {$+$};
            \node at (8,0) {$\cdots$};
            \node at (9.5,0) {$+$};
            \node at (11,0) {$X_n$};

            \node  (S) [startstop] at (14,0) {$S$};
              \draw[vecArrow] (6.5,-1) -- (6.5,-2) ;
              \node at (10,-1.5)  {Parametrized Capital Allocation};
       \node at (-0.5,-3) {Allocation}; 
            \node (K) [rectangle, rounded corners, minimum width=11cm, minimum height=1cm,text centered, draw=black] at (6.5,-3) {  };
           \node at (2,-3) {$K_1(\theta,{\bf X})$};
            \node at (3.5,-3) {+};
            \node at (5,-3) {$K_2(\theta,{\bf X})$};
            \node at (6.5,-3) {$+$};
            \node at (8,-3) {$\cdots$};
            \node at (9.5,-3) {$+$};
            \node at (11,-3) {$K_n(\theta,{\bf X})$}; 
            \node [startstop] at (14,-3) {$K(\theta,{\bf X})$};

              \draw[vecArrow] (6.5,-4) -- (6.5,-5) ;
              \node at (10.5,-4.5)  {Randomization: $\Theta({\bf X})= K^{-1+}(S,{\bf X})$};
    \node[align=left] at (-0.5,-6) {Risk-\\Sharing};

            \node (K) [rectangle, rounded corners, minimum width=11cm, minimum height=1cm,text centered, draw=black] at (6.5,-6) {  };
           \node at (2,-6) {$H_1({\bf X})$};
            \node at (3.5,-6) {+};
            \node at (5,-6) {$H_2({\bf X})$};
            \node at (6.5,-6) {$+$};
            \node at (8,-6) {$\cdots$};
            \node at (9.5,-6) {$+$};
            \node at (11,-6) {$H_n({\bf X})$}; 
            \node [startstop] at (14,-6) {$S$};
            \node at (12.65,0) {=};  
            \node at (12.65,-3) {=};
            \node at (12.65,-6) {=};

        \end{tikzpicture}
        \caption{Illustration of capital-allocation-induced risk sharing rules}
        \label{fig:random}
    \end{figure}

To induce a risk-sharing rule, the underlying family ${\bf K}$ must be sufficiently rich to permit the construction of the parameter sampling rule $\Theta({\bf X})$, for ${\bf X}\in\mathcal{X}^n$, satisfying \eqref{eq:sum:preserve:induce}. This standing assumption requires the aggregate capital function $K(\cdot,{\bf X})$, for ${\bf X}\in\mathcal{X}^n$, to be measurably right-invertible on $\mathcal{S}_{\bf X}$. In Section \ref{sec:risk:aggregation}, we shall present sufficient conditions, which are essentially the surjectivity and the said regularity condition of the aggregate capital function, for this requirement in Proposition \ref{pp:K:surjective}, and showcase examples which satisfy them.

As for injectivity, if the measurably right-invertible function $K(\cdot,{\bf X}):I\to \mathcal{S}_{\bf X}$ is not injective, for ${\bf X}\in\mathcal{X}^n$, its right inverse $K^{-1+}(\cdot,{\bf X}):\mathcal{S}_{\bf X}\to I$ is not unique. Consequently, the parameter sampling random variable $\Theta({\bf X})$, for ${\bf X}\in\mathcal{X}^n$, may not be uniquely determined, inducing multiple risk-sharing rules. In this case, the method of randomization can be characterized by the underlying family ${\bf K}$, along with a specific choice of the measurable right-inverse of the aggregate capital function.

\subsection{Parameter Sampling Rule}

In this randomization approach, the interpretation of the parameter $\theta \in I$ in the family of parametrized capital allocation principles depends on the context. 
For instance, in top-down principles, the aggregate capital $K(\theta,{\bf X})$, for ${\bf X}\in\mathcal{X}^n$, can be exogenously set as a risk measure of the aggregate risk, $\rho_\theta(S)$, evaluated at some parametric level $\theta$. In Example \ref{ex:motivate}, where $\rho_\theta(S)=\text{VaR}_\theta(S)$, for $\theta\in I =[0,1]$, the parameter $\theta$ is the confidence level; 
the induced risk-sharing rule in that example is then the underlying Euler allocation principle evaluated at the random confidence level $\Theta({\bf X}):= F_S(S)$, for ${\bf X}\in\mathcal{X}^n$, such that \eqref{eq:sum:preserve:induce}, and hence \eqref{eq:sum:preserve}, are satisfied.

Whether the capital allocation principles in the family are top-down or bottom-up fundamentally influences how the parameter sampling rule $\Theta(\cdot)$ is determined. In the {\it top-down} principles, just like Example \ref{ex:motivate}, $\Theta(\cdot)$ is fully prescribed by an {\it exogenous} capital aggregation procedure, whereas, in the {\it bottom-up} principles, it arises {\it endogenously} from the capital allocation principles themselves. Section~\ref{sec:risk:aggregation} shall explain this distinction in more details. Because of their differences, the induced risk-sharing rules by the top-down and bottom-up principles shall be discussed in Sections~\ref{sec:top:down} and \ref{sec:bottom-up} respectively.


\subsection{Inheritance of Properties}

When a property is satisfied by all capital allocation principles in the family, the induced risk-sharing rule, by definition, inherits such property in a scenario-wise fashion. That is, for each scenario $\omega\in\Omega$, and thus a realization of ${\bf X}(\omega)$, the realized sharing losses ${\bf H}({\bf X})(\omega)$, which are obtained from the capital allocation principle evaluated at the parameter $\Theta({\bf X})(\omega)$, satisfy the property of the sampled capital allocation principle; that same property still holds when another scenario, and hence a capital allocation principle, is sampled. For example, as it shall be discussed in Section~\ref{sec:optim}, since the capital allocation principles of the optimization approach are Pareto optimal among individual objectives, the risk-sharing rule induced from these principles is Pareto optimal in a scenario-wise sense, instead of a distributional manner. When the induced risk-sharing rule coincides with existing ones in the literature, those properties in a scenario-wise manner offer new perspectives to these risk-sharing rules. For instance, Example \ref{ex:absolute:dev:quantile} shall show that the quantile risk-sharing rule emerges by randomizing the capital allocation principles of the optimization approach with the individual objectives given by the absolute deviation penalty.
 
In particular, such inheritance of properties allows the discussion on the comonotonicity of the induced risk-sharing rules. Given an ${\bf X}\in \mathcal{X}^n$, if the mapping $\theta\mapsto K_i(\theta,{\bf X})$ is either non-increasing or non-decreasing, for all $i=1,\dots,n$, then the induced risk-sharing rule shall be comonotonic. This comonotonic property of the risk-sharing rules induced from the top-down capital allocation principles with distortion risk measures to determine the aggregate capital, and induced from the bottom-up weighted risk allocation principles, shall be connected with the CMRS rule; see Sections \ref{sec:aggregate:distortion} and \ref{sec:weighted:risk}, respectively.

\section{Aggregate Capital Function}\label{sec:risk:aggregation}

In the remaining of this paper, we fix an arbitrary ${\bf X} \in \mathcal{X}^n$ and shall often omit an object's dependence on it whenever there is no ambiguity; we shall write, such as, $K(\theta)$ for $K(\theta,{\bf X})$, $\Theta$ for $\Theta({\bf X})$, and similarly for other quantities.



As discussed in Section \ref{sec:math}, the measurable right-invertibility of the aggregate capital function, $K : I\to \mathbb{R}$, in the parameter $\theta\in I$ on the support $\mathcal{S}_{\bf X}$ of $S$, is crucial to establish the parameter sampling rule associating with the parametric family of capital allocation principles, to induce a legitimate risk-sharing rule. Below, we establish necessary and sufficient conditions for the existence of the parameter sampling rule, which shall be utilized in the next two sections.

    \begin{proposition}
    \label{pp:K:surjective}
    Let ${\bf K}:I\to\mathbb{R}^n$ be the family of the parametrized capital allocation principles in $\theta\in I$, with $K(\theta)=\sum_{i=1}^n K_i(\theta)$. The following statements hold.
    \begin{enumerate}
    \item[(i)] If there exists a random variable $\Theta$ taking values in $I$ such that $K(\Theta)=S$ a.s., then $S \in K(I)$ a.s..
    \item[(ii)] If $K(\cdot)$ is measurable, as well as non-decreasing or continuous, and if $\mathcal{S}_{{\bf X}}\subseteq K(I)$, then $K(\cdot)$ is measurably right-invertible, and there exists a random variable $\Theta$ taking values in $I$ such that $K(\Theta)=S$. 
    \end{enumerate}


    \end{proposition}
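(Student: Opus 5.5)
Part (i) is immediate. If $K(\Theta)=S$ almost surely with $\Theta(\omega)\in I$, then for almost every $\omega$ we have $S(\omega)=K(\Theta(\omega))\in K(I)$. Hence $\mathbb{P}(S\in K(I))=1$. If $K(I)$ is not Borel, we interpret this as saying that the event $\{S\in K(I)\}$ contains the almost sure event $\{K(\Theta)=S\}$.

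For part (ii), the plan is to construct an explicit Borel measurable right-inverse $G:\mathcal{S}_{\bf X}\to I$ with $K(G(s))=s$, and then set $\Theta:=G(S)$. Then $\Theta$ is a random variable as the composition of measurable maps, it takes values in $I$, and $K(\Theta)=S$ on $\{S\in\mathcal{S}_{\bf X}\}$, an event of full probability; on the null complement we define $\Theta$ as an arbitrary fixed point of $I$. The construction of $G$ splits into the two cases.

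\emph{Non-decreasing case.} Define $G(s):=\inf\{\theta\in I: K(\theta)\ge s\}$ for $s\in\mathcal{S}_{\bf X}$. We must show that the infimum is attained, i.e.\ $K(G(s))=s$. Surjectivity gives some $\theta$ with $K(\theta)=s$, so the set $A_s=\{\theta\in I:K(\theta)\ge s\}$ is nonempty, and it is an up-interval by monotonicity. It may fail to be closed when $K$ jumps. For this reason I would instead use
\[
G(s):=\inf\{\theta\in I : K(\theta)= s\},
\]
taking the infimum over the level set $L_s=K^{-1}(\{s\})$. Since $K$ is monotone, $L_s$ is a nonempty interval. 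If its left endpoint is not included, choose any element of $L_s$ directly: for instance, the midpoint of $L_s$ when $L_s$ is nondegenerate, and its single point otherwise. For measurability, note that $s\mapsto \inf L_s$ is non-decreasing because $K$ is non-decreasing, and likewise $s\mapsto\sup L_s$, so the midpoint map is monotone and hence Borel. The rule is therefore: if $\inf L_s\in L_s$ take it, otherwise take the midpoint, or simply always take the midpoint when $L_s$ is nondegenerate, which lies in the interior and hence in $L_s$. The result is a monotone and therefore Borel map.

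\emph{Continuous case.} Since $I$ is an interval, hence $\sigma$-compact, each $L_s=K^{-1}(\{s\})$ is relatively closed in $I$. Write $I$ as an increasing union of compact intervals $I_m$ and set
\[
G(s):=\min (L_s\cap I_{m(s)}),
\]
where $m(s)$ is the least $m$ with $L_s\cap I_m\neq\emptyset$. For measurability, $\{s: L_s\cap I_m\ne\emptyset\}=K(I_m)$ is a compact interval by continuity. On that set, $\{s: \min(L_s\cap I_m)\le t\}=K(I_m\cap(-\infty,t])$ is again compact. It follows that $G$ is Borel, being patched from countably many Borel pieces.

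The main obstacle I expect is this measurability bookkeeping, namely ensuring that the selected point actually lies in the level set when $K$ is monotone but discontinuous, and handling endpoints of a non-compact $I$. The attainment issue is resolved by selecting from the level set $L_s$ rather than from the superlevel set $A_s$, and the endpoint issue by the compact exhaustion $I=\bigcup_m I_m$.
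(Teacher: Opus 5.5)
Your route differs from the paper's and is essentially sound. The paper argues abstractly. It asserts that each level set $\psi_K(s)=\{\theta\in I:K(\theta)=s\}$ is nonempty and closed, and that $\{s:\psi_K(s)\cap U\neq\varnothing\}$ is Borel for open $U$. It then invokes the Kuratowski--Ryll-Nardzewski selection theorem. You instead build the selection by hand, and this buys you something in both cases:
\begin{itemize}
\item In the continuous case, your compact exhaustion and the identity $\{s:\min(L_s\cap I_m)\le t\}=K(I_m\cap(-\infty,t])$ make explicit the Borel check that the paper attributes to the measurability of $K$. The images $K(U)$ are Borel here because continuity maps intervals to intervals, not because $K$ is measurable.
\item In the non-decreasing case, your argument is more robust than the paper's. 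For a discontinuous monotone $K$ the level sets need not be closed: e.g.\ $K=\mathbbm{1}_{[1,\infty)}$ on $I=[0,2]$ gives $L_0=[0,1)$. So the closedness hypothesis of the selection theorem fails, while your monotonicity argument never uses it.
\end{itemize}

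One step needs repair. The midpoint of $L_s$ is undefined when $L_s$ is unbounded, and your rule reaches that case whenever $\inf L_s\notin L_s$. For example, take $I=\mathbb{R}$, $K(\theta)=\max\{\theta,0\}$ and $s=0$, where $L_0=(-\infty,0]$.

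The clean fix is an observation you nearly made. If $s<s'$, $\theta\in L_s$ and $\theta'\in L_{s'}$, then $\theta<\theta'$, since otherwise $K(\theta)\ge K(\theta')$. Hence \emph{every} selection $G(s)\in L_s$ is automatically non-decreasing on $\mathcal{S}_{\bf X}$, and therefore Borel. You may then pick, say, the midpoint after an increasing homeomorphism of $I$ onto a bounded interval. You also no longer need separate arguments for the monotonicity of $\inf L_s$, of $\sup L_s$, or of the mixed rule that takes the infimum when it is attained.
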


\begin{proof}
  See Appendix \ref{sec:app:pf:PP31}. 
\end{proof}

The aggregate capital function $K(\cdot)$, and hence the construction of the parameter sampling rule $\Theta$, depends on the specific capital allocation principles in the family, particularly on whether they follow a top-down or bottom-up principle. As such, the conditions required to ensure the measurable right-invertibility of the aggregate capital function differ between these two principles.

In top-down principles, for each parameter $\theta\in I$, the aggregate capital $K\left(\theta\right)$ is determined {\it exogenously}.
Then, for this given aggregate capital function $K(\cdot)$ in $\theta\in I$, the family of the top-down capital allocation principles ${\bf K}(\cdot)$ satisfies that, for each $\theta\in I$, $\sum_{i=1}^n K_i(\theta) = K(\theta)$. Therefore, the parameter sampling rule $\Theta$ depends solely on the exogenously given aggregate capital function, but not on the capital allocation principles themselves. 
The exogenous aggregate capital function is often obtained by applying a risk measure, at each parametric level $\theta$, to the risk vector ${\bf X}$; that is, for each $\theta\in I$, $K(\theta)=\rho_\theta({\bf X})$, where $\{\rho_\theta\}_{\theta\in I}$ is a parametric family of the risk measures on $\mathcal{X}^n$. In practice, these risk measures usually evaluate the risk vector ${\bf X}$ through its aggregate risk $S$. Hence, when the capital allocation principles in a family to induce a risk-sharing rule are top-down, the following problem shall be addressed.
\begin{problem}
\label{p1}
 Find a parametric family of risk measures $\{\rho_\theta\}_{\theta\in I}$ such that the aggregate capital function $K:I\to \mathbb{R}$, defined by $K(\theta)=\rho_\theta(S)$, is measurably right-invertible on $\mathcal{S}_{\bf X}$, i.e., 
 there exists a measurable right-inverse function, $K^{-1+}:\mathcal{S}_{{\bf X}} \to I$.       
\end{problem}

     
       
       For bottom-up principles, the aggregate capital function is not given by an exogenous function. Instead, the aggregate capital is determined {\it endogenously}, summing up from the allocated capitals.  In this case, $K(\cdot)$ generally depends on the entire random vector ${\bf X}$ and is given by $K(\theta) = \rho_\theta({\bf X})$, where the family $(\rho_\theta)_{\theta\in I}$ is endogenously determined by the allocation mechanism. For example, if $K(\theta)$ is obtained through an optimization problem, then the family $(\rho_\theta)_{\theta\in I}$ corresponds to the solutions to this family of problems. Consequently, the parameter sampling rule $\Theta$ also depends on the underlying allocation mechanism itself.
       
       Figure~\ref{fig:td:bu} illustrates the difference between the top-down and bottom-up principles in determining $\Theta$ and the resulting risk-sharing rules. Due to their differences, 
       we shall delegate the discussion of randomizing a bottom-up principle in Section \ref{sec:bottom-up}. In the remaining of this section as well as the next Section \ref{sec:top:down}, we focus on Problem \ref{p1} for the top-down principles.

\begin{figure}
    \centering
    \begin{tikzpicture}[
    >=Stealth,
    node distance=1.5cm and 3cm,
    block/.style={
        rectangle, 
        draw, 
        rounded corners, 
        align=center, 
        minimum width=4cm, 
        minimum height=1cm
    }
]

\node[block, fill=blue!10,minimum width=5.5cm] (exanteTD) {
    \textbf{\textit{Ex-ante} stage} \\
  Specify:  \\
  \makebox[6cm][l]{1. Aggregate capital $K(\theta)$}\\
    \makebox[6cm][l]{2. Sampling rule $\Theta$ based on $K$}\\
    \makebox[6cm][l]{3. Allocation rules $\mathbf{K}(\theta)$ s.t.}\\
    \makebox[6cm][l]{\hspace*{1.5em}$\sum_{i=1}^n K_i(\theta)=K(\theta)$}
};

\node[block, fill=green!10, right=3.0cm of exanteTD, minimum width = 7.1cm] (expostTD) {
    \textbf{\textit{Ex-post} stage} \\
    Find $\Theta(\omega)$ s.t. $K(\Theta(\omega)) = S(\omega)$ \\
    $\Downarrow$ \\
     Determine loss sharing: ${\bf K}(\Theta(\omega))$
};

\draw[->] (exanteTD) -- node[above]{${\bf X}(\omega)$ realized} (expostTD);

\node[above=0.3cm of exanteTD] (BP) {\large \textbf{Before Peril}};
\node[right=6.5cm of BP] {\large \textbf{After Peril}};
\node[below right=0.2cm and 0cm of exanteTD.south east] {(a) Top-down principle};

\node[block, fill=blue!10, minimum width=5.5cm,  below=2cm of exanteTD] (exanteBU) {
    \textbf{\textit{Ex-ante} stage} \\
   \makebox[6cm][l]{1. Specify allocation rules $\mathbf{K}(\theta)$}\\
  \makebox[6cm][l]{2. Determine aggregate capital} \\
   \makebox[6cm][c]{s.t.~$K(\theta) = \sum_{i=1}^n K_i(\theta)$}\\
    \makebox[6cm][l]{3.  Specify  $\Theta$ based on $K$}
     
};

\node[block, fill=green!10, right=3cm of exanteBU, minimum width = 7cm] (expostBU) {
    \textbf{\textit{Ex-post} stage} \\
    Find $\Theta(\omega)$ s.t. $K(\Theta(\omega))=S(\omega) $ \\
    $\Downarrow$ \\
    Determine loss sharing: ${\bf K}(\Theta(\omega))$
};

\draw[->] (exanteBU) -- node[above]{${\bf X}(\omega)$ realized} (expostBU);

\node[below right=0.2cm and 0cm of exanteBU.south east] {(b) Bottom-up principle};

\end{tikzpicture}
 
    \caption{Implementation of induced risk-sharing rules based on top-down and bottom-up principles}
    \label{fig:td:bu}
\end{figure}
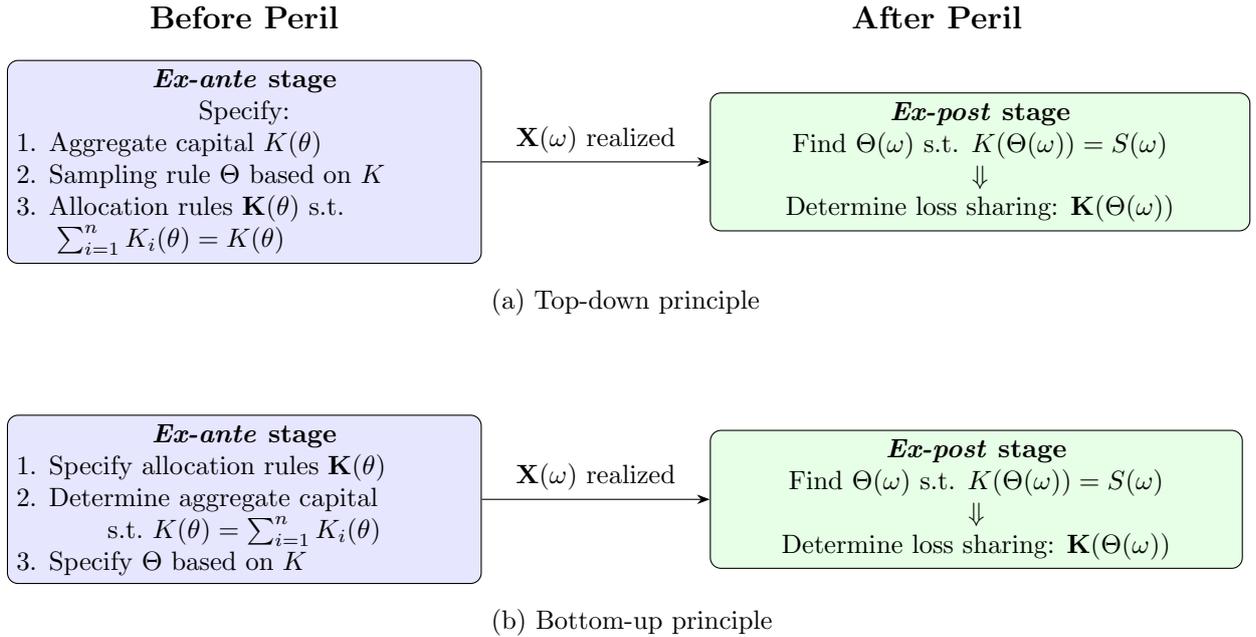

       Apparently, not all families of the risk measures on the aggregate risk $S$ are surjective in $\theta\in I$. For instance, if the risk measures in a family $(\rho_\theta)_{\theta\in I}$ have a non-negative risk loading, i.e., for any $\theta\in I$, $\rho_\theta(S) \geq \mathbb{E}[S]$, and thus $K(\theta)\not\in [F_S^{-1+}(0),\mathbb{E}[S])$. In this case, $\mathcal{S}_{\bf X} \not\subseteq K(I)$, unless $\mathcal{S}_{\bf X}$ is a singleton (from degenerate risks).
       
       In contrast, without prescribing any properties on the underlying risk measures, one trivial example for the family that satisfies Problem \ref{p1} can be constructed as follows.  
        \begin{example}
        \label{ex:trivial}
            Let $\rho_\theta(S) = f(\theta) \rho(S) $, for $\theta\in I$, where $\rho$ is a  risk measure with $\rho(S) \in\mathbb{R}\backslash \{0\}$, and $f : I \to \mathbb{R}$ is bijective. Then, by taking $\Theta:= f^{-1}(S/\rho(S))$, one immediately has $\rho_\Theta(S)=S$. 
        \end{example}
    \noindent
    The construction in Example \ref{ex:trivial} is simple. However, the risk measures do not satisfy some basic desirable properties, for example, translational invariant. In addition, the parameter $\theta\in I$ is separated from the distribution of $S$ in the risk measures, which is restrictive and lacks interpretability. In Section \ref{sec:aggregate:distortion}, we shall focus on distortion risk measures that naturally satisfy the usual desirable properties, such as translational invariant and monotonicity. In addition, these risk measures exhibit the measurable right-invertibility in $\theta\in I$, under tractable conditions on the underlying distortion functions, which shall be shown using Proposition \ref{pp:K:surjective}. Below, we provide an example with a class of quantile-based risk measures that also fulfill the measurably right-invertible requirement.

    \begin{example}
        The VaR, $\theta\in[0,1]\mapsto\rho_\theta(S)=\text{VaR}_\theta(S)$, is measurable, non-decreasing and surjective on $\mathcal{S}_{{\bf X}}$, i.e., it satisfies the sufficient conditions in Proposition \ref{pp:K:surjective}. More generally, let $w : \mathbb{R} \to (0,\infty)$ be a positive function such that $\int_{\mathcal{S}_{{\bf X}}} w(s) dF_S(s) < \infty$. Then, the weighted quantile, defined by 
            \begin{equation*}
             \rho^w_\theta(S)   := \inf\left\{ s : F^w(s):= \frac{\int_{-\infty}^s w(t)dF_S(t) }{\int_{\mathcal{S}_{{\bf X}}} w(t)dF_S(t) } \geq \theta   \right\}, \ \theta\in[0,1],
            \end{equation*}
        is measurable, non-decreasing, and surjective onto $\mathcal{S}_{{\bf X}}$. Indeed, $\rho^w_\theta(S) = \text{VaR}_\theta(S^w)$, where $S^w$ is a random variable with distribution function $F^w(\cdot)$ and the same support $\mathcal{S}_{\bf X}$, which implies the properties.
        
    \end{example}

\section{Randomization of Top-Down Principles}
\label{sec:top:down}
 In this section, we shall consider different top-down principles of capital allocation and induce their associated risk-sharing rules by the randomization procedures. The randomized optimization principle and the randomized Euler principle are introduced in Sections \ref{sec:optim} and \ref{sec:euler}, respectively. In Section \ref{sec:distortion}, we discuss the case when the aggregate capital function is defined by a family of parametrized distortion risk measures.

\subsection{Optimization Principle}
\label{sec:optim}
The optimization principle of capital allocations was introduced in \cite{dhaene}. The method seeks an allocation by minimizing the sum of expected penalties (or costs) on the deviations between individual risks and the allocated capitals with user-defined parameters and penalty functions. Specifically, given an aggregate capital $K$, the allocated capitals are determined as the minimizer of the following optimization problem:
    \begin{equation}
    \label{eq:min}
        \inf_{\substack{K_1,\dots,K_n\\ \sum_{i=1}^nK_i=K} } \sum_{i=1}^n\beta_i \mathbb{E}^{\mathbb{Q}_i}\left[ d\left( \frac{X_i-K_i}{\beta_i} \right) \right],
    \end{equation}
where $d\left(\cdot\right)$ is a non-negative convex penalty function, $(\beta_i)_{i=1}^n$ are non-negative constants with $\sum_{i=1}^n\beta_i=1$ that represent the relative exposures of each LOB, and $(\mathbb{Q}_i)_{i=1}^n$ are probability measures which reflect the preferences of the LOBs. As the optimization principle is top-down, capital aggregation and allocation are completely decoupled, with each step employing a method that neither relies on nor influences the other one.

If the measures  $(\mathbb{Q}_i)_{i=1}^n$ are equivalent to $\mathbb{P}$, they can appear in the form of Radon-Nikodym derivatives as follows: for $i=1,2,\dots,n$,
    \begin{equation*}
        \mathbb{E}^{\mathbb{Q}_i}\left[ d\left( \frac{X_i-K_i}{\beta_i} \right) \right] = \mathbb{E}\left[ \zeta_i d\left( \frac{X_i-K_i}{\beta_i} \right) \right],
    \end{equation*}
where $\zeta_i$ is a non-negative random variable satisfying $\mathbb{E}[\zeta_i]=1$; its common choice includes, $\zeta_i=h_i(X_i)$ and $\zeta_i=h_i(S)$, where $h_i(\cdot)$ is a non-negative function. In \cite{dhaene}, Problem \eqref{eq:min} was solved for the cases, when $d(z)=z^2$ and $d(z)=|z|$, for $z\in\mathbb{R}$. Herein, we shall deduce  risk-sharing rules based on the capital allocation principles in \eqref{eq:min} for these two choices of the penalty function.

To do so, we consider the following parametrized version of \eqref{eq:min}: for $\theta\in I$, solve 
  \begin{equation}
    \label{eq:min:para}
      {\bf K}(\theta) :=  \mathop{\arg\min}_{\substack{K_1,\dots,K_n\\ \sum_{i=1}^nK_i=K(\theta)} } \sum_{i=1}^n\beta_i(\theta) \mathbb{E}^{\mathbb{Q}_i^\theta}\left[ d\left( \frac{X_i-K_i}{\beta_i(\theta)} \right) \right],
    \end{equation}
where $K: I\to \mathbb{R}$ is exogenously given and is measurably right-invertible. In this case, there exists a random variable $\Theta$ taking values in $I$ such that $K(\Theta)=S$, and a risk-sharing rule is induced by $({\bf K},\Theta)$.
The rule is obtained by first solving the optimization problem \eqref{eq:min:para} for each $\theta\in I$, followed by evaluating the capitals ${\bf K}(\cdot)$ at the random parametric level $\Theta$.

The induced rule has the following interpretation. When rules are determined \textit{ex-ante}, for each possible realization of the aggregate risk, a deterministic allocation rule ${\bf K}(\theta)=(K_i(\theta))_{i=1}^n$ is constructed so that the sum of expected deviations from the individual risks under that realization is minimized. When $\omega\in\Omega$, and thus $S\left(\omega\right)=s$, are realized \textit{ex-post}, the shared losses are computed using the parametric level $\Theta\left(\omega\right)$ such that the exogenous aggregate allocation $K\left(\Theta\left(\omega\right)\right)$ agrees with $S\left(\omega\right)=s$ in this realization.

In the parametrization \eqref{eq:min:para}, the capital aggregation and allocation are connected by the parameter $\theta\in I$. Consequently, the induced risk-sharing rule generally depends on both the underlying capital aggregation and allocation methods. But, in particular, if the coefficients $\beta_i$ and the preference measures $\mathbb{Q}_i$ are chosen to be independent of $\theta$, the family ${\bf K}(\theta)$ depends on $\theta$ only through $K(\theta)$, i.e., we can represent the underlying family of allocations as ${\bf K}(\theta)=g(K(\theta))$, for some function $g(\cdot)$ which does not depend on the choice of the aggregate capital function. As a result, the induced risk-sharing rule, ${\bf H}({\bf X})={\bf K}(\Theta)=g(K(\Theta))=g(S)$, is independent of the capital aggregation method, as long as the aggregate capital function is measurably right-invertible; for instance, see Example \ref{ex:square:separate} below.


The optimization principle yields a Pareto-optimal capital allocation.
\begin{definition}
\label{def:PO:allocation}
Let $\mathcal{A}\subseteq\mathbb{R}^n$. A vector ${\bf K}^*=(K_i^*)_{i=1}^n\in\mathcal{A}$ is said to be Pareto optimal in $\mathcal{A}$ with respect to the functions $(\varphi_i(\cdot))_{i=1}^n$, if there does not exist ${\bf K}=(K_i)_{i=1}^n \in\mathcal{A}$ such that $\varphi_i({\bf K}) \le \varphi_i({\bf K}^*)$, for all $i=1,\dots,n$, with at least one inequality being strict.
\end{definition}
It is rather straightforward to prove that, if there exists ${\bf K}^*\in\mathcal{A}$ that minimizes $\inf_{{\bf K}\in\mathcal{A}} \sum_{i=1}^n \varphi_i({\bf K})$, then ${\bf K}^*$ is Pareto optimal. Hence, for each $\theta\in I$, the capital allocation ${\bf K}(\theta)$, which solves the optimization principle \eqref{eq:min:para}, is Pareto optimal in the feasible set $\mathcal{A}=\mathcal{K}_{K(\theta)}=\{{\bf K}\in\mathbb{R}^n : \sum_{i=1}^n K_i = K(\theta)\}$ with respect to the cost functions:
\begin{equation}
\label{eq:varphi:i:theta}
\varphi_i^\theta({\bf K}) = \beta_i(\theta) \mathbb{E}^{\mathbb{Q}_i^\theta}\left[d\left(\frac{X_i-K_i}{\beta_i(\theta)} \right)\right],\;i=1,\dots,n.
\end{equation}

A natural question is then whether the induced risk-sharing rule ${\bf K}(\Theta)$ is Pareto optimal, where ${\bf K}(\theta)$, for $\theta\in I$, solves the optimization principle \eqref{eq:min:para}, and $\Theta=K^{-1+}(S)$ is the parameter sampling rule such that $K(\Theta)=S$. Fix an $\omega\in\Omega$; a parameter $\Theta(\omega)=K^{-1+}(S)(\omega)$ is then sampled, and hence the risk-sharing rule is realized as ${\bf K}(\Theta(\omega))$, which is a capital allocation in the family, from the optimization principle \eqref{eq:min:para}, entailing how the aggregate loss $K(\Theta)(\omega)=S(\omega)$ should be shared ex-post. Therefore, with the fixed $\omega\in\Omega$, ${\bf K}(\Theta(\omega))$ is Pareto optimal in the feasible set $\mathcal{K}_{K(\Theta)(\omega)}=\{{\bf K}\in\mathbb{R}^n : \sum_{i=1}^n K_i = K(\Theta)(\omega)=S(\omega)\}$ with respect to the cost functions:
\begin{equation}
\label{eq:varphi:i:theta_omega}
\varphi_i^{\Theta(\omega)}({\bf K}) = \beta_i(\Theta(\omega)) \mathbb{E}^{\mathbb{Q}_i^{\Theta(\omega)}}\left[d\left(\frac{X_i-K_i}{\beta_i(\Theta(\omega))} \right)\right],\;i=1,\dots,n.
\end{equation}
This observation motivates the introduction of Pareto optimality in a scenario-wise manner, which is defined as follows.

\begin{definition}
\label{def:pareto:optimal:pointwise}
Let $\mathcal{Y}$ be the set of risk-sharing rules which are random vectors ${\bf Y}=(Y_i)_{i=1}^n:\Omega\rightarrow\mathbb{R}^n$ such that $\sum_{i=1}^{n}Y_i=S$. Let $\varphi_i:\Omega\times\mathbb{R}^n\rightarrow\mathbb{R}$, for $i=1,\dots,n$. A risk-sharing rule ${\bf Y}^*=(Y^*_i)_{i=1}^n\in\mathcal{Y}$ is said to be scenario-wise Pareto optimal in $\mathcal{Y}$ with respect to the functions $(\varphi_i(\cdot,\cdot))_{i=1}^n$, if, for a.a. $\omega\in\Omega$, ${\bf Y}^*(\omega)$ is Pareto optimal in $\mathcal{Y}(\omega)=\left\{{\bf Y}(\omega)\in\mathbb{R}^n:{\bf Y} \in \mathcal{Y}\right\}$ with respect to $(\varphi_i(\omega,\cdot))_{i=1}^n$, in the sense of Definition \ref{def:PO:allocation}.
\end{definition}
\noindent
Based on this definition, the induced risk-sharing rule ${\bf K}(\Theta)$ is indeed scenario-wise Pareto optimal among the risk-sharing rules with respect to the cost functions given in \eqref{eq:varphi:i:theta_omega}.

The concept of scenario-wise Pareto optimality is different from the classical notion of Pareto optimality, which is recalled as follows.
\begin{definition}
\label{def:PO:RS}
Let $\mathcal{Y}$ be the set of risk-sharing rules which are random vectors ${\bf Y}=(Y_i)_{i=1}^n:\Omega\rightarrow\mathbb{R}^n$ such that $\sum_{i=1}^{n}Y_i=S$. Let $\rho_i:\mathcal{Y}\rightarrow\mathbb{R}$, for $i=1,\dots,n$. A risk-sharing rule ${\bf Y}^*=(Y^*_i)_{i=1}^n\in\mathcal{Y}$ is said to be Pareto optimal in $\mathcal{Y}$ with respect to the functions $(\rho_i(\cdot))_{i=1}^n$, if there does not exist ${\bf Y}=(Y_i)_{i=1}^n\in\mathcal{Y}$ such that $\rho_i({\bf Y}) \le \rho_i({\bf Y}^*)$, for all $i=1,\dots,n$, with at least one inequality being strict.
\end{definition}%
\noindent
Again, it is clear that, if there exists ${\bf Y}^*\in\mathcal{Y}$ that minimizes $\inf_{{\bf Y}\in\mathcal{Y}} \sum_{i=1}^n \rho_i({\bf Y})$, then ${\bf Y}^*$ is Pareto optimal.

The key characteristic which distinguishes these two notions of Pareto optimality lies in their evaluation functions. In Definition \ref{def:pareto:optimal:pointwise} for the scenario-wise Pareto optimality, the functions $(\varphi_i(\cdot,\cdot))_{i=1}^n$ evaluate a risk-sharing rule ${\bf Y} \in \mathcal{Y}$ by, first sampling $\omega\in\Omega$, and then evaluating its realization ${\bf Y}(\omega)\in\mathbb{R}^n$; that is, the evaluation is for each outcome. In Definition \ref{def:PO:RS} for the classical notion of Pareto optimality, the functions $(\rho_i(\cdot))_{i=1}^n$ evaluate a risk-sharing rule ${\bf Y} \in \mathcal{Y}$ without sampling any outcomes; that is, the evaluation directly assigns at the aggregate level.


The induced risk-sharing rule ${\bf K}(\Theta)$ is also Pareto optimal in the sense of Definition \ref{def:PO:RS} with respect to the functions: for any ${\bf Y}\in\mathcal{Y}$,
\begin{equation*}
\rho_i({\bf Y})=\int_\Omega  \varphi_i(\omega,Y_i(\omega))\, \mu(d\omega),\;i=1,\dots,n,
\end{equation*}
where $\mu$ is a finite measure defined on $(\Omega,\mathcal{F})$ and is absolutely continuous with respect to $\mathbb{P}$, and assuming that $\varphi_i(\cdot,Y_i(\cdot))$ is $\mu$-integrable. Indeed, for any ${\bf Y}\in\mathcal{Y}$,
\begin{equation*}
\sum_{i=1}^n \rho_i({\bf Y})= \int_\Omega \sum_{i=1}^n  \varphi_i(\omega,Y_i(\omega))\,\mu(d\omega)\geq \int_\Omega \sum_{i=1}^n \varphi_i(\omega,K_i(\Theta(\omega)))\,\mu(d\omega) = \sum_{i=1}^n\rho_i(K_i(\Theta)),
\end{equation*}
where the inequality is due to \eqref{eq:min:para}. Therefore, ${\bf Y}^*={\bf K}(\Theta)$ minimizes $\inf_{{\bf Y}\in\mathcal{Y}} \sum_{i=1}^n \rho_i({\bf Y})$, and thus ${\bf Y}^*={\bf K}(\Theta)$ is Pareto optimal.

\subsubsection{Squared penalty}

When $d(z)=|z|^2$, for $z\in\mathbb{R}$, the optimization problem \eqref{eq:min:para} can be written  as follows: for $\theta\in I$,
    \begin{equation}
    \label{eq:min:square}
        {\bf K}(\theta) = \mathop{\arg\min}_{\substack{K_1,\dots,K_n \\ \sum_{i=1}^nK_i = K(\theta)}} \sum_{i=1}^n \frac{1}{\beta_i(\theta)} \mathbb{E}^{\mathbb{Q}_i^\theta}[ (X_i-K_i)^2 ].
    \end{equation}
Here, we assumed that $X_i\in L^2(\Omega,\mathcal{F},\mathbb{Q}^\theta_i)$, for $i=1,\dots,n$ and $\theta \in I$, and $K:I \to \mathcal{S}_{\bf X}$ is measurably right-invertible; there exists a random variable $\Theta$ taking values in $I$ such that $K(\Theta)=S$.

To write down explicitly the $({\bf K},\Theta)$-induced risk-sharing rule, we recall the solution of the optimization problem \eqref{eq:min:square}, for each $\theta\in I$; by \cite{dhaene}, we have
    \begin{equation}
    \label{eq:capital:square}
        K_i(\theta)  = \mathbb{E}^{\mathbb{Q}_i^\theta}[X_i] + \beta_i(\theta) \left( K(\theta) - \sum_{j=1}^n  \mathbb{E}^{\mathbb{Q}_j^\theta}[X_j] \right), \ i=1,2,\dots,n.
    \end{equation}
Randomizing \eqref{eq:capital:square} by the random parameter $\Theta$ yields the following risk-sharing rule:
    \begin{equation}
    \label{eq:risk:square}
        H_i({\bf X}) = K_i(\Theta) = \mathbb{E}^{\mathbb{Q}_i^\theta}[X_i]  \bigg|_{\theta=\Theta} + \beta_i(\Theta)   \left( S - \sum_{j=1}^n  \mathbb{E}^{\mathbb{Q}_j^\theta}[X_j] \bigg|_{\theta=\Theta}  \right),\ i=1,2,\dots,n.
    \end{equation}

\begin{remark}
    The induced risk-sharing rule \eqref{eq:risk:square} is different from the solution of the following problem:
    \begin{equation*}
        \inf_{\substack{H_1({\bf X}),\dots,H_n({\bf X})\\ \sum_{i=1}^n H_i({\bf X)}=S } } \sum_{i=1}^n \frac{1}{\beta_i(\theta)}\mathbb{E}^{\mathbb{Q}_i^\theta}\left[\left(X_i-H_i({\bf X})\right)^2\right] \bigg|_{\theta=\Theta},
    \end{equation*}
which directly searches for a risk vector that sums to the aggregate risk $S$ and minimizes the sum of squared penalties; obviously, a trivial minimizer of this problem is given by $H_i({\bf X})=X_i$, for $i=1,2,\dots,n$. In contrast, \eqref{eq:risk:square} is obtained by randomizing the deterministic solution \eqref{eq:min:square}, or equivalently \eqref{eq:capital:square}, and the resulting shared loss is a deterministic function of $S(\omega)$.
\end{remark}    

\begin{example}
\label{ex:square:separate}
Suppose that the relative exposures $(\beta_i)_{i=1}^n$ and the preference measures $(\mathbb{Q}_i)_{i=1}^n$ are independent of $\theta$. Then, the risk-sharing rule in \eqref{eq:risk:square} is reduced to
\begin{equation}
    \label{eq:opt:square}
        H_i({\bf X})  = \mathbb{E}^{\mathbb{Q}_i}[X_i]   + \beta_i   \left( S - \sum_{j=1}^n  \mathbb{E}^{\mathbb{Q}_j}[X_j]   \right),  \ i=1,2,\dots,n,
    \end{equation}
which is an example of the quota-share risk-sharing rule. That is, the quota-share rule in the form \eqref{eq:opt:square} is the optimal way to share the losses such that the sum of squared penalties is minimized when $S$ is realized. The rule \eqref{eq:opt:square} depends on the marginal distribution of ${\bf X}$ and on the dependence structure only through the aggregate risk $S$, rather than on features such as pairwise covariances. 
The reason is clear; the optimization problem \eqref{eq:min:square} only considers the sum of individual squared errors. Effectively, due to the separation of the capital aggregation and allocation procedures, and the assumption that the relative exposures and preference measures are independent of the parametrization, the risk-sharing rule is induced by simply replacing the aggregate capital $K(\theta)$ in the allocation rule by $S$. Consequently, the underlying method of capital aggregation no longer plays a role in the induced rule as long as it has a measurable right-inverse. Inherited from the allocation principle that the capitals are non-decreasing functions of $K(\theta)$, the resulting risk-sharing rule \eqref{eq:opt:square} is comonotonic, as the risks are non-decreasing functions of $S$. In addition, the sharing rule is scenario-wise Pareto-optimal as defined in Definition \ref{def:pareto:optimal:pointwise}. 

However, suppose that either a relative exposure, or a preference measure, depends on $\theta\in I$. After randomization, the induced risk-sharing rule \eqref{eq:risk:square} is no longer a quota-share rule, as it depends on the method of capital aggregation through the parameter sampling rule $\Theta$, which is the measurable right-inverse of the aggregate capital function on the aggregate risk $S$, appearing in the relative exposure or the preference measure. Also, because of this, the resulting risk-sharing rule \eqref{eq:risk:square} is not necessarily comonotonic, but is still scenario-wise Pareto-optimal as defined in Definition \ref{def:pareto:optimal:pointwise}. The following example further illustrates this.


\end{example}

\begin{example}\label{eg:ellip}
  Suppose that ${\bf X}$ follows an elliptical distribution, denoted by ${\bf X}\sim E_n(\boldsymbol{\mu},\boldsymbol{\Sigma},g)$, with a mean vector $\boldsymbol{\mu}=(\mu_i)_{i=1}^n$, a positive definite covariance matrix $\boldsymbol{\Sigma}$, and a density generator $g:\mathbb{R}_+:=[0,\infty)\to \mathbb{R}_+$; i.e., the joint density of ${\bf X}$ is given by: for ${\bf x}\in\mathbb{R}^n$,
    \begin{equation*}
        f_{{\bf X}}({\bf x}) = c_n |{\bf \Sigma}|^{-\frac{1}{2}}g\left(\frac{1}{2}({\bf x-\boldsymbol{\mu}})^\top \boldsymbol{\Sigma}^{-1} ({\bf x-\boldsymbol{\mu}}) \right),
    \end{equation*}
where $c_n>0$ is a normalizing constant which depends on $n$, but the generator $g$ is independent of $n$. 
  Define the cumulative generator $G$ and its complement $\bar{G}$ by: for $z\geq 0$,
    \begin{equation*}
        G(z) := \int_0^z g(t)\,dt,  \quad \text{and} \quad \bar{G}(z) = G(\infty)- G(z).  
    \end{equation*}
  

  Consider the parametrization $\theta\in I=[0,1]$, with the aggregate capital function $K : I^\circ = (0,1) \to \mathbb{R}$ given by $K(\theta)=\text{VaR}_\theta(S)$, $K(0) = F_S^{-1+}(0)$, and $K(1) = F_S^{-1}(1)$. The function $K(\cdot)$ is measurable, non-decreasing, and surjective onto $\mathcal{S}_{\bf X}$, and hence admits a measurable right-inverse by Proposition \ref{pp:K:surjective}. For $\theta\in[0,1)$, the preference measures $(\mathbb{Q}_i^\theta)_{i=1}^n$ are characterized by the Radon-Nikodym derivatives $(h_{i,\theta})_{i=1}^n$ with respect to $\mathbb{P}$, such that $\mathbb{E}^{\mathbb{Q}_i^\theta}[X_i] = \mathbb{E}[ X_i h_{i,\theta}(X_i)]$, for $i=1,2,\dots,n$, where
        \begin{equation*}
            h_{i,\theta}(X_i) = \frac{\mathbbm{1}_{\{X_i>F_{X_i}^{-1}(\theta)\} }}{ 1-\theta}.
        \end{equation*}
    By \cite{landsman2003tail}, for $\theta\in[0,1)$,
        \begin{equation*}
            \mathbb{E}^{\mathbb{Q}_i^\theta}[X_i] = \text{TVaR}_{\theta}(X_i) = \mu_i + \frac{\sigma_i \bar{G}\left(\frac{1}{2}z_{i,\theta}^2\right)}{1-\theta},
        \end{equation*}
    where $z_{i,\theta}:=F_{Z_i}^{-1}(\theta)$ and $(Z_1,\dots,Z_n):= {\bf \Sigma}^{-\frac{1}{2}}({\bf X}-\boldsymbol{\mu})$, whereas $\mathbb{E}^{\mathbb{Q}_i^\theta}[X_i] = \text{TVaR}_{\theta}(X_i)=\text{ess}\sup(X_i)$, for $\theta=1$.
  
    By the fact that $\text{VaR}_\Theta(S)=K(\Theta)=S$, we have $\Theta=F_S(S)$, and 
    the risk-sharing rule \eqref{eq:risk:square} reads, for $i=1,\dots,n$, 
        \begin{equation*}
            H_i({\bf X}) = \mu_i + \frac{\sigma_i \bar{G}\left(\frac{1}{2}z_{i,F_S(S)}^2\right)}{1-F_S(S)}+\beta_i(\Theta) \left[S - \left( \mu_S + \sum_{j=1}^n \frac{\sigma_j \bar{G}\left(\frac{1}{2}z_{j,F_S(S)}^2\right)}{1-F_S(S)} \right)\right],
        \end{equation*}
    when $S<\text{ess}\sup(S)$; when $S=\text{ess}\sup(S)$, $H_i({\bf X})=\text{ess}\sup(X_i)$. Due to the randomization of the relative exposures $(\beta_i)_{i=1}^n$ and the preference measures $(\mathbb{Q}_i)_{i=1}^n$, the resulting risk-sharing rule is no longer affine in $S$. In addition, it depends on the aggregate capital function $K(\cdot)=F_S^{-1}(\cdot)$ via its measurable right-inverse $K^{-1+}(\cdot)=F_S(\cdot)$.
\end{example}

\subsubsection{Absolute penalty}
When $d(z)=|z|$, for $z\in\mathbb{R}$, the optimization problem \eqref{eq:min:para} can be written as: for $\theta\in I$,
    \begin{equation}
    \label{eq:min:abs}
        {\bf K}(\theta) =\mathop{\arg\min}_{\substack{K_1,\dots,K_n \\ \sum_{i=1}^nK_i=K(\theta)}} \sum_{i=1}^n \mathbb{E}^{\mathbb{Q}_i^\theta}[|X_i-K_i|]. 
    \end{equation}
Here, we assumed that $X_i\in L^1(\Omega,\mathcal{F},\mathbb{Q}^\theta_i)$, for $i=1,\dots,n$ and $\theta \in I$, and $K:I \to \mathcal{S}_{\bf X}$ is measurably right-invertible; there exists a random variable $\Theta$ taking values in $I$ such that $K(\Theta)=S$.



To write down the solution of \eqref{eq:min:abs}, we  define the generalized quantiles as follows. For $i=1,2,\dots,n$ and $\theta\in I$, let $F^{\mathbb{Q}_i^\theta}_i(x) :=\mathbb{Q}_i^\theta(X_i\leq x)$, for $x\in\mathbb{R}$, be the distribution function of $X_i$ under the measure $\mathbb{Q}_i^\theta$, and $(F^{\mathbb{Q}^\theta_i}_i)^{-1}(\cdot)$ be the corresponding left-inverse quantile function. Define a random variable $\bar{S}^c_\theta$, for $\theta\in I$, by
 \begin{equation*}
        \bar{S}^c_\theta = \sum_{i=1}^n \left(F^{\mathbb{Q}_i^\theta}_i\right)^{-1}(U),
    \end{equation*}
where $U$ is a uniform random variable over the interval $[0,1]$; 
hence, $\bar{S}^c_\theta$ is a sum of comonotonic random variables, where each random variable $\left(F^{\mathbb{Q}_i^\theta}_i\right)^{-1}(U)$ has a distribution function given by $F^{\mathbb{Q}_i^\theta}_i(\cdot)$. For $\alpha\in[0,1]$, we define the $\alpha$-mixed inverse of $F_{\bar{S}^c_\theta}$, for $\theta\in I$, by: for $u\in\left[0,1\right]$,
 \begin{equation*}
        F^{-1(\alpha)}_{\bar{S}^c_\theta }(u) := \alpha F^{-1}_{\bar{S}^c_\theta}(u) +  (1-\alpha)F^{-1+}_{\bar{S}^c_\theta}(u);
  \end{equation*}
then, for any $s \in [F^{-1+}_{\bar{S}^c_\theta}(0) ,F^{-1}_{\bar{S}^c_\theta}(1)]$, there exists $\alpha_s \in [0,1]$ such that  $ F^{-1(\alpha_s)}_{\bar{S}^c_\theta }(s)=s$.

By \cite{dhaene}, the solution of \eqref{eq:min:abs} can be expressed as, for $\theta\in I$,
    \begin{equation}
    \label{eq:min:sol}
        K_i(\theta) =  \left(F^{\mathbb{Q}_i^\theta }_i\right)^{-1(\alpha_{K(\theta)})}(F_{\bar{S}^c_\theta}(K(\theta))), \ i=1,2,\dots,n,
    \end{equation}
where $\alpha_{K(\theta)}\in[0,1]$ such that $ F^{-1(\alpha_{K(\theta)})}_{\bar{S}^c_\theta }(F_{\bar{S}^c_\theta}(K(\theta)))=K(\theta)$, provided that $K(\theta) \in [F^{-1+}_{\bar{S}^c_\theta}(0) ,F^{-1}_{\bar{S}^c_\theta}(1)]$. 
Based on the capital allocation \eqref{eq:min:sol}, it is straightforward to induce the risk-sharing rule via substituting the parameter $\theta$ by the parameter sampling rule $\Theta$, provided that $ \mathcal{S}_{\bf X} \subset K(I)\cap [ F_{\bar{S}^c_\theta}^{-1+}(0),F_{\bar{S}^c_\theta}^{-1}(1) ] $ for any $\theta\in I$. The following statement provides sufficient conditions under which the randomization approach is feasible.

    \begin{proposition}\label{eq:absolute_penalty_sufficient_con}
        Suppose that the probability measures $(\mathbb{Q}_i^\theta)_{i=1}^n$ satisfy $\mathbb{Q}_i^\theta\sim \mathbb{P}$ for all $i=1,2\dots,n$ and $\theta\in I$. Then 
         \begin{equation}
     \label{eq:risk:min}
        H_i({\bf X}) := K_i(\Theta)= \left(F^{\mathbb{Q}_i^\theta }_i\right)^{-1(\alpha_S)}(F_{\bar{S}^c_\theta}(S))\bigg|_{\theta=\Theta}, \ i=1,2,\dots,n,
    \end{equation}
    defines a risk-sharing rule.
    \end{proposition}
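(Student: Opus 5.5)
The plan is to reduce the statement to the feasibility condition identified just before it. That condition says the randomization is well defined whenever, for every $\theta\in I$,
\begin{equation*}
\mathcal{S}_{\bf X}\subseteq K(I)\cap\big[F^{-1+}_{\bar S^c_\theta}(0),F^{-1}_{\bar S^c_\theta}(1)\big].
\end{equation*}
The inclusion $\mathcal{S}_{\bf X}\subseteq K(I)$ is part of the standing assumption that $K$ is measurably right-invertible. This assumption also supplies, via Proposition \ref{pp:K:surjective}(ii), a random variable $\Theta$ with $K(\Theta)=S$ a.s. Hence the real work is the second inclusion, $\mathcal{S}_{\bf X}\subseteq[F^{-1+}_{\bar S^c_\theta}(0),F^{-1}_{\bar S^c_\theta}(1)]$ for each fixed $\theta$. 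Once it holds, \eqref{eq:min:sol} is a valid allocation at every $\theta$ and at the value $K(\theta)=s$ for every $s\in\mathcal{S}_{\bf X}$. Evaluating at $\theta=\Theta$ then gives $\sum_i H_i({\bf X})=K(\Theta)=S$, which is exactly \eqref{eq:sum:preserve}.

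For the interval inclusion I would use the equivalence $\mathbb{Q}_i^\theta\sim\mathbb{P}$. Equivalent measures have the same null sets, so under $\mathbb{Q}_i^\theta$ the variable $X_i$ has the same essential infimum and supremum as under $\mathbb{P}$. Concretely, $(F_i^{\mathbb{Q}_i^\theta})^{-1+}(0)=F_{X_i}^{-1+}(0)$ and $(F_i^{\mathbb{Q}_i^\theta})^{-1}(1)=F_{X_i}^{-1}(1)$, with values in $\mathbb{R}\cup\{\pm\infty\}$. The sum $\bar S^c_\theta$ is comonotonic, so its quantile function is additive in both the left and right inverses. Therefore $F^{-1+}_{\bar S^c_\theta}(0)=\sum_i F^{-1+}_{X_i}(0)$ and $F^{-1}_{\bar S^c_\theta}(1)=\sum_i F^{-1}_{X_i}(1)$. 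Additivity of the right inverse at $0$ needs a short justification, for example as the limit $u\downarrow0$ of left quantiles.

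Next I compare these bounds with the support of $S$. Almost surely $X_i\ge \operatorname{ess\,inf}X_i$ for every $i$, so $S\ge\sum_i F^{-1+}_{X_i}(0)$ a.s., and similarly $S\le\sum_i F^{-1}_{X_i}(1)$ a.s. Since the support of $S$ is closed and the bounds are closed-interval endpoints, with the paper's convention that infinite endpoints give unbounded intervals, this yields $\mathcal{S}_{\bf X}\subseteq[F^{-1+}_{\bar S^c_\theta}(0),F^{-1}_{\bar S^c_\theta}(1)]$ for every $\theta$.

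Finally I would address well-definedness of \eqref{eq:risk:min}. For each $s$ in this interval an $\alpha_s\in[0,1]$ exists, and I would pick it measurably, for instance by the explicit formula from linear interpolation between the left and right quantile at $F_{\bar S^c_\theta}(s)$. This makes $H_i({\bf X})$ a random variable. I expect the main obstacle to be this measurability, jointly in $(\theta,s)$ when composing with $\Theta$, together with the endpoint cases where $F_{\bar S^c_\theta}(s)\in\{0,1\}$. These can be handled by the convention that mixed inverses at $0$ and $1$ take the essential bounds.
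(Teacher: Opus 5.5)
Your proposal is correct and follows essentially the same route as the paper's proof. The paper's steps are the same as yours: equivalence of $\mathbb{Q}_i^\theta$ and $\mathbb{P}$ gives each $X_i$ the same essential bounds, comonotonic additivity of quantiles (the paper cites Theorem 6 of Dhaene et al.\ 2002) turns these into the endpoints of $\bar S^c_\theta$, those endpoints bracket $\mathcal{S}_{\bf X}$, and summing the allocations at $\theta=\Theta$ gives $K(\Theta)=S$. Your added remarks on a measurable choice of $\alpha_s$ and on the endpoint conventions cover points the paper leaves implicit.
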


\begin{proof}
  See Appendix \ref{sec:app:pf:PP41}. 
\end{proof}

\begin{example}
    \label{ex:absolute:dev:quantile}
Assume that the condition in Proposition \ref{eq:absolute_penalty_sufficient_con} holds. Suppose further that $\mathbb{Q}_i$ is independent of the parameter $\theta\in I$ for all $i=1,2,\dots,n$; then, $\bar{S}^c:\equiv \bar{S}^c_\theta$, and  by Propositions 5.11 and 6.4 in \cite{dhaene:2021},  the risk-sharing rule \eqref{eq:risk:min} is given by
    \begin{equation*}
        H_i({\bf X})= \left(F^{\mathbb{Q}_i }_i\right)^{-1(\alpha_S)}(F_{\bar{S}^c}(S)) = \mathbb{E} [\bar{X}^c_i | \bar{S}^c = s  ]\bigg|_{s=S}, \ i=1,2,\dots,n,
    \end{equation*}
where $\bar{X}^c_i:= (F^{\mathbb{Q}_i}_i)^{-1}(U)$, and so $\bar{S}^c = \sum_{i=1}^n\bar{X}^c_i$.
The risk-sharing rule induced by minimizing the sum of absolute penalties is thus a generalized comonotonic CMRS rule. 

Since $\mathbb{Q}_i\sim \mathbb{P}$ for all $i=1,2,\dots,n$, the risk-sharing rule \eqref{eq:risk:min} satisfies the axiom of risk-fairness, i.e., $H_i({\bf X})\leq \mathop{\text{ess}\,\sup} X_i$ a.s. However, the actuarial-fairness might not be satisfied, as $\mathbb{E}[H_i({\bf X})] \neq \mathbb{E}[X_i]$ in general. In particular, if $\mathbb{Q}_i = \mathbb{P}$ for all $i=1,2,\dots,n$, we are able to retrieve the quantile risk-sharing rule introduced in  \cite{dhaene:2021}; this implies that the quantile risk-sharing rule, for each scenario, minimizes the sum of absolute penalties, and hence is scenario-wise Pareto optimal.

\end{example}

\subsection{Euler Principle}
\label{sec:euler}
The Euler principle allocates capital based on the marginal contribution of each individual risk to the aggregate risk measured by a prescribed risk measure. Let $(\rho_\theta)_{\theta\in I}$ be a family of positively homogeneous, with degree $1$, risk measures (i.e., $\rho_\theta(\lambda X) = \lambda \rho_\theta(X)$ for any $\lambda>0$) which are Gateaux differentiable in the interior of $I$. For the risk vector ${\bf X}$, the aggregate capital is determined by $K(\theta):=\rho_\theta(S)$, and the allocated capital based on the Euler principle is given by
    \begin{equation*}
    K_i(\theta) =     \frac{\partial}{\partial t} \rho_\theta(S+tX_i) \bigg|_{t=0}, \ i=1,2,\dots,n.
    \end{equation*}
The positive homogeneity of $(\rho_\theta)_{\theta\in I}$ ensures that $\sum_{i=1}^n K_i(\theta)=\rho_\theta(S)=K(\theta)$. 

If $\theta\in I \mapsto K(\theta) = \rho_\theta(S)$ is measurably right-invertible, there exists a random variable $\Theta$ taking values in $I$ such that $K(\Theta)=S$, and thus one can define naturally a $({\bf K},\Theta)$-induced risk-sharing rule by     \begin{equation}
    \label{eq:risk:sharing:euler}
    H_i({\bf X}) :=  K_i(\Theta)=\frac{\partial}{\partial t} \rho_\theta(S+tX_i) \bigg|_{t=0 , \theta=\Theta }, \ i=1,2,\dots,n.
\end{equation}

Inheriting from the Euler capital allocation principle, the induced risk-sharing rule is based on each participant’s marginal contribution to the aggregate risk. The key difference between the Euler principle and the induced sharing rule is that, such feature is interpreted in a scenario-wise fashion: a scenario $\omega$ is randomly realized, and each participant’s contribution is evaluated \emph{ex-post} under the corresponding risk measure $\rho_{\Theta\left(\omega\right)}$ via \eqref{eq:risk:sharing:euler}.



\begin{example} %
    The capital allocation \eqref{eq:motivating_example_allocation} presented in Example \ref{ex:motivate} belongs to the Euler principle with the aggregate capital determined by $K(\theta) = \rho_\theta(S) = \text{VaR}_\theta(S)$, for $\theta\in I=[0,1]$. With  $\Theta = F_S(S)$, we derived the CMRS rule. 
    Hence, if ${\bf X}$ admits strictly increasing marginal distribution functions for all individual risks,  the CMRS rule can be interpreted as a method which allocates the aggregate loss based on the marginal contribution of each participant as measured by the Value-at-Risk, with the confidence level depending on the realization.  
\end{example}

\begin{example}
\label{ex:exponential}
    Suppose that 
    $\mu_S\neq 0$, and the aggregate capital function is determined by 
        \begin{equation}
        \label{eq:exp:risk:measure}
          K(\theta) =  \rho_\theta(S) = \mathbb{E}\left[ S e^{ \frac{ \theta \cdot S}{\mu_S}  } \right], \ \theta\in I,
        \end{equation}
    in which we assumed that $Se^{\frac{\theta \cdot S}{\mu_S}} \in L^1(\Omega,\mathcal{F},\mathbb
    {P})$ for all $\theta\in I$. The Euler capital allocation principle is given by:
        \begin{equation}
        \label{eq:K:exp}
            K_i(\theta) = \mathbb{E}\left[X_i e^{ \frac{ \theta \cdot S}{\mu_S}  } \right] + \frac{\theta \cdot \mu_i}{\mu_S} \mathbb{E}\left[S  e^{ \frac{ \theta \cdot S}{\mu_S}  }  \left( \frac{X_i}{\mu_i} - \frac{S}{\mu_S} \right) \right], \ i=1,2,\dots,n.
        \end{equation}
   
    Suppose further that the joint distribution of ${\bf X}$ belongs to the natural exponential family with the density function:
        \begin{equation*}
            f({\bf x}; \boldsymbol{\lambda} ) = h({\bf x}) \exp( \boldsymbol{\lambda}^\top {\bf x} - A(\boldsymbol{\lambda} )  ), \ {\bf x} \in \mathbb{R}^n,
        \end{equation*}
    where $\boldsymbol{\lambda}\in\mathbb{R}^n$,  $h:\mathbb{R}^n\to \mathbb{R}_+$ determines the support of ${\bf X}$ independently of ${\bf \lambda}$,  and $A:\mathbb{R}^n\to \mathbb{R}$ is the cumulant function assumed to be differentiable.  
    The moment generating function of ${\bf X}$ is given by 
        \begin{equation*}
            M_{\bf X}({\bf t}) =e^{A(\boldsymbol{\lambda+t}) - A( \boldsymbol{\lambda} )}, \ {\bf t} \in \mathbb{R}^n. 
        \end{equation*}
    By denoting ${\bf 1}\in\mathbb{R}^n$ as the vector with all entries equal to $1$, and defining $\boldsymbol{\lambda}_\theta:= \boldsymbol{\lambda} + \theta/\mu_S {\bf 1}$, since, by the definition of moment generating function,
    \begin{equation*}
    \frac{d}{d\theta}M_{\bf X}\left(\frac{\theta}{\mu_S}{\bf 1}\right) = \frac{d}{d\theta}\mathbb{E}\left[e^{\frac{\theta}{\mu_S} {\bf 1}^\top{\bf X} } \right] = \frac{d}{d\theta}\mathbb{E}\left[e^{\frac{\theta}{\mu_S} S} \right] = \frac{1}{\mu_S}\mathbb{E}\left[Se^{\frac{\theta}{\mu_S} S} \right] = \frac{1}{\mu_S}K(\theta),
    \end{equation*}
    we have
 \begin{align*}
            K(\theta)&=\rho_\theta(S)= \mu_S \frac{d}{d\theta} M_{\bf X}\left(\frac{\theta}{\mu_S}{\bf 1}\right) = \mu_S \frac{d}{d\theta}e^{A(\boldsymbol{\lambda}_\theta)-A(\boldsymbol{\lambda})}\\
            &= e^{A(\boldsymbol{\lambda}_\theta)-A(\boldsymbol{\lambda})}\sum_{i=1}^n \partial_i A(\boldsymbol{\lambda}_\theta) =: e^{-A(\boldsymbol{\lambda})} \alpha(\theta),
        \end{align*}
     where, for $i=1,2,\dots,n$, $ \partial_i A(\cdot)$ denotes the partial derivative of $A\left(\cdot\right)$ with respect to $i$-th variable. Hence, $K:I\to\mathcal{S}_{\bf X}$ is measurably right-invertible if and only if $\alpha:I \to e^{A(\boldsymbol{\lambda})} \mathcal{S}_{\bf X}$ is  measurably right-invertible, where $e^{A(\boldsymbol{\lambda})} \mathcal{S}_{\bf X}:= \{e^{A(\boldsymbol{\lambda})} s: s\in \mathcal{S}_{\bf X}\}$. Assume that this condition holds in the example.
     
     Similarly, we also have  
        \begin{align*}
             \mathbb{E}\left[X_i e^{ \frac{ \theta \cdot S}{\mu_S}  } \right] &=e^{A(\boldsymbol{\lambda}_\theta) - A(\boldsymbol{\lambda})} \partial_iA(\boldsymbol{ \lambda}_\theta) ,\\
             \mathbb{E}\left[X_i S e^{ \frac{ \theta \cdot S}{\mu_S}  }\right] &=  e^{A(\boldsymbol{\lambda}_\theta) - A(\boldsymbol{\lambda})}\sum_{j=1}^n \left(   \partial^2_{ij} A(\boldsymbol{\lambda}_\theta) + \partial_iA(\boldsymbol{\lambda}_\theta) \partial_jA(\boldsymbol{\lambda}_\theta)  \right) ,\\
               \mathbb{E}\left[S^2 e^{ \frac{ \theta \cdot S}{\mu_S}  }\right] &=   e^{A(\boldsymbol{\lambda}_\theta) - A(\boldsymbol{\lambda})}\sum_{j,k=1}^n\left( \partial^2_{jk}A(\boldsymbol{\lambda}_\theta) + \partial_jA(\boldsymbol{\lambda}_\theta)\partial_kA(\boldsymbol{\lambda}_\theta) \right),
        \end{align*}
    where $\partial^2_{ij}A\left(\cdot\right) = \partial_i(\partial_j A\left(\cdot\right))$ and $\partial^2_{jk}A\left(\cdot\right) = \partial_j(\partial_k A\left(\cdot\right))$.
        
    By the measurable right-invertibility of $K:I\to \mathcal{S}_{\bf X}$, define  $\Theta := \alpha^{-1}(Se^{A( \boldsymbol{\lambda} )})$ so that  $K(\Theta)=\rho_\Theta(S) = S$. Therefore, the induced risk-sharing rule by the capital allocation principle \eqref{eq:K:exp} with the aggregate capital function \eqref{eq:exp:risk:measure} is given by:
        \begin{equation*}
        \begin{aligned}
            H_i({\bf X}) =&\ \Bigg\{  \partial_iA(\boldsymbol{\lambda}_\Theta)   +  \frac{\Theta \cdot \mu_i }{\mu_S}   \Bigg( \frac{\sum_{j=1}^n \left( \partial_{ij}A(\boldsymbol{\lambda}_\Theta ) +  \partial_iA(\boldsymbol{\lambda}_\Theta )\partial_jA(\boldsymbol{\lambda}_\Theta ) \right) }{\mu_i} \\
            &\ \quad - \frac{\sum_{j,k=1}^n\left( \partial^2_{jk}A(\boldsymbol{\lambda}_\Theta )+\partial_jA(\boldsymbol{\lambda}_\Theta )\partial_kA(\boldsymbol{\lambda}_\Theta ) \right)}{\mu_S} \Bigg) \Bigg\}\frac{S}{\sum_{j=1}^n \partial_jA(\boldsymbol{\lambda}_\Theta)}, \ i=1,2,\dots,n.
            \end{aligned}
        \end{equation*}
\end{example}

\begin{example}
    Continuing Example \ref{ex:exponential}, 
    suppose further that ${\bf X}$ follows a multivariate normal distribution with the density function: for ${\bf x}\in\mathbb{R}^n$,
    \begin{equation*}
        f({\bf x}) = \frac{1}{(2\pi|{\bf \Sigma}|)^{\frac{n}{2}}}\exp\left( -\frac{1}{2} ({\bf x} -\boldsymbol{\mu})^\top {\bf\Sigma}^{-1} ({\bf x} - \boldsymbol{\mu}) \right),
    \end{equation*}
where ${\bf \Sigma}$ is positive definite. Let $\boldsymbol{\lambda} := {\bf\Sigma}^{-1} \boldsymbol{\mu} $. Then, the corresponding cumulant function is given by $A(\boldsymbol{\lambda}) = \boldsymbol{\lambda}^\top {\bf\Sigma} \boldsymbol{\lambda}/2$, whence, for $\theta\in I=\mathbb{R}$,
    \begin{align*}
        \partial_iA({ \boldsymbol{\lambda}_\theta} ) &= \mu_i + \frac{\theta \sigma_{i,S} }{\mu_S}, \ 
        \partial^2_{jk}A(\boldsymbol{\lambda}_\theta)  = \sigma_{j,k},\
        A(\boldsymbol{\lambda}_\theta) -A(  \boldsymbol{\lambda}) = \theta + \frac{\theta^2\sigma_S^2}{2\mu_S^2},
    \end{align*}
and  
    \begin{equation*}
      K(\theta)=  \rho_\theta(S) = \left( \mu_S + \frac{\theta\sigma_S^2 }{\mu_S} \right) e^{ \frac{1}{2\sigma_S^2} \left( \mu_S + \frac{\theta\sigma_S^2 }{\mu_S} \right)^2 - \frac{\mu_S^2}{2\sigma_S^2}  }. 
    \end{equation*}

    By the assumption that $\mu_S\neq 0$, it is clear that $K : \mathbb{R}\to \mathbb{R}$ is bijective so that the inverse $K^{-1}\left(\cdot\right)$ uniquely exists. By letting $\Theta:=K^{-1}(S)$, we have $\rho_\Theta(S)=S$. 
More explicitly, we have 
    \begin{equation*}
        \Theta = \frac{\mu_S}{\sigma_S}\sqrt{ W_0\left( \frac{S^2}{\sigma_S^2}e^{ \frac{\mu_S^2}{\sigma_S^2}} \right) } - \frac{\mu_S^2}{\sigma_S^2},
    \end{equation*}
where $W_0(\cdot)$ is the Lambert W function of order zero. Therefore, the induced risk-sharing rule by the capital allocation principle \eqref{eq:K:exp} with the aggregate capital function \eqref{eq:exp:risk:measure} is given by: for $i=1,2,\dots,n$, 
    \begin{align*}
        H_i({\bf X}) =&\ \frac{S\left[ \mu_i + \frac{\Theta\sigma_{i,S}}{\mu_S} + \frac{\mu_i\Theta }{\mu_S}  \left( \frac{\sigma_{i,S} + \left(\mu_i + \frac{\Theta \sigma_{i,S}}{\mu_S} \right)\left(\mu_S + \frac{\Theta \sigma_S^2}{\mu_S} \right) }{\mu_i}  - \frac{\sigma_S^2 + \left(\mu_S + \frac{\Theta \sigma_S^2}{\mu_S} \right)^2 }{\mu_S}  \right) \right]}{\mu_S + \frac{\Theta\sigma_S^2}{\mu_S}} \\
        &= \frac{S\left[\mu_i + \frac{\Theta\sigma_{i,S}}{\mu_S} + \frac{\Theta}{\mu_S^2}\left(\mu_S\sigma_{i,S} - \mu_i\sigma_S^2 \right)\left(1+ \frac{\Theta\left(\mu_S+ \frac{\Theta \sigma_S^2}{\mu_S}\right) }{\mu_S} \right) \right]}{\mu_S + \frac{\Theta\sigma_S^2}{\mu_S}}. 
    \end{align*}
    
\end{example}

\subsection{Aggregate Capital Based on Distortion Risk Measures}
\label{sec:distortion}
Distortion risk measures form an important class of risk measures, encompassing many well-known examples in practice such as VaR and TVaR. They satisfy a number of desirable properties such as monotonicity, positive homogeneity, law invariance, translation invariance, and comonotonic additivity.  Each distortion risk measure is characterized by an underlying distortion function.
Thus, properties, such as the measurable right-invertibility, of the aggregate capital function, and the effects by the randomization, can be studied via examining the corresponding family of distortion functions, when the aggregate capitals are calculated by distortion risk measures on the aggregate risk.

    In this section, we let $I = [a,b]:= \{x\in \mathbb{R}: a\leq x\leq b\}$, where $a,b\in\mathbb{R}\cup\{\pm\infty\}$, $a<b$, and $I^\circ = (a,b)$ be the interior of $I$. We consider a family of distortion risk measures $(\rho_\theta)_{\theta\in I^\circ}$ given by
      \begin{equation}
            \label{eq:distortion}
            K(\theta)=\rho_\theta(S) := -\int_{-\infty}^0\left(1-D_\theta(\bar{F}_S(s)) \right)\,ds + \int_0^\infty D_\theta(\bar{F}_S(s)) \,ds,  \ \theta\in I^\circ,
      \end{equation}       
    where, for $\theta\in I^\circ$, $D_\theta(\cdot) :[0,1]\to[0,1]$ is a distortion function, i.e., $D_\theta$ is non-decreasing  and satisfies $D_\theta(0)=0$, $D_\theta(1)=1$; by definition,  $K(\theta) \in \mathbb{R}$ if both integrals in \eqref{eq:distortion} are finite. 
    We also define 
        \begin{equation*}
        K(a)=\rho_a(S) :=    \lim_{\theta\to a^+} \rho_\theta(S), \text{ and } K(b)=\rho_b(S) :=\lim_{\theta\to b^-} \rho_\theta(S),
        \end{equation*}
   where $\rho_a(S),\rho_b(S)\in \mathbb{R}\cup\{\pm\infty\}$. We thus extend the family by including the two end-points, yielding the extended family $(\rho_\theta)_{\theta\in I}$.
    
\subsubsection{Measurably right-invertible family of distortion risk measures}
 \label{sec:aggregate:distortion}

   Proposition \ref{pp:distortion} below presents a sufficient condition on the family of distortion functions $(D_\theta)_{\theta\in I^\circ}$ such that the associated distortion risk measures $(\rho_\theta)_{\theta\in I}$ (including the limits at the two end-points) is non-decreasing and surjective onto $[F_S^{-1+}(0),F_S^{-1}(1)]$. By this and Proposition \ref{pp:K:surjective}, the associated aggregate capital function $K(\cdot)$ is measurably right-invertible. To summarize, the conditions in Proposition \ref{pp:distortion} imply that $(D_\theta)_{\theta\in I^\circ}$ is a continuous deformation from $\mathbbm{1}_{\{1\}}(\cdot)$ to $\mathbbm{1}_{(0,1]}(\cdot)$, so that the integral  in \eqref{eq:distortion} can sweep through the entire support $\mathcal{S}_{\bf X}$.

        \begin{proposition} 
        \label{pp:distortion}  
            Let $(D_\theta)_{\theta\in (a,b)}$ be a family of distortion functions and $(\rho_\theta)_{\theta\in (a,b)}$ be the associated family of distortion risk measures defined in \eqref{eq:distortion}.  Suppose that for any $p\in[0,1]$, $\theta \in (a,b)\mapsto D_\theta(p)$ is non-decreasing. Then, $\theta \in (a^*,b^*) \mapsto K(\theta)$ is non-decreasing, where $a^*:=\inf\{\theta\in(a,b):\rho_\theta(S)>-\infty\}$, $b^*:= \sup\{\theta\in(a,b):\rho_\theta(S)<\infty\}$, and $a^*\leq b^*$. 

            Moreover, suppose that the following conditions hold: for all $p\in[0,1]$,
                \begin{enumerate}[label=(\alph*)]
                    \item  $\theta\mapsto D_\theta(p)$ is continuous on $(a,b)$;
                    \item $\lim_{\theta\downarrow a} D_\theta(p) = \mathbbm{1}_{\{1\}}(p)$;
                    \item $\lim_{\theta\uparrow b}D_\theta(p) = \mathbbm{1}_{(0,1]}(p)$.
                \end{enumerate}
Then, the following statements hold:
\begin{enumerate}
    \item \( K:(a^*, b^*) \to \mathbb{R} \) is continuous;
    \item \( \lim_{\theta \downarrow a} K(\theta) = F_S^{-1+}(0) \), and \( \lim_{\theta \uparrow b} K(\theta) = F_S^{-1}(1) \);\footnote{
The limits are understood in $\mathbb{R}\cup\{\pm\infty\}$. 
If $a^*>a$ (resp.~$b^*<b$), then $F_S^{-1+}(0)=-\infty$
(resp.~$F_S^{-1}(1)=+\infty$), and so
$K(\theta)=-\infty$ for $\theta\in(a,a^*]$ (resp.~$K(\theta)=+\infty$ for $\theta\in[b^*,b)$.}

    \item \( K:(a^*,b^*)\to\mathbb{R} \) is surjective onto \( (F_S^{-1+}(0), F_S^{-1}(1)) \);

\end{enumerate} 
  \end{proposition}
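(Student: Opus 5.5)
The plan is to work directly with the integral representation \eqref{eq:distortion}. Write $K(\theta)=-\int_{-\infty}^0(1-D_\theta(\bar F_S(s)))\,ds+\int_0^\infty D_\theta(\bar F_S(s))\,ds$. Each integrand depends on $\theta$ only through $D_\theta(\bar F_S(s))$, and both pieces are monotone in that quantity in the same direction.

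\emph{Monotonicity.} If $\theta_1\le\theta_2$, then $D_{\theta_1}(p)\le D_{\theta_2}(p)$ for every $p$. Hence $1-D_{\theta_1}\ge 1-D_{\theta_2}$, so the first term increases, and the second term also increases. This gives $\rho_{\theta_1}(S)\le\rho_{\theta_2}(S)$ in $\mathbb{R}\cup\{\pm\infty\}$. The negative part is non-increasing in magnitude and the positive part is non-decreasing, so the expression is never of the form $\infty-\infty$ between the finiteness thresholds.

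The same monotonicity shows that $\{\theta:\rho_\theta(S)>-\infty\}$ is an up-set and $\{\theta:\rho_\theta(S)<\infty\}$ is a down-set. Both sets are nonempty whenever some $\rho_\theta(S)$ is finite, which we assume implicitly. Therefore $a^*\le b^*$, and on $(a^*,b^*)$ both integrals are finite.

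\emph{Continuity on $(a^*,b^*)$.} Fix $\theta_0\in(a^*,b^*)$ and choose $\theta_1<\theta_0<\theta_2$ inside $(a^*,b^*)$. For $\theta\in[\theta_1,\theta_2]$ we have the domination
\[
0\le 1-D_\theta(\bar F_S(s))\le 1-D_{\theta_1}(\bar F_S(s)),\qquad 0\le D_\theta(\bar F_S(s))\le D_{\theta_2}(\bar F_S(s)),
\]
and both bounds are integrable on $(-\infty,0)$ and $(0,\infty)$ respectively, by finiteness of $\rho_{\theta_1}$ and $\rho_{\theta_2}$. Condition (a) gives pointwise convergence of the integrands as $\theta\to\theta_0$. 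Dominated convergence then yields continuity of $K$ at $\theta_0$.

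\emph{Limits.} Monotone convergence handles both endpoints. As $\theta\downarrow a$, condition (b) and monotonicity in $\theta$ give $D_\theta(\bar F_S(s))\downarrow\mathbbm{1}_{\{1\}}(\bar F_S(s))=\mathbbm{1}_{\{s<F_S^{-1+}(0)\}}$. This uses $\bar F_S(s)=1$ iff $F_S(s)=0$ iff $s<F_S^{-1+}(0)$, up to checking the boundary point, which has Lebesgue measure zero. Consequently $1-D_\theta$ increases to $\mathbbm{1}_{\{s\ge F_S^{-1+}(0)\}}$. Monotone convergence, applied separately to the two integrals, gives the limit $-\int_{-\infty}^0\mathbbm{1}_{\{s\ge c\}}\,ds+\int_0^\infty\mathbbm{1}_{\{s<c\}}\,ds=c$ with $c=F_S^{-1+}(0)$, valid also when $c=-\infty$.

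The positive-part integral decreases, so monotone convergence from above needs a finite starting value. That value is available for $\theta$ near $a$ inside $(a^*,b^*)$ if $a^*=a$. If instead $a^*>a$, then $K=-\infty$ there, and we must argue that $F_S^{-1+}(0)=-\infty$. The argument: if $c$ were finite, the negative part would be bounded by $\max(0,-c)$ for all $\theta$, contradicting $\rho_\theta(S)=-\infty$. The case $\theta\uparrow b$ is symmetric, using (c): $D_\theta(\bar F_S(s))\uparrow\mathbbm{1}_{(0,1]}(\bar F_S(s))=\mathbbm{1}_{\{s<F_S^{-1}(1)\}}$.

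\emph{Surjectivity.} $K$ is continuous and non-decreasing on the interval $(a^*,b^*)$. Its limits at the two ends are $F_S^{-1+}(0)$ and $F_S^{-1}(1)$; when $a^*>a$, the left limit at $a^*$ must equal $-\infty=F_S^{-1+}(0)$ by the argument above. The intermediate value theorem then gives surjectivity onto the open interval $(F_S^{-1+}(0),F_S^{-1}(1))$.

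The main obstacle is the endpoint bookkeeping. It requires justifying the limit interchange when one integral is infinite, and showing that the limits at $a^*$ and $b^*$ (rather than at $a$ and $b$) coincide with the support bounds. I would handle this by splitting into the cases $a^*=a$ versus $a^*>a$, and treating the positive and negative parts separately.
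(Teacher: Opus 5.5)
\textbf{Gap in the surjectivity step.} Your route is the paper's: pointwise comparison of $D_\theta(\bar F_S(s))$ for monotonicity, convergence theorems for continuity and for the limits at $a$ and $b$, then the intermediate value theorem. Your dominated-convergence argument on a bracket $[\theta_1,\theta_2]\subset(a^*,b^*)$ is an equivalent substitute for the paper's one-sided monotone convergence. Your treatment of statement 2 when $a^*>a$ is correct, and more careful than the paper: a finite $c=F_S^{-1+}(0)$ would bound the negative part by $\max(0,-c)$.

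The gap is the sentence ``when $a^*>a$, the left limit at $a^*$ must equal $-\infty=F_S^{-1+}(0)$ by the argument above.'' That argument only identifies the target value; it says nothing about how $K$ behaves as $\theta\downarrow a^*$. Write $N(\theta)=\int_{-\infty}^0(1-D_\theta(\bar F_S(s)))\,ds$ and $P(\theta)=\int_0^\infty D_\theta(\bar F_S(s))\,ds$. Monotone convergence makes $N$ right-continuous, so $\lim_{\theta\downarrow a^*}K(\theta)=\lim_{\theta\downarrow a^*}P(\theta)-N(a^*)$. But $N$ need not be left-continuous at $a^*$, since decreasing monotone convergence needs a finite starting value. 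So you can have $N(a^*)<\infty$ while $N=\infty$ on $(a,a^*)$, and then the limit is finite. The symmetric problem arises at $b^*<b$ with $P$. The paper's proof has the same hole: after the intermediate value theorem it relies on the footnote to statement 2, whose claim that $K=-\infty$ on $(a,a^*]$ is never proved.

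\textbf{The statement fails in this case.} The step cannot be repaired without an extra hypothesis. Take
\begin{itemize}
\item $S$ symmetric with $\mathbb{P}(|S|>t)=1/t$ for $t\ge 1$;
\item $D_1(p)=3p^2-2p^3$ and $(a,b)=(0,2)$;
\item $D_\theta(p)=\min\{D_1(p)^{1/\theta},\,1-(1-\theta)(1-p)\}$ for $\theta\in(0,1]$;
\item $D_\theta(p)=D_1(p)^{2-\theta}$ for $\theta\in[1,2)$.
\end{itemize}
These are distortions, non-decreasing and continuous in $\theta$, and they satisfy (b) and (c).

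For $\theta<1$ we have $1-D_\theta(\bar F_S(s))\ge(1-\theta)F_S(s)=(1-\theta)/(2|s|)$ for $s\le-1$, so $N(\theta)=\infty$. Meanwhile $P(\theta)\le\int_0^\infty 3\bar F_S(s)^2\,ds<\infty$, because $D_\theta\le D_1\le 3p^2$. Hence $K=-\infty$ on $(0,1)$.

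For $\theta\in[1,2)$, the identity $D_1(1-q)=1-D_1(q)$ gives $1-D_\theta(1-q)=1-(1-D_1(q))^{2-\theta}\le D_1(q)\le 3q^2$, so $N(\theta)<\infty$. Since $2p^2\le D_1(p)\le 3p^2$ for $p\le 1/2$, $P(\theta)<\infty$ if and only if $\theta<3/2$.

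So $a^*=1$ and $b^*=3/2$, and $K\ge K(1)\in\mathbb{R}$ on $(a^*,b^*)$, even though $F_S^{-1+}(0)=-\infty$. Statement 3 fails, and so does the footnote at $\theta=a^*$.

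\textbf{How to complete the proof.} Your argument goes through if you add a hypothesis such as $K(\theta)\in\mathbb{R}$ for all $\theta\in(a,b)$. Then $a^*=a$, $b^*=b$, and the limits at $a^*,b^*$ are exactly the ones you computed. Alternatively, assume directly that $K(a^*)=-\infty$ when $a^*>a$ and $K(b^*)=+\infty$ when $b^*<b$. Right-continuity of $N$ and left-continuity of $P$ then give the required infinite limits. The first hypothesis also excludes the degenerate case $(a^*,b^*)=\emptyset$, which your assumption that some $\rho_\theta(S)$ is finite does not rule out.
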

        \begin{proof}
        See Appendix \ref{sec:app:pf:PP42}. 
        \end{proof}
    
    The conditions in Proposition \ref{pp:distortion} offer a natural and tractable setting under which the results hold, but is not necessary for measurable right-invertibility. For example, the family $(\text{VaR}_\theta(Z))_{\theta\in[0,1]}$ is clearly non-decreasing and surjective onto the support of a generic random variable $Z$. However, the associated family of distortion functions $( \mathbbm{1}_{(1-\theta,1]}(\cdot) )_{\theta\in[0,1]}$ does not satisfy Condition (a) in Proposition \ref{pp:distortion}. Many widely used distortion risk measures, such as those based on Wang's transformation, proportional hazards, or dual power distortions, do satisfy all the conditions listed, ensuring that the proposition is widely applicable in practice; the following examples illustrate these.
   
    \begin{example}\label{eg:wang_example}
        The Wang's transformation $D_\theta(p) = \Phi(\Phi^{-1}(p)-\Phi^{-1}(1-\theta))$, for $\theta\in(0,1)$ and $p\in[0,1]$, satisfies the conditions in Proposition \ref{pp:distortion}, where $\Phi(\cdot)$ denotes the distribution function of the standard normal random variable. 
    \end{example}   
   
     \begin{example}
         The power distortion function $D_\theta(p) = p^{\frac{1-\theta}{\theta}}$, for $\theta\in(0,1)$ and $p\in[0,1]$, satisfies the conditions in Proposition \ref{pp:distortion}.
      \end{example}

    As mentioned earlier, concave distortion functions alone cannot form a family of measurably right-invertible distortion risk measures since $\rho_\theta(S)\geq \mathbb{E}[S]$. Nevertheless, by combining  concave distortions with their dual, one can obtain a measurably right-invertible family; the following example demonstrates this.
     \begin{example}
     \label{ex:dual}
            Let $(D_\theta)_{\theta\in[0,1)}$ be a family of distortions which satisfies Condition (a) in Proposition \ref{pp:distortion}, with $D_0(p)=p$ and $\lim_{\theta\to 1^-} D_\theta(p) = \mathbbm{1}_{(0,1]}(p)$, for $p\in[0,1]$. Define the dual\footnote{The dual herein is defined by also reversing the parameter from $\theta$ to $1-\theta$, different from the usual definition of dual in the literature.} of $D_\theta$ by $D^*_\theta(p) := 1- D_{1-\theta}(1-p)$, for $p\in[0,1]$. Let, for $p\in[0,1]$,
                \begin{equation*}
                    \tilde{D}_\theta(p) := \begin{dcases}
                       D^*_{2\theta}(p), &\text{if } \theta \in\left(0,\frac{1}{2}\right]; \\
                       D_{2\theta-1}(p), &\text{if }\theta \in \left( \frac{1}{2},1 \right).
                    \end{dcases}
                \end{equation*}
            Then, the family of distortion risk measures $(\rho_\theta)_{\theta\in[0,1]}$ associated with $(\tilde{D}_\theta)_{\theta\in[0,1]}$ satisfies the conditions in Proposition \ref{pp:distortion}. 
    \end{example}
   
    In Example \ref{ex:dual}, the first component $D_{2\theta}^*$ reflects and scales the original family of distortions $(D_\theta)_{\theta\in[0,1)}$ about the identity line. This enables the associated distortion risk measure $\rho_\theta(S)$ to attain values on $(F_S^{-1+}(0),\mu_S]$. The second component $D_{2\theta-1}(p)$ builds on this by scaling and translating the original distortions so that $\rho_\theta(S)$ can attain values greater than $\mu_S$.
    
    The following example complements the distortion function of TVaR by its dual. The resulting risk measure assigns more weights on both tails of the distribution, providing a balanced assessment of deviations from central outcomes.


\begin{example}
    Let $(D_\theta)_{\theta\in [0,1)}$ be a family of distortions defined by $D_\theta(p):=\frac{p}{1-\theta}\wedge 1$, for any $p\in [0,1]$ and $\theta\in[0,1)$. Following the definition in Example \ref{ex:dual}, for $p\in[0,1]$, define the family of distortions $(\tilde{D}_\theta)_{\theta\in(0,1)}$ by  
        \begin{equation*}
            \tilde{D}_\theta(p) := \begin{dcases}
                \frac{1}{2\theta}\left(p-(1-2\theta)\right)_+ ,  &\text{if } \theta \in\left(0,\frac{1}{2}\right]; \\
                \min\left\{\frac{p}{2(1-\theta)},1 \right\} , &\text{if } \theta \in\left(\frac{1}{2},1 \right).
            \end{dcases}
        \end{equation*}
       Then, the conditions in Proposition \ref{pp:distortion} are satisfied. The family of 
       distortion risk measures $(\rho_\theta(S))_{\theta\in (0,1)}$ associated with $(\tilde{D}_\theta)_{\theta\in (0,1)}$ is given by 
            \begin{align*}
                \rho_\theta(S) = \begin{dcases}
          \frac{1}{2\theta}\int_0^{2\theta} F_S^{-1}(p)\,dp , &  \text{if } \theta \in\left(0,\frac{1}{2}\right];\\
                    \text{TVaR}_{2\theta-1}(S) , & \text{if } \theta \in\left(\frac{1}{2},1\right).
                \end{dcases}
            \end{align*}
\end{example}

\subsubsection{Euler principle based on distortion risk measures}\label{sec:euler_distortion}    
Resuming the discussion in Section \ref{sec:euler}, this subsection discusses the risk-sharing rules induced by the Euler capital allocation principle, when the aggregate capitals are given by a family of distortion risk measures on the aggregate risk. We shall also discuss the connection of the comonotonicity of the induced risk-sharing rule with that of the CMRS rule in Proposition \ref{pp:distortion:euler:comono} below. 


Suppose that the associated distortion functions $(D_\theta)_{\theta\in I^\circ}$ of the risk measures $(\rho_\theta)_{\theta\in I^\circ}$ are left-continuous with respect to $p\in[0,1]$. By Theorem 6 in \cite{dhaene2012remarks}, for $\theta\in I$,
\begin{equation*}
    \rho_\theta(S) = \int_0^1\text{VaR}_{1-p}(S)\,dD_\theta(p). 
\end{equation*}
Under additional assumptions on the regularity of the conditional density functions of $X_i$, for $i=1,\dots,n$, given $S$ (see, e.g., Appendix A in \cite{TSANAKAS2003239}), the Gateaux derivative of VaR exists and admits the following expression: for $i=1,\dots,n$ and $p\in(0,1)$, 
    \begin{equation}
    \label{eq:var:derivative}
        \frac{d}{dt}\text{VaR}_p(S+tX_i)\bigg|_{t=0} = \mathbb{E}[X_i|S=\text{VaR}_p(S)].
    \end{equation}
Assume further that the following regularity condition holds: there exists $\varepsilon>0$ and a function $g:[0,1]\to[0,\infty)$ such that, for any $|t|\leq \varepsilon$ and almost all $p\in[0,1]$, and for $\theta\in I^o$,
    \begin{equation}
        \label{eq:ass:g:VaR}
       \left| \frac{\text{VaR}_{1-p}(S+tX_i)-\text{VaR}_{1-p}(S)}{t}\right| \leq g(p) \quad \text{and} \quad  \int_0^1g(p)\,dD_\theta(p) < \infty. 
    \end{equation}
By \eqref{eq:var:derivative} and the Dominated Convergence Theorem, the allocated capitals under the Euler principle  is given by, for $i=1,\dots,n$ and $\theta\in I$,
    \begin{equation}
        \label{eq:allocation:distortion}
        K_i(\theta) = \frac{d}{dt}\int_0^1 \text{VaR}_{1-p}(S+tX_i)\,dD_\theta(p)\bigg|_{t=0}  
        = \int_0^1 \mathbb{E}[X_i|S=\text{VaR}_{1-p}(S)] \,dD_\theta(p). 
        \end{equation}

Now, suppose that the family $(D_\theta)_{\theta\in (a,b)}$ satisfies the conditions in Proposition \ref{pp:distortion}. Then, the allocation  \eqref{eq:allocation:distortion} induces the following risk-sharing rule:
\begin{equation}
\label{eq:euler:distortion}
        H_i({\bf X}) = K_i(\Theta)=   \int_0^1 \mathbb{E}[X_i|S=\text{VaR}_{1-p}(S)] \,dD_\theta(p) \bigg|_{\theta=\Theta}, \ i=1,2,\dots,n.
    \end{equation}
 
\begin{example}
    \label{ex:distortion:elliptical}
    Let ${\bf X} \sim E_n(\boldsymbol{\mu},{\bf \Sigma},g)$ follows an elliptical distribution, and $(D_\theta)_{\theta\in I^\circ}$ be a family of left-continuous distortion functions that  satisfies the conditions in Proposition \ref{pp:distortion} and \eqref{eq:ass:g:VaR}. For any $s\in \mathcal{S}_{\bf X}$,
        \begin{equation*}
            \mathbb{E}[X_i|S=s] = \mu_i + \frac{\sigma_{i,S}}{\sigma_S^2}(s-\mu_S). 
        \end{equation*}
Using this and \eqref{eq:euler:distortion}, the risk-sharing rule induced by the Euler principle is given by: for $i=1,\dots,n$,
 \begin{align}
        \label{eq:elliptic:distortion}
            H_i({\bf X}) &=  \int_0^1\left( \mu_i + \frac{\sigma_{i,S}}{\sigma_S^2}\left(\text{VaR}_{1-p}(S)-\mu_S\right)\right)\,dD_\theta(p) \bigg|_{\theta=\Theta} \nonumber \\
            &= \mu_i + \frac{\sigma_{i,S}}{\sigma_S^2}\left( \int_0^1 \text{VaR}_{1-p}(S)\,dD_\theta(p)\bigg|_{\theta=\Theta} -\mu_S \right) \nonumber \\
            &= \mu_i + \frac{\sigma_{i,S}}{\sigma_S^2}\left(  \rho_\Theta(S) -\mu_S \right) \nonumber \\
            &=  \mu_i + \frac{\sigma_{i,S}}{\sigma_S^2}\left(  S -\mu_S \right).
        \end{align}

\end{example}

  From \eqref{eq:elliptic:distortion}, we see that the Euler principle always leads to the same quota-sharing rule when ${\bf X}$ follows an elliptical distribution and the aggregate capital is computed using a distortion risk measure on the aggregate risk. This risk-sharing rule is independent of the parameter sampling rule $\Theta$ in which the risk measure is evaluated. Compared to the quota-sharing rule in \eqref{eq:opt:square}, the participation coefficient on $S$ in \eqref{eq:elliptic:distortion} for the $i$-th participant  depends on the correlation between her own risk and the aggregate risk. This is a consequence of the very definition of the associated Euler principle, which computes the shared risk based on the marginal effect of $X_i$ on the risk measure of the aggregate risk $S$. 

  Another implication from \eqref{eq:elliptic:distortion} is that the induced risk-sharing rule is comonotonic if $\sigma_{i,S}\geq 0$ for all $i=1,\dots,n$. This observation can be generalized as follows.

  \begin{proposition}
  \label{pp:distortion:euler:comono}
      Suppose that ${\bf X}$ admits a joint density function on $\mathbb{R}^n$, and $S$ admits a strictly increasing distribution function.\footnote{This condition can be weakened to those stated in  Appendix A of \cite{TSANAKAS2003239}, along with the left-continuity of the mapping $s\mapsto\mathbb{E}[X_i|S=s]$ for all $i=1,\dots,n$. These conditions, along with the non-decreasing assumption of $s\mapsto \mathbb{E}[X_i|S=s]$, ensure that (i) the Gateaux derivative of VaR is given by the conditional expectation, and (ii) the mapping $p\mapsto \mathbb{E}[X_i|S=\text{VaR}_p(S)]$ is left-continuous.}  Let $(D_\theta)_{\theta\in I^\circ}$ be a family of distortion functions such that 
        \begin{enumerate}[label=(\alph*)]
        \item Condition \eqref{eq:ass:g:VaR} is satisfied for $\theta\in I^\circ$;
            \item $p\in[0,1]\mapsto D_\theta(p)$ is left-continuous for all $\theta\in I^\circ$;
            \item $\theta\in I^\circ\mapsto D_\theta(p)$ is non-decreasing for all $p\in[0,1]$;
            \item the aggregate capital function $K(\theta)=\rho_\theta(S)$ is measurably right-invertible. 
        \end{enumerate}
    Assume further that   the mapping $s\mapsto \mathbb{E}[X_i|S=s]$ is non-decreasing for all $i=1,\dots,n$. Then, the risk-sharing rule induced by the Euler principle with respect to the family of distortion risk measures $(\rho_\theta)_{\theta\in I}$ associated with $(D_\theta)_{\theta\in I^\circ}$ is comonotonic. 
  \end{proposition}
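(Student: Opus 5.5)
The argument rests on the representation \eqref{eq:euler:distortion}. Under (a) and (b), together with the density assumptions, \eqref{eq:var:derivative} holds, and we may write
\begin{equation*}
K_i(\theta)=\int_0^1 \psi_i(p)\,dD_\theta(p), \qquad \psi_i(p):=\mathbb{E}[X_i\mid S=\text{VaR}_{1-p}(S)] .
\end{equation*}
The plan is to show that $\theta\mapsto K_i(\theta)$ is non-decreasing on $I^\circ$ for every $i$. The induced rule is $H_i({\bf X})=K_i(\Theta)$ with $\Theta=K^{-1+}(S)$, and all $n$ components are then non-decreasing functions of the single random variable $\Theta$. This is enough for comonotonicity, as noted in Section~\ref{sec:math}. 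Monotonicity of the right-inverse $K^{-1+}$ itself is not needed. It is convenient nonetheless: if $K$ is non-decreasing, as in Proposition~\ref{pp:distortion}, one can take $K^{-1+}$ non-decreasing, and then $H_i$ is non-decreasing in $S$.

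The first step is to note that $\psi_i$ is non-increasing in $p$. Indeed, $p\mapsto \text{VaR}_{1-p}(S)$ is non-increasing, and $s\mapsto\mathbb{E}[X_i\mid S=s]$ is non-decreasing by assumption. Since $F_S$ is strictly increasing, the quantile function is continuous, so $\psi_i$ is also left-continuous, and in fact continuous wherever the conditional mean is. Next, rewrite the Stieltjes integral in layer-cake form. For $p\in[0,1]$ set $D_\theta(p)=\mathbb{P}_\theta(U_\theta\le p)$, where $U_\theta$ has distribution function $D_\theta$. The function $D_\theta$ is left-continuous and is a distribution function on $[0,1]$ with $D_\theta(0)=0$, $D_\theta(1)=1$. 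The left-continuity means the atom placement must be handled, but the standard identity is still available. Integrating by parts, or using the layer-cake formula for the non-increasing integrand $\psi_i$, gives
\begin{equation*}
K_i(\theta)=\psi_i(1)+\int_0^1 D_\theta(p)\,d(-\psi_i)(p),
\end{equation*}
possibly after splitting $\psi_i$ into positive and negative parts and adjusting for the boundary terms at $0$ and $1$.

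Since $-\psi_i$ is non-decreasing, $d(-\psi_i)$ is a non-negative measure. By (c), $D_\theta(p)$ is non-decreasing in $\theta$ for every $p$. Hence the integrand increases pointwise in $\theta$, and $K_i(\theta)$ is non-decreasing. In effect this is first-order stochastic dominance of the distortion laws: larger $\theta$ puts more mass on small $p$, which is where $\psi_i$ is large. Finiteness of all terms is guaranteed by \eqref{eq:ass:g:VaR}, which allows the differentiation under the integral sign.

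The main obstacle is the integration-by-parts step when $\psi_i$ is unbounded near $p=0$ or $p=1$, where $\text{VaR}$ may tend to $\pm\infty$, and when $D_\theta$ and $\psi_i$ could share discontinuities. I would handle this by truncating to $[\varepsilon,1-\varepsilon]$, where everything is bounded. On this interval, left-continuity of $D_\theta$ together with continuity of the quantile map justifies the formula. I would then pass to the limit using the integrability bound $g$ in \eqref{eq:ass:g:VaR} and monotone or dominated convergence. A cleaner alternative avoids integration by parts entirely: write $K_i(\theta)=\mathbb{E}[\psi_i(U_\theta)]$ and realise $U_\theta=D_\theta^{-1}(V)$ with a common uniform $V$. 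By (c), $D_{\theta}^{-1}(v)$ is non-increasing in $\theta$, so $\psi_i(U_\theta)$ is pointwise non-decreasing in $\theta$, and taking expectations yields the monotonicity directly. I would present this coupling version and relegate the integrability checks to \eqref{eq:ass:g:VaR}.
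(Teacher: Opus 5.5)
Your proposal is correct. Its outer structure matches the paper's: show that each $\theta\mapsto K_i(\theta)$ is non-decreasing, so ${\bf K}(\Theta)$ consists of non-decreasing functions of the single variable $\Theta$. The difference is in how you get monotonicity of $K_i$.

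The paper notes that $p\mapsto\mathbb{E}[X_i\mid S=\text{VaR}_p(S)]$ is non-decreasing and left-continuous. It is therefore the quantile function of an auxiliary risk $\tilde{X}_i$, which gives $K_i(\theta)=\int_0^1\text{VaR}_{1-p}(\tilde{X}_i)\,dD_\theta(p)=\rho_\theta(\tilde{X}_i)$. The paper then simply reapplies the first part of Proposition~\ref{pp:distortion} with $\tilde{X}_i$ in place of $S$.

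Your integration-by-parts variant is the same mechanism done by hand. The Choquet form \eqref{eq:distortion} is a layer-cake formula in which $D_\theta$ enters pointwise. Your coupling $U_\theta=D_\theta^{-1}(V)$ is genuinely different. Condition (c) gives $D^{-1}_{\theta_2}\le D^{-1}_{\theta_1}$ for $\theta_1\le\theta_2$. Hence $\psi_i(D^{-1}_{\theta_1}(V))\le\psi_i(D^{-1}_{\theta_2}(V))$ pointwise for any measurable non-increasing $\psi_i$, and integrability comes from (a). So you need neither the identification of $\psi_i$ as a quantile function nor the boundary and common-jump bookkeeping you were worried about. That identification is exactly why the paper's footnote requires left-continuity of the conditional mean.

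The paper's route buys brevity and an interpretation. $\tilde{X}_i$ is distributed as the CMRS share $\mathbb{E}[X_i\mid S]$, so each Euler-induced share is $\rho_\theta$ of the CMRS share evaluated at $\theta=\Theta$. This makes the link to CMRS comonotonicity transparent.

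Two slips, neither of which your coupling argument depends on:
\begin{itemize}
\item Continuity of the quantile function of $S$ does not by itself make $\psi_i$ left-continuous; that would need right-continuity of $s\mapsto\mathbb{E}[X_i\mid S=s]$.
\item A left-continuous $D_\theta$ is $p\mapsto\mathbb{P}(U_\theta<p)$, not $\mathbb{P}(U_\theta\le p)$. The generalized inverse $D_\theta^{-1}$ is the same for both, so nothing breaks.
\end{itemize}
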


\begin{proof}
    See Appendix \ref{sec:app:pf:PP43}. 
\end{proof}

The condition $s\mapsto \mathbb{E}[X_i|S=s]$ is non-decreasing for all $i=1,\dots,n$, is well-documented in literature (see, e.g.~\cite{DENUIT2012265}), which ensures that the shared risk under the CMRS rule is comonotonic. In particular, for elliptical distributions, this condition holds if and only if $\sigma_{i,S} \geq 0$ for all $i=1,\dots,n$, which verifies the finding in Example \ref{ex:distortion:elliptical}.


\section{Randomization of Bottom-Up Principles}
\label{sec:bottom-up}

In bottom-up capital allocation principles, the aggregate capitals are determined endogenously. For a given family of bottom-up principles $\mathbf{K}(\theta)$, for $\theta\in I$, we can represent the aggregate capital function as $K(\theta)=\rho_\theta({\bf X})$, where $\rho_\theta(\cdot)$ is determined by the underlying allocation mechanism in the parametric family. In general, $\rho_\theta(\cdot)$ sometimes depends on the entire joint distribution of ${\bf X}$, rather than only on the aggregate risk $S$. In this section, we shall illustrate the randomization of the bottom-up capital allocation principles, to induce the corresponding risk-sharing rules, for two principles, namely, the weighted risk allocation principle, and the holistic principle.

\subsection{Weighted Risk Allocation Principle}
\label{sec:weighted:risk}
Throughout this section, assume that $S\geq 0$ a.s.. The weighted risk allocation principle, introduced in \cite{FURMAN2008263}, is motivated by the weighted premium calculation principles in \cite{FURMAN2008459}. This principle allocates capital according to the weighted expected risks of the LOBs.

For $\theta \in I^\circ:=(a,b)$, where $-\infty\leq a<b\leq +\infty$, let $w_\theta: [0,\infty) \to [0,\infty)$ be a weighting function. The weighted risk capital allocation is defined by: for $\theta \in I^\circ$, and for $i = 1, \dots, n$,
\begin{equation*}
    K_i(\theta) = \frac{\mathbb{E}[X_i w_\theta(S)]}{\mathbb{E}[w_\theta(S)]},
\end{equation*}
in which we assumed that $\mathbb{E}[w_\theta(S)]\neq 0$. As a bottom-up principle, the weighted risk capital allocation principle does not allocate an exogenous aggregate capital. Instead, the aggregate capital is endogenously obtained by summing the individual allocations up; i.e., for $\theta\in (a,b)$,
    \begin{equation}
    \label{eq:K:weighted}
        K(\theta) = \sum_{i=1}^n K_i(\theta) = \frac{\mathbb{E}[Sw_\theta(S)]}{\mathbb{E}[w_\theta(S)]},
    \end{equation}
which depends only on the distribution of $S$; also, define $K(a):=\lim_{\theta\downarrow a}K(\theta)$ and $K(b):=\lim_{\theta\uparrow b}K(\theta)$.

The weighted risk capital allocation principle encompasses several well-known allocation principles. It covers the Euler allocation under distortion risk measures when $w_\theta(s) = D_\theta'(\bar{F}_S(s))$, for $s\geq 0$, with a distortion function $D_\theta$, and it generates the Conditional Tail Expectation (CTE) allocation when $w_\theta(s) = \mathbbm{1}_{\{s \ge \mathrm{VaR}_\theta(S)\}}$, for $s\geq 0$.

Suppose that $K(\cdot)$ in \eqref{eq:K:weighted} is measurably right-invertible, and thus there exists a random variable $\Theta$ taking values in $I$ such that $K(\Theta)=S$.
Define an $\left(\mathbf{K},\Theta\right)$-induced risk-sharing rule by 
    \begin{equation}
    \label{eq:weighted:risk:sharing}
        H_i({\bf X}) := K_i(\Theta) =  \frac{\mathbb{E}[X_i w_\theta(S)]}{\mathbb{E}[w_\theta(S)]}\bigg|_{\theta=\Theta}.
    \end{equation}
In the following, we provide conditions on the family of the weighting functions so that $K(\cdot)$ in \eqref{eq:K:weighted} is measurably right-invertible. We begin by the following lemma. 

\begin{lemma}
\label{lem:K:monotonic:weighted}
Suppose that the family $(w_\theta(\cdot))_{\theta\in I^\circ}$ satisfies the monotone likelihood ratio property: for any $s_1,s_2\in \mathcal{S}_{{\bf X}}$ with $s_2\geq s_1$, and $\theta_1,\theta_2 \in I^\circ$ with $\theta_2\geq \theta_1$, 
 \begin{equation}
        \label{eq:cond:K:weighted:w}
        \frac{w_{\theta_2}(s_2)}{w_{\theta_1}(s_2)} \geq \frac{w_{\theta_2}(s_1)}{w_{\theta_1}(s_1)}. 
    \end{equation}
 Then, for any non-decreasing function $g(\cdot)$, the mapping 
    \begin{equation*}
        \theta\mapsto \frac{\mathbb{E}[g(S)w_\theta(S)]}{\mathbb{E}[w_\theta(S)]}
    \end{equation*}
is non-decreasing for those $\theta\in I^\circ$ such that $\mathbb{E}[|g(S)|w_\theta(S)]<\infty$ and $0<\mathbb{E}[w_\theta(S)]<\infty$. In particular, $K(\theta)$ in \eqref{eq:K:weighted} is non-decreasing in $\theta\in I^\circ$.
\end{lemma}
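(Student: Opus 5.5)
Fix $\theta_1\le\theta_2$ in $I^\circ$ for which the integrability assumptions hold. The plan is to compare $\mathbb{E}[g(S)w_{\theta}(S)]/\mathbb{E}[w_{\theta}(S)]$ at $\theta=\theta_1$ and $\theta=\theta_2$ with a symmetrization (Chebyshev-type) argument. Let $S'$ be an independent copy of $S$. Clearing the positive denominators, the desired inequality
\begin{equation*}
\frac{\mathbb{E}[g(S)w_{\theta_2}(S)]}{\mathbb{E}[w_{\theta_2}(S)]}\ge\frac{\mathbb{E}[g(S)w_{\theta_1}(S)]}{\mathbb{E}[w_{\theta_1}(S)]}
\end{equation*}
is equivalent to $\mathbb{E}[g(S)w_{\theta_2}(S)]\,\mathbb{E}[w_{\theta_1}(S')]-\mathbb{E}[g(S)w_{\theta_1}(S)]\,\mathbb{E}[w_{\theta_2}(S')]\ge 0$. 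All products involved are finite by the assumptions $\mathbb{E}[|g(S)|w_\theta(S)]<\infty$ and $0<\mathbb{E}[w_\theta(S)]<\infty$.

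By independence, the left-hand side equals $\mathbb{E}\big[g(S)\big(w_{\theta_2}(S)w_{\theta_1}(S')-w_{\theta_1}(S)w_{\theta_2}(S')\big)\big]$. Swapping the roles of $S$ and $S'$ gives the same value, so averaging the two expressions yields
\begin{equation*}
\tfrac12\,\mathbb{E}\Big[\big(g(S)-g(S')\big)\big(w_{\theta_2}(S)w_{\theta_1}(S')-w_{\theta_1}(S)w_{\theta_2}(S')\big)\Big].
\end{equation*}
The key step is to show that the integrand is pointwise non-negative on $\mathcal{S}_{\bf X}\times\mathcal{S}_{\bf X}$. If $s\ge s'$, then $g(s)-g(s')\ge0$ because $g$ is non-decreasing. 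Cross-multiplying \eqref{eq:cond:K:weighted:w} with $s_2=s$, $s_1=s'$ gives $w_{\theta_2}(s)w_{\theta_1}(s')\ge w_{\theta_2}(s')w_{\theta_1}(s)$. The case $s<s'$ is symmetric, with both factors non-positive.

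The main obstacle is the possibility of zero weights, since the likelihood-ratio condition is stated as a ratio. I would interpret \eqref{eq:cond:K:weighted:w} in its cross-multiplied form, which is the standard meaning of the monotone likelihood ratio property and is what the argument actually uses. Alternatively, one can note that points where $w_{\theta_1}$ vanishes must be handled by this convention anyway. With the cross-multiplied form in hand, non-negativity of the integrand holds everywhere on the support.

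Integrability of the symmetrized integrand follows from the finiteness assumptions: each product term, such as $|g(S)|w_{\theta_2}(S)w_{\theta_1}(S')$, has finite expectation by independence. So the expectation is well defined and non-negative, which proves the inequality. Finally, taking $g(s)=s$ gives the statement for $K(\theta)$ in \eqref{eq:K:weighted}, since $K(\theta)=\mathbb{E}[Sw_\theta(S)]/\mathbb{E}[w_\theta(S)]$.
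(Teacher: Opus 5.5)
Your proof is correct and is essentially the paper's own argument: the paper writes the same Chebyshev-type symmetrization as a double integral against $dF_S(s_2)\,dF_S(s_1)$, which plays the role of your independent copy $S'$, and concludes from \eqref{eq:cond:K:weighted:w} and the monotonicity of $g$. Using the cross-multiplied form of \eqref{eq:cond:K:weighted:w} is a small improvement, because the paper's factorization through $w_{\theta_1}(s_1)w_{\theta_1}(s_2)$ tacitly assumes the weights are strictly positive; you also check integrability of the symmetrized integrand, which the paper leaves implicit.
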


\begin{proof}
    See Appendix \ref{sec:app:pf:lem51}. 
        \end{proof}

Define the effective domain $(a,b^*) \subseteq I$ by,  
    \begin{equation*}
        b^* := \sup\{\theta\in (a,b) : K(\theta)<\infty \}.
    \end{equation*}
When \eqref{eq:cond:K:weighted:w} holds, by Lemma \ref{lem:K:monotonic:weighted}, $K(\theta)<\infty$ for all $\theta \in (a,b^*)$.


\begin{proposition}  
\label{pp:weighted:surjective}
      Suppose that the family $(w_\theta(\cdot))_{\theta \in I^\circ}$ satisfies \eqref{eq:cond:K:weighted:w}, and  for any $s > F_S^{-1+}(0)$ and $s'< F_S^{-1}(1)$,
        \begin{equation}
        \label{eq:cond:w:limit}
         \lim_{\theta\to a^+} \frac{\mathbb{E}[w_\theta(S) \mathbbm{1}_{\{S > s\}
         } ]}{\mathbb{E}[w_\theta(S) \mathbbm{1}_{\{S \leq  s\}
         } ]}   = \lim_{\theta\to b^-} \frac{\mathbb{E}[w_\theta(S)\mathbbm{1}_{\{S \leq s' \} }]}{\mathbb{E}[w_\theta(S)\mathbbm{1}_{\{S >  s' \} }]}= 0. 
        \end{equation}
    Then, $\lim_{\theta\downarrow a}K(\theta)=F_S^{-1+}(0)$ and $\lim_{\theta\uparrow b}K(\theta) = F_S^{-1}(1)$.\footnote{If $F_S^{-1}(1)=\infty$, then $K(\theta)=\infty$ for $\theta\in[b^*,b)$.} 
    
    In addition, if for any $s\in (F_S^{-1+}(0),F_S^{-1}(1))$, $\theta \in I^\circ \mapsto w_\theta(s)$ is continuous, and for any open interval $J\subseteq (a,b^*)$, there exists a random variable $W_J\geq 0$ such that for any $\theta\in J$,
    \begin{equation}
    \label{eq:cond:w:integrability}
        |w_\theta(S)| \leq W_J,\  \mathbb{E}[SW_J]<\infty,   \text{ and }  0<\mathbb{E}[W_J]< \infty. 
    \end{equation}
     Then, $K:(a,b^*)\to\mathbb{R}$ is continuous and surjective onto $(F_S^{-1+}(0),F_S^{-1}(1))$.
    \end{proposition}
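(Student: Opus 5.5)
The proof has two parts: the endpoint limits, then continuity followed by surjectivity.

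\emph{Endpoint limits.} Write $m:=F_S^{-1+}(0)$ and $M:=F_S^{-1}(1)$, and let $\mathbb{Q}_\theta$ be the tilted law of $S$ with density $w_\theta(S)/\mathbb{E}[w_\theta(S)]$, so that $K(\theta)=\mathbb{E}^{\mathbb{Q}_\theta}[S]$.

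For the limit at $a$, fix any $s>m$. Since $S\ge m$ a.s. and $S\ge 0$, split
\[
K(\theta)-m=\frac{\mathbb{E}[(S-m)w_\theta(S)\mathbbm{1}_{\{S\le s\}}]+\mathbb{E}[(S-m)w_\theta(S)\mathbbm{1}_{\{S>s\}}]}{\mathbb{E}[w_\theta(S)]}.
\]
The first term is at most $s-m$. For the second, the ratio condition \eqref{eq:cond:w:limit} gives $\mathbb{Q}_\theta(S>s)\to 0$ as $\theta\downarrow a$. Controlling the size of $S$ on $\{S>s\}$ is delicate when $S$ is unbounded. To handle it, I would use Lemma \ref{lem:K:monotonic:weighted}: for a fixed $\theta_0$ with $K(\theta_0)<\infty$, the function $g(S)=(S-m)\mathbbm{1}_{\{S>s\}}$ is non-decreasing in $S$. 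Combined with the MLR property, $\mathbb{E}^{\mathbb{Q}_\theta}[g(S)\mid S>s]$ is non-decreasing in $\theta$, so it is bounded for $\theta\le\theta_0$. The key observation is that conditioning on $\{S>s\}$ preserves the monotone likelihood ratio, so the lemma applies to the conditional tilted laws. Hence the second term is at most $\mathbb{Q}_\theta(S>s)\cdot C\to 0$. Letting $s\downarrow m$ gives $K(\theta)\to m$; if $m=-\infty$ the analogous bound drives $K$ below any level.

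The limit at $b$ is symmetric. For $s'<M$, $K(\theta)\ge$ (lower bound on $\{S\le s'\}$) $\cdot\,\mathbb{Q}_\theta(S\le s')+s'\,\mathbb{Q}_\theta(S>s')$. The lower bound on $\{S\le s'\}$ is $0$, since $S\ge0$. Since $\mathbb{Q}_\theta(S\le s')\to 0$ as $\theta\uparrow b$, we get $\liminf K(\theta)\ge s'$. Also $K\le M$ trivially, and letting $s'\uparrow M$ finishes this limit. When $M=\infty$, this shows $K\to\infty$, consistent with the footnote.

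\emph{Continuity and surjectivity.} On any open $J\subseteq(a,b^*)$, continuity of $\theta\mapsto w_\theta(s)$, the bound $|w_\theta(S)|\le W_J$, and integrability of $SW_J$ and $W_J$ let dominated convergence apply to both $\mathbb{E}[Sw_\theta(S)]$ and $\mathbb{E}[w_\theta(S)]$. The denominator stays positive, so $K$ is continuous on $J$, and hence on $(a,b^*)$. (Continuity of $w_\theta$ is only given for $s$ in the open interval; atoms of $S$ at $m$ or $M$ would need a brief separate remark.)

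$K$ is non-decreasing and continuous by Lemma \ref{lem:K:monotonic:weighted}, with limits $m$ at $a$ and $M$ at $b^*$. The second limit follows because, if $b^*<b$, then $K=\infty$ beyond $b^*$ forces $M=\infty$, and monotone continuity near $b^*$ combined with the first part gives the limit. The intermediate value theorem then yields surjectivity onto $(m,M)$.

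\emph{Main obstacle.} I expect the main obstacle to be the tail control in the endpoint limits. Converting the probability-ratio condition into convergence of the weighted mean requires the uniform conditional bound obtained from the monotone likelihood ratio. A second difficulty is reconciling the behaviour near $b^*$ with the limit at $b$.
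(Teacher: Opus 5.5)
Your limit at $b$ (the lower bound $K(\theta)\ge s'\,\mathbb{Q}_\theta(S>s')$ using $S\ge 0$) and your continuity argument (dominated convergence on $J$) are the same as the paper's. For the limit at $a$ you take a genuinely different and shorter route.

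\emph{The paper's route.} It first shows, directly from the MLR inequality after dividing by $w_\theta(s)$, that $\theta\mapsto \mathbb{E}[Sw_\theta(S)\mathbbm{1}_{\{S>s\}}]/\mathbb{E}[w_\theta(S)\mathbbm{1}_{\{S\le s\}}]$ is non-decreasing. It then uses $K(\theta^*)<\infty$ to choose a truncation level $N^*$ that makes this ratio smaller than $\varepsilon$ uniformly in $\theta\le\theta^*$. Finally it splits $K(\theta)$ over $\{S\le s\}$, $\{s<S\le s+N^*\}$ and $\{S>s+N^*\}$.

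\emph{Your route.} You apply Lemma \ref{lem:K:monotonic:weighted} to the law of $S$ restricted to $\{S>s\}$. This is legitimate, since the lemma's symmetrization proof works for any base measure on which the MLR inequality holds. It gives $\mathbb{E}^{\mathbb{Q}_\theta}[S\mid S>s]\le\mathbb{E}^{\mathbb{Q}_{\theta_0}}[S\mid S>s]<\infty$ for $\theta\le\theta_0$, so only $\mathbb{Q}_\theta(S>s)\to0$ is needed.

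Your version removes the truncation level entirely; the paper's avoids re-running the lemma on a conditional law. Your remark about $m=-\infty$ is vacuous, since $S\ge0$.

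\textbf{The gap.} The genuine gap is the last step, $\lim_{\theta\uparrow b^*}K(\theta)=F_S^{-1}(1)$ when $b^*<b$. ``Monotone continuity near $b^*$ combined with the first part'' does not give it. Continuity is only known on the open interval $(a,b^*)$, and the fact that $K=\infty$ on $(b^*,b)$ carries no information about the left limit at $b^*$. That left limit can be finite.

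\emph{A counterexample.} Let $S$ have density proportional to $s^{-3}(\log s)^{-2}$ on $(e,\infty)$, so $\mathbb{E}[S^2]<\infty$ but $\mathbb{E}[S^{2+\epsilon}]=\infty$ for every $\epsilon>0$. Take $I^\circ=\mathbb{R}$, with $w_\theta(s)=s^\theta$ for $\theta\le 1$ and $w_\theta(s)=s\,e^{(\theta-1)\sqrt{\log s}}$ for $\theta>1$.
\begin{itemize}
\item The MLR property, both limits in \eqref{eq:cond:w:limit}, continuity in $\theta$, and \eqref{eq:cond:w:integrability} with $W_J=S$ all hold.
\item $\mathbb{E}[w_\theta(S)]<\infty$ for every $\theta$.
\item $\mathbb{E}[Sw_\theta(S)]=\infty$ exactly for $\theta>1$, so $b^*=1$.
\item Since $K$ is non-decreasing, $K\le K(1)=\mathbb{E}[S^2]/\mathbb{E}[S]<\infty$ on $(-\infty,1)$, so $K$ is not onto $(e,\infty)$.
\end{itemize}

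\emph{Relation to the paper.} The paper's proof makes the same leap: it invokes the intermediate value theorem on $(a,b^*)$ using the limits at $a$ and $b$. So this is a defect of the statement, not something you miss relative to the paper. An extra hypothesis is needed, such as $b^*=b$ (automatic when $F_S^{-1}(1)<\infty$, as you noted) or $K(\theta)\to\infty$ as $\theta\uparrow b^*$. Under it your proof is complete.

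Your caveat about atoms of $S$ at the endpoints of the support, where continuity of $\theta\mapsto w_\theta(s)$ is not assumed, is also a real omission shared by the paper. A jump of $w_\theta(m)$ at an atom $m$ can make $K$ discontinuous.
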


\begin{proof}
    See Appendix \ref{sec:app:pf:PP51}. 
\end{proof}

By Proposition \ref{pp:K:surjective}, when the conditions in Lemma \ref{lem:K:monotonic:weighted} and Proposition \ref{pp:weighted:surjective} are satisfied, $K(\cdot)$ in \eqref{eq:K:weighted} is measurably right-invertible, and thus \eqref{eq:weighted:risk:sharing} is a legitimate $\left(\mathbf{K},\Theta\right)$-induced risk-sharing rule. We provide some examples for the weighting functions $(w_\theta(\cdot))_{\theta\in I^\circ}$ that verify these conditions.

\begin{example}[Size-biased allocation]
\label{ex:size:biased}
    Let $I= \mathbb{R}$  and consider, for $\theta\in I$, $w_\theta(s) = s^\theta$, for $s\geq 0$. Assume that there exists $p>0$ such that  $0<\mathbb{E}[S^\theta]<\infty$ for all $|\theta|\leq p$. Note that this assumption implies \eqref{eq:cond:w:integrability}. In addition, \eqref{eq:cond:K:weighted:w} is clearly satisfied, and $\theta\mapsto w_\theta(s)$ is continuous for all $s\geq 0$.  

    Next, for any $s < F_S^{-1}(1)$, there exists $\varepsilon>0$ such that 
    $s+\varepsilon < F_S^{-1}(1)$. Hence, $\mathbb{P}(S>s+\varepsilon)>0$, and for $\theta>0$,
        \begin{align*}
            \frac{\mathbb{E}[S^\theta\mathbbm{1}_{\{S \leq s \} } ]}{\mathbb{E}[S^\theta\mathbbm{1}_{\{S > s \} } ]} \leq  \frac{s^\theta}{\mathbb{E}[S^\theta\mathbbm{1}_{\{S > s +\varepsilon \} } ]} \leq \left(\frac{s}{s+\varepsilon}\right)^\theta \frac{1}{\mathbb{P}(S>s+\varepsilon)} \to 0
        \end{align*}
    as $\theta \to +\infty$.
    
    Likewise, for any  $s>F_S^{-1+}(0)$, there exists $\varepsilon>0$ such that $s-\varepsilon>F_S^{-1+}(0)$. Hence, $\mathbb{P}(S\leq s-\varepsilon)>0$, and  for $\theta<0$, 
        \begin{equation*}
           \frac{\mathbb{E}[S^\theta\mathbbm{1}_{\{S > s \} } ]}{\mathbb{E}[S^\theta\mathbbm{1}_{\{S \leq s \} } ]}  \leq \frac{s^\theta}{\mathbb{E}[S^\theta \mathbbm{1}_{\{S \leq s-\varepsilon \}}]} \leq \left(\frac{s}{s-\varepsilon} \right)^\theta \frac{1}{\mathbb{P}(S\leq s-\varepsilon)} \to 0
        \end{equation*}
    as $\theta \to -\infty$. Therefore, \eqref{eq:cond:w:limit} is also satisfied.

\end{example}

\begin{example}[Esscher's transform]
    Let $I=\mathbb{R}$ and  consider, for $\theta\in I$, $(w_\theta(\cdot))_{\theta\in I^\circ}$, for $s\geq 0$. Assume that there exists $p>0$ such that $\mathbb{E}[S e^{\theta S}]<\infty$ and $0<\mathbb{E}[e^{\theta S}] < \infty$ for any $|\theta| \leq p$. The conditions \eqref{eq:cond:K:weighted:w}, \eqref{eq:cond:w:limit}, and \eqref{eq:cond:w:integrability} can be verified in a similar manner as in Example \ref{ex:size:biased}. 
\end{example}

The weighted risk allocation principle can be characterized by the solution to the following risk-weighted minimization problem: for $\theta\in I$ and $i=1,\dots,n$,
\begin{equation*}
    K_i(\theta) = \mathop{\arg\min}_{K_i\in\mathbb{R}} \mathbb{E}\Big[ (X_i-K_i)^2 w_\theta(S) \Big]. 
\end{equation*}
Consequently, the induced risk-sharing rule \eqref{eq:weighted:risk:sharing} inherits this property in a scenario-wise manner, analogous to the optimization principle discussed in Section \ref{sec:optim}. However, unlike the top-down principles considered therein, the aggregate capital function and the random parameter $\Theta$ are determined endogenously by the allocation rule via \eqref{eq:K:weighted} in this case.

Similar to the risk-sharing rules induced by the Euler principle based on distortion risk measures, the comonotonicity of the shared risk \eqref{eq:weighted:risk:sharing} is connected to the comonotonicity of the CMRS rule. 

\begin{proposition}
    \label{pp:comono:weighted:risk}
    Suppose that the family of the weighting functions $(w_\theta(\cdot))_{\theta\in I^\circ}$ satisfies \eqref{eq:cond:K:weighted:w}, and the mapping $s\mapsto \mathbb{E}[X_i|S=s]$ is non-decreasing for all $i=1,\dots,n$. Then, the induced risk-sharing rule \eqref{eq:weighted:risk:sharing}, whenever it is well-defined, is comonotonic.
\end{proposition}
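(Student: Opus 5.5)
The plan is to show that each $H_i({\bf X})=K_i(\Theta)$ is a non-decreasing function of $S$. Comonotonicity of the vector $(H_1({\bf X}),\dots,H_n({\bf X}))$ then follows immediately, since every component is a non-decreasing function of the common random variable $S$. Following the inheritance principle of Section \ref{sec:math}, this splits into two steps: (1) show that $\theta\mapsto K_i(\theta)$ is non-decreasing for every $i$; (2) show that the parameter sampling rule $\Theta=K^{-1+}(S)$ can be taken to be a non-decreasing function of $S$.

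For step (1), condition on $S$ and use the tower property to write
\begin{equation*}
K_i(\theta)=\frac{\mathbb{E}[X_i w_\theta(S)]}{\mathbb{E}[w_\theta(S)]}=\frac{\mathbb{E}[g_i(S)w_\theta(S)]}{\mathbb{E}[w_\theta(S)]},\qquad g_i(s):=\mathbb{E}[X_i|S=s].
\end{equation*}
Here $g_i$ is non-decreasing by assumption. Lemma \ref{lem:K:monotonic:weighted}, applied with $g=g_i$, then yields that $\theta\mapsto K_i(\theta)$ is non-decreasing on the set of $\theta$ where the rule is well-defined, that is, where $\mathbb{E}[|g_i(S)|w_\theta(S)]<\infty$ and $0<\mathbb{E}[w_\theta(S)]<\infty$. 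The integrability needed for the lemma follows from $\mathbb{E}[|X_i|w_\theta(S)]<\infty$ via Jensen's inequality, which gives $|g_i(S)|\le \mathbb{E}[|X_i|\,|S]$. This is part of what ``whenever it is well-defined'' should mean. At the endpoints $a,b$ the values are defined as limits, so monotonicity extends to them.

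Step (2) is where I expect the main subtlety. By Lemma \ref{lem:K:monotonic:weighted}, the aggregate function $K(\cdot)$ is itself non-decreasing, but it need not be injective. An arbitrary measurable right-inverse could therefore fail to be monotone: on a flat piece of $K$ one could pick parameters erratically. Two remedies are available. The first is to argue that the choice does not matter: if $K(\theta_1)=K(\theta_2)$ with $\theta_1<\theta_2$, then each $K_i$ is non-decreasing on $[\theta_1,\theta_2]$ and $\sum_i K_i=K$ is constant there, so every $K_i$ must be constant on $[\theta_1,\theta_2]$. Hence $K_i(\Theta)$ depends only on $K(\Theta)=S$, and not on which right-inverse is selected. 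The second, confirming, remedy is to take the canonical choice $K^{-1+}(s)=\inf\{\theta:K(\theta)\ge s\}$ (or the one produced in Proposition \ref{pp:K:surjective}(ii)), which is non-decreasing in $s$.

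To conclude, for $s_1\le s_2$ in $\mathcal{S}_{\bf X}$ choose $\theta_j$ with $K(\theta_j)=s_j$. If $s_1<s_2$, monotonicity of $K$ forces $\theta_1<\theta_2$, so $K_i(\theta_1)\le K_i(\theta_2)$. If $s_1=s_2$, the constancy argument above gives $K_i(\theta_1)=K_i(\theta_2)$. Thus $H_i({\bf X})=\phi_i(S)$ with $\phi_i$ non-decreasing for all $i$, and the induced rule \eqref{eq:weighted:risk:sharing} is comonotonic.
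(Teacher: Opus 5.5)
Your proposal is correct, and your step (1) is exactly the paper's proof. The paper writes $K_i(\theta)=\mathbb{E}[g_i(S)w_\theta(S)]/\mathbb{E}[w_\theta(S)]$ with $g_i(s)=\mathbb{E}[X_i|S=s]$, applies Lemma \ref{lem:K:monotonic:weighted} with $g=g_i$, and stops. It can stop there because $H_1({\bf X}),\dots,H_n({\bf X})$ are then non-decreasing functions of the \emph{same} random variable $\Theta$. That is already comonotonicity, whatever the dependence of $\Theta$ on $S$ and whichever right-inverse was selected. So your step (2) is not needed for the statement as posed.

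What step (2) adds is a stronger, selection-independent conclusion. You observe that non-decreasing $K_i$'s whose sum agrees at $\theta_1<\theta_2$ must each agree there. Hence every $K_i$ is constant on the level sets of $K$, so $H_i({\bf X})=\phi_i(S)$ with $\phi_i$ non-decreasing. It follows that all measurable right-inverses induce the same rule, so the non-uniqueness discussed after Definition \ref{def:induce:risk:sharing} is harmless in this setting. Your Jensen remark on integrability is also a useful detail the paper leaves implicit.

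Two side claims in your ``confirming remedy'' are not justified, though neither is used in your final argument. First, the selector from Proposition \ref{pp:K:surjective}(ii) comes from the Kuratowski--Ryll-Nardzewski theorem and carries no monotonicity. Second, $\inf\{\theta:K(\theta)\ge s\}$ is an actual right-inverse only under a right-continuity property of $K$, which the statement does not assume.
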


\begin{proof}
   See Appendix \ref{sec:app:pf:PP52}. 
\end{proof}

\subsection{Holistic Principle}
\label{sec:holistic} 

In this subsection, we discuss the randomization of the holistic principle, focusing on the measurable right-invertibility of the aggregate function $\rho_\theta({\bf X})$, for $\theta\in I$, and the properties of the induced risk-sharing rule.   

The holistic capital allocation approach was introduced in \cite{chong2021holistic}, which aims to find the allocation $(K_i)_{i=1}^n$ and the aggregated capital $K$ simultaneously in one single optimization problem. By minimizing the squared errors, the $(n+1)$-tuple $(K_1,\dots,K_n,K)$ is obtained by solving:
    \begin{equation}
        \label{eq:holistic}
        \inf_{\substack{K_1,\dots,K_n\\ \sum_{i=1}^nK_i=K}}  \sum_{i=1}^n \gamma_i \mathbb{E}\left[ (X_i-K_i)^2 u_i(X_i)  \right] + \gamma \mathbb{E}\left[ (S-K)^2 u(S)  \right],
    \end{equation}
where $(\gamma_i)_{i=1}^n$, $\gamma>0$ are positive weights, and $(u_i(X_i))_{i=1}^n$, $u(S)$ are non-negative random variables such that $\mathbb{E}[u_i(X_i)]=1=\mathbb{E}[u(S)]$. Herein, we assumed that $Su(S)$, $S^2u(S), X_iu_i(X_i),X_i^2u_i(X_i)\in L^1(\Omega,\mathcal{F},\mathbb{P})$ for all $i=1,\dots,n$. The solution of \eqref{eq:holistic} is given by 
\begin{equation}
\label{eq:holistic:sol}
    \begin{aligned}
        K &= \rho(S)+ \beta\left( \sum_{j=1}^n \rho_j(X_j) -\rho(S) \right)  , \  \\
        K_i &=  \rho_i(X_i) -  \beta_i \left(\sum_{j=1}^n \rho_j(X_j) -\rho(S) \right), \ i=1,2,\dots,n,
    \end{aligned}
    \end{equation}
where 
    \begin{equation}
    \label{eq:holistic:sol:rho}
        \rho(S) := \mathbb{E}[Su(S)], \ \rho_i(X_i) := \mathbb{E}[X_iu_i(X_i)], \ \beta_i := \frac{\frac{1}{\gamma_i}}{\frac{1}{\gamma} + \sum_{j=1}^n \frac{1}{\gamma_j} }, \ \beta := \frac{\frac{1}{\gamma}}{\frac{1}{\gamma} + \sum_{j=1}^n \frac{1}{\gamma_j} } .
    \end{equation}

To deduce risk-sharing rules from the holistic capital allocation principle, we consider the following parametrized family of the optimization problems \eqref{eq:holistic}:
    \begin{equation}
          \label{eq:holistic:randomized}  {\bf K}(\theta) = \mathop{\arg\min}_{K_1,\dots,K_n}  \sum_{i=1}^n \gamma_i \mathbb{E}\left[ (X_i-K_i)^2 u_{i,\theta}(X_i)  \right] + \gamma \mathbb{E}\left[ \left(S-\sum_{j=1}^nK_j\right)^2 u_\theta(S)  \right],
    \end{equation}
where $\theta\in I$, and $(u_{i,\theta}\left(\cdot\right))_{i=1}^n$, $u_\theta\left(\cdot\right)$ are parametrized functions.  If the aggregate capital function $K\left(\cdot\right):=\sum_{i=1}^nK_i\left(\cdot\right) : I\to \mathbb{R}$ is measurably right-invertible on $\mathcal{S}_{\bf X}$, then there exists a random variable $\Theta$ taking values in $I$ such that $K(\Theta)=S$.

 Heuristically, this randomized holistic principle renders an
 allocation rule ${\bf K}(\theta)$ that minimizes the risk-weighted sum of mean-squared difference for each scenario $\theta\in I$. 
Unlike the randomized top-down principles discussed earlier where  $K(\theta)$, and thus the parameter sampling rule $\Theta$, are determined by the aggregated capital function via its inverse, the randomized holistic principle determines $\Theta$ based on the underlying allocation mechanism. As a result, the random parameter in this framework depends on the distribution of the entire risk vector ${\bf X}$ and the choice of the derivative functions in the optimization \eqref{eq:holistic:randomized}; see Example \ref{ex:holistic:elliptic} for an illustration.

Let $\rho_\theta(S):=\mathbb{E}[Su_\theta(S)]$ and $\rho_{i,\theta}(X_i) := \mathbb{E}[X_iu_{i,\theta}(X_i)]$, $i=1,2,\dots,n$, which are the parametrized version of $\rho(S)$ and $\rho_i(X_i)$ in \eqref{eq:holistic:sol:rho}, respectively.  To obtain a legitimate risk-sharing rule, surjectivity conditions of individual risk measures $\rho_{i,\theta}(X_i)$, $i=1,2,\dots,n$,  in addition to $\rho_\theta(S)$, with respect to $\theta$, shall be needed. Similar to Section~\ref{sec:aggregate:distortion}, we let $I = [a, b]$ with $-\infty\leq a<b\leq \infty$, and consider the mapping $\theta \in I^\circ = (a, b) \mapsto \rho_\theta(S)$.  We also define the boundary values by
\[
\rho_a(S) := \lim_{\theta \downarrow a} \rho_\theta(S), \quad \text{and} \quad \rho_b(S) := \lim_{\theta \uparrow b} \rho_\theta(S),
\]
which take values in $\mathbb{R}\cup\{\pm\infty\}$. The mappings $\theta \in I \mapsto \rho_{i,\theta}(X_i)$, for $i = 1, \dots, n$, are defined analogously.  The following result outlines sufficient conditions for the well-posedness of this randomization approach.


    \begin{proposition}
    \label{pp:holistic}
      Suppose that all of the following conditions hold:
        \begin{enumerate}[label=(\alph*)]
            \item $\theta\in I \mapsto \rho_\theta(S)$ is continuous on $I^\circ$ and surjective onto $[F_S^{-1+}(0),F_S^{-1}(1)]$;
            \item $\theta \in I\mapsto \rho_{i,\theta}(X_i) $ is continuous on $I^\circ$ and surjective onto $[F_{X_i}^{-1+}(0),F_{X_i}^{-1}(1)]$ for $i=1,\dots,n$;
            \item   $\lim_{\theta\to a^+} \rho_\theta(S) =F_S^{-1+}(0)$, $\lim_{\theta\to b^-} \rho_\theta(S) =F_S^{-1}(1)$ and $\lim_{\theta\to a^+} \rho_{i,\theta}(X_i) =F_{X_i}^{-1+}(0)$, $\lim_{\theta\to b^-} \rho_{i,\theta}(X_i) =F_{X_i}^{-1}(1)$ for $i=1,2,\dots,n$.
        \end{enumerate}
        Then,
            \begin{equation}
            \label{eq:holistic:risk}
                  H_i({\bf X}) = K_i(\Theta) :=  \rho_{i,\Theta}(X_i) -  \beta_i \left(\sum_{j=1}^n \rho_{j,\Theta}(X_j) -\rho_\Theta(S) \right), \ i=1,2,\dots,n
            \end{equation}
        is an $\left(\mathbf{K},\Theta\right)$-induced risk-sharing rule.
    \end{proposition}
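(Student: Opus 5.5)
The plan is to identify the endogenous aggregate capital function of the parametrized holistic problem \eqref{eq:holistic:randomized}. I will then show it is continuous and that it covers $\mathcal{S}_{\bf X}$, and conclude with Proposition \ref{pp:K:surjective}(ii).

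\textbf{Step 1 (the allocations and their sum).} For each fixed $\theta\in I^\circ$, problem \eqref{eq:holistic:randomized} is exactly \eqref{eq:holistic} with $u_i,u$ replaced by $u_{i,\theta},u_\theta$. By \eqref{eq:holistic:sol}, its solution is
\[
K_i(\theta)=\rho_{i,\theta}(X_i)-\beta_i\Big(\sum_{j=1}^n\rho_{j,\theta}(X_j)-\rho_\theta(S)\Big).
\]
The weights $\beta_i,\beta$ from \eqref{eq:holistic:sol:rho} do not depend on $\theta$, and they satisfy $\sum_i\beta_i+\beta=1$. Summing over $i$ therefore gives
\[
K(\theta)=\sum_{i=1}^n K_i(\theta)=(1-\beta)\,\rho_\theta(S)+\beta\sum_{j=1}^n\rho_{j,\theta}(X_j).
\]
The endpoint values $K(a),K(b)$ are defined as the corresponding limits, consistently with the boundary conventions for $\rho_\theta$ and $\rho_{i,\theta}$.

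\textbf{Step 2 (continuity and limits).} By (a) and (b), $K$ is continuous on $I^\circ$, since it is a finite combination of continuous functions. By (c),
\[
\lim_{\theta\downarrow a}K(\theta)=(1-\beta)F_S^{-1+}(0)+\beta\sum_{j}F_{X_j}^{-1+}(0).
\]
The key observation is that $S\ge\sum_j F_{X_j}^{-1+}(0)$ a.s., because each $X_j$ is a.s. at least its essential infimum. Hence $\sum_jF_{X_j}^{-1+}(0)\le F_S^{-1+}(0)$, so $\lim_{\theta\downarrow a}K(\theta)\le F_S^{-1+}(0)$. Symmetrically, $S\le\sum_jF_{X_j}^{-1}(1)$ a.s. gives $\lim_{\theta\uparrow b}K(\theta)\ge F_S^{-1}(1)$. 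These limits are interpreted in $\mathbb{R}\cup\{\pm\infty\}$, and since $\beta\in(0,1)$ infinite values propagate correctly.

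\textbf{Step 3 (surjectivity onto $\mathcal{S}_{\bf X}$).} By the intermediate value theorem applied to the continuous $K$ on $I^\circ$, every value strictly between the two limits is attained in $I^\circ$. This already covers $(F_S^{-1+}(0),F_S^{-1}(1))$.

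The main obstacle is the finite endpoints of $\mathcal{S}_{\bf X}$. Suppose $F_S^{-1+}(0)$ is finite and lies in the support.
\begin{itemize}
\item If the lower limit in Step 2 is strictly smaller than $F_S^{-1+}(0)$, then $F_S^{-1+}(0)$ is attained at an interior $\theta$, again by the intermediate value theorem.
\item If the two are equal, it is attained at $\theta=a$ through the convention $K(a):=\lim_{\theta\downarrow a}K(\theta)$.
\end{itemize}
The upper endpoint is handled analogously. Thus $\mathcal{S}_{\bf X}\subseteq[F_S^{-1+}(0),F_S^{-1}(1)]\subseteq K(I)$.

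\textbf{Step 4 (conclusion).} $K$ is continuous on $I^\circ$, with endpoint values given by limits, so it is Borel measurable on $I$. If Proposition \ref{pp:K:surjective}(ii) is read as requiring continuity on all of $I$, I will instead define the right-inverse by hand:
\begin{itemize}
\item for $s$ in the open interval, take $K^{-1+}(s):=\inf\{\theta\in I^\circ:K(\theta)=s\}$, which is attained by continuity and is measurable because its level sets are closed in $I^\circ$;
\item assign the endpoint parameters $a$ or $b$ separately to the at most two endpoint values of $s$.
\end{itemize}
This yields a random variable $\Theta=K^{-1+}(S)$ with $K(\Theta)=S$. Consequently $\sum_iH_i({\bf X})=S$ for \eqref{eq:holistic:risk}, so it is a $({\bf K},\Theta)$-induced risk-sharing rule in the sense of Definition \ref{def:induce:risk:sharing}.
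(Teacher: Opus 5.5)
Your proposal is correct and follows the paper's proof: the same formula $K(\theta)=(1-\beta)\rho_\theta(S)+\beta\sum_{j=1}^n\rho_{j,\theta}(X_j)$, the same boundary bounds, continuity plus the intermediate value theorem, and then Proposition \ref{pp:K:surjective}; where the paper routes $\sum_j F_{X_j}^{-1+}(0)\le F_S^{-1+}(0)$ through the comonotonic sum $S^c$, you get it directly from essential infima. Your explicit handling of the endpoints of $\mathcal{S}_{\bf X}$ and the hand-built right-inverse fill in details the paper leaves implicit, with one slip: in Step 4 an endpoint value of $s$ should be assigned whichever parameter Step 3 produced, which is an interior $\theta$ rather than $a$ or $b$ when the boundary limit of $K$ is strict.
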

    \begin{proof}
      See Appendix \ref{sec:app:pf:PP53}. 
    \end{proof}

\begin{example}
  Let $(D_\theta)_{\theta \in I^\circ}$ be a family of distortion functions satisfying the conditions in Proposition \ref{pp:distortion}, and  $D_\theta$ is differentiable for each $\theta \in I^\circ$.  Then, by defining $u_\theta\left(\cdot\right) := D_\theta'\circ \bar{F}_S\left(\cdot\right)$ and $u_{i,\theta}\left(\cdot\right) := D_\theta'\circ \bar{F}_{X_i}\left(\cdot\right)$ for $i=1,2,\dots,n$, the conditions in Proposition \ref{pp:holistic} are fulfilled. 
\end{example}
 
     

\begin{example}  
\label{ex:holistic:elliptic}
   Using the same notations in Example \ref{eg:ellip}, let ${\bf X}\sim E_n(\boldsymbol{\mu},\boldsymbol{\Sigma},g)$ be elliptically distributed, and $Z$ be a spherical random variable which  has the same distribution as $(X_i-\mu_i)/\sigma_i$ for each $i=1,2\dots,n$. Suppose that the conditions in Proposition \ref{pp:holistic} are satisfied, and for any $\theta\in I^\circ$, 
   there exists differentiable distortion functions $(D_{i,\theta}(\cdot))_{i=1}^n$ and $D_\theta(\cdot)$ such that 
   $$\rho_\theta(S) = \mathbb{E}[SD_\theta'(\bar{F}_S(S))] \quad \text{and} \quad  \rho_{i,\theta}(X_i) = \mathbb{E}[X_i D'_{i,\theta}(\bar{F}_{X_i}(X_i))], \ i=1,2,\dots,n. $$
   It is easy to see that 
    \begin{equation*}
        \rho_{i,\theta}(X_i) = \mu_i + \sigma_i \mathbb{E}[ZD'_{i,\theta}(\bar{F}_Z(Z))] =:\mu_i +\sigma_i\lambda_i(\theta), \ i=1,2,\dots,n;
    \end{equation*}
    \begin{equation*}
        \rho_\theta(S) = \mu_S+\sigma_S\mathbb{E}[ZD_\theta'(\bar{F}_Z(Z))] =:\mu_S +\sigma_S\lambda(\theta). 
    \end{equation*}
Hence, $K(\theta)$ is given by, for $I\in\theta$,
    \begin{equation*}
        K(\theta) = \mu_S + (1-\beta)\sigma_S\lambda(\theta) + \beta\sum_{i=1}^n\sigma_i\lambda_i(\theta) =: \mu_S + \Lambda_{\beta,\sigma_S, \boldsymbol{\sigma}}(\theta),
    \end{equation*}
where $\boldsymbol{\sigma}:=(\sigma_i)_{i=1}^n$ and $\Lambda_{\beta,\sigma_S,\boldsymbol{\sigma}} :=  (1-\beta)\sigma_S\lambda(\theta) + \beta \sum_{i=1}^n \sigma_i\lambda_i(\theta)$. 

Using the fact that $\mathcal{S}_{\bf X} \subset K(I)$, thanks to Proposition \ref{pp:holistic}, we can define 
    \begin{equation}
    \label{eq:theta:holistic:ex}
        \Theta := \Lambda_{\beta,\sigma_S, \boldsymbol{\sigma}}^{-1}(S-\mu_S), 
    \end{equation}
such that $K(\Theta)=S$, where $\Lambda_{\beta,\sigma_S, \boldsymbol{\sigma}}^{-1}:\mathbb{R}\to \mathbb{R}$ is a measurable right-inverse of $\Lambda_{\beta,\sigma_S, \boldsymbol{\sigma}}$. Hence, the holistic Pareto-optimal risk-sharing rule  
    \begin{equation*}
        H_i({\bf X}) = K_i(\Theta) = \mu_i+\sigma_i\lambda_i(\Theta) - \frac{\beta_i}{\beta}\left(S-\mu_S - \sigma_S\lambda(\Theta) \right).
    \end{equation*}
    
In particular, if the distortion functions are chosen to be the same, i.e., $D_{i,\theta}\equiv D_\theta$ for $i=1,2,\dots,n$ and $\theta\in I^\circ$, so that $\lambda_i\equiv \lambda$, then, \eqref{eq:theta:holistic:ex} is reduced to 
    \begin{equation*}
        \Theta = \lambda^{-1}\left( \frac{S-\mu_S}{(1-\beta)\sigma_S +\beta\sum_{i=1}^n\sigma_i} \right).
    \end{equation*}
 Hence, the induced risk-sharing rule is reduced to, for $i=1,\dots,n$, 
            \begin{align}
            \label{eq:elliptical:holistic}
                H_i({\bf X}) = K_i(\Theta) &= \mu_i + \frac{\sigma_i -\beta_i\left( \sum_{j=1}^n\sigma_j-\sigma_S \right) }{(1-\beta)\sigma_S + \beta \sum_{j=1}^n\sigma_j}(S-\mu_S) \nonumber \\
                &= \frac{\sigma_i -\beta_i\left( \sum_{j=1}^n\sigma_j-\sigma_S \right) }{(1-\beta)\sigma_S + \beta\sum_{j=1}^n\sigma_j} S + \left( \mu_i -  \frac{\sigma_i -\beta_i\left( \sum_{j=1}^n\sigma_j-\sigma_S \right) }{(1-\beta)\sigma_S + \beta\sum_{j=1}^n\sigma_j} \mu_S\right).
            \end{align}

   Since the aggregate capital in the underlying allocation principle is endogenously solved along with the allocated capitals, the random parameter $\Theta$ in \eqref{eq:theta:holistic:ex} depends on the marginal distributions of ${\bf X}$ via the individual standard deviations $\boldsymbol{\sigma}$,  and the underlying allocation mechanisms via the functions $(\lambda_i\left(\cdot\right))_{i=1}^n$, $\lambda\left(\cdot\right)$, and the parameter $\beta$ which depends on the positive weights $(\gamma_i)_{i=1}^n$, $\gamma$. In the special case when all the distortion functions are identical, we again obtain a quota risk-sharing rule. Compared to the risk-sharing rule based on the Euler principle in Example \ref{ex:distortion:elliptical}, for the $i$-th participant, the participation fraction in \eqref{eq:elliptical:holistic} depends also on the marginal distributions of the others through the sum of the individual standard deviations, which is then weighted via the relative exposures between the aggregate loss and  the individual losses. Nevertheless, it does not depend on the correlation between $S$ and the individual risks as appeared in \eqref{eq:elliptic:distortion}. This is because the holistic approach minimizes an objective function based on additive individual penalties, whereas the Euler principle attributes risk based on each individual's marginal contribution to the aggregated risk $S$, which inherently depends on the covariance term.
   
\end{example}

\section{Managerial Insights}
\label{sec:conclusion}
This article proposed a novel randomization approach to generate risk-sharing rules from capital allocation principles, offering a promising alternative to existing economic and game-theoretic methods. We formulated a general procedure for this randomization approach, and introduced the notion of measurable right-invertibility for the family of aggregate capitals that enables the induction of randomized rules. We demonstrated how risk-sharing rules can be induced under both top-down and bottom-up capital allocation principles.


\subsection{Capital Allocation and Risk Sharing -- Two Sides of the Same Coin}

Under mild conditions, most capital allocation principles (deterministic allocation of capital to business units) can be transformed into a risk-sharing rule (random allocation of actual losses among participants) through a simple randomization procedure. Conversely, many well-known risk-sharing rules can be interpreted as randomized capital allocations.

Organizations that have already been using capital allocation methods, such as banks allocating economic capital to trading desks or insurers allocating capital to business lines, can repurpose these methods to design risk-sharing agreements. For example, a bank that uses VaR-based Euler allocations internally can use the same logic to structure a syndicated loan agreement or a credit risk transfer. This reduces the need to develop separate frameworks for internal capital management and external risk sharing. For instance, a pension fund manager structuring a risk transfer transaction with multiple institutional investors can utilize the same marginal contribution logic from the Euler principle that they use internally to allocate capital, now applied to sharing actual losses among the investors. This ensures philosophical consistency between internal risk management and external risk transfer.

\subsection{Scenario-Wide Property Inheritance and Comonotonicity}

By randomizing different families of capital allocation principles, this paper generated new risk-sharing rules that inherit specific properties from their parent families. Randomizing squared-penalty optimization yields quota-share rules (affine in aggregate loss) when parameters are independent of the randomization parameter, but more complex rules otherwise. Randomizing absolute-penalty optimization yields generalized comonotonic conditional mean risk sharing. Randomizing Euler principles with distortion risk measures yields rules based on marginal contributions evaluated at random confidence levels. Randomizing holistic principles yields rules that balance individual and aggregate penalties.

Managers can ``design'' risk-sharing rules by choosing an appropriate capital allocation family and randomization scheme. If simplicity is paramount (e.g., for regulatory reporting), a quota-share rule from squared penalties may suffice. If fairness in tail events is critical, an Euler-based rule with distortion measures may be appropriate. If both individual and aggregate objectives matter, a holistic principle offers a compromise.

Whenever a property is satisfied by the underlying capital allocation principles, the induced risk-sharing rule inherits this property pathwise. 
For instance, when the capital allocation principles arise from an optimization framework that is Pareto optimal with respect to individual objectives, the induced rule is also Pareto optimal in a scenario-wise sense; see Definition \ref{def:pareto:optimal:pointwise}. Likewise, when the underlying family is derived from the Euler principle, the induced risk-sharing rule yields a mechanism that allocates losses according to each agent’s marginal contribution to the aggregate loss in every scenario. In some cases, this perspective yields new interpretations of familiar rules: the quantile risk-sharing rule, for example, can be viewed as emerging from randomizing optimization-based allocation principles with absolute deviation penalties.

In particular, this method of induction provides a clear mechanism linking the parametrization of capital allocation principles to the comonotonicity of the induced risk-sharing rules. In particular, if the mapping $\theta \mapsto K_i(\theta)$ is simultaneously non-decreasing or non-increasing for all agents $i$, then the resulting risk-sharing rule is necessarily comonotonic.
This connection becomes explicit in several important examples. For top-down allocations based on the Euler principle with distortion risk measures, and for bottom-up weighted risk allocation principles, the induced risk-sharing rules are comonotonic whenever the corresponding CMRS rule is comonotonic; see Propositions~\ref{pp:distortion:euler:comono} and \ref{pp:comono:weighted:risk} respectively.

The following two numerical examples further illustrate such (non-)comonotonicity of the induced risk-sharing rules by the proposed randomization approach; they also offer an additional insight related to dependence structure.

\begin{example}
    \label{ex:numerical:Euler}
    We first provide an example for the implementation of risk-sharing rules induced by the top-down Euler principle in Section \ref{sec:euler_distortion} (distortion risk measures). For $i=1,2$, let $F_i$ denote the distribution function of a Gamma-distributed random variable with shape parameter $\alpha_i$ and scale parameter $\beta_i$. We set $\alpha_1 = 5$, $\beta_1 = 1$, and $\alpha_2 = 0.3$, $\beta_2 = 8$. The first marginal is relatively light-tailed and only moderately right-skewed, whereas the second is highly right-skewed and exhibits substantially heavier right-tail behavior.        

    We consider two risk vectors, ${\bf X}^{(j)} = (X^{(j)}_1,X^{(j)}_2)$, $j=1,2$, whose joint distributions are given by $F_{{\bf X}^{(j)}}(x_1,x_2) = C^{(j)}(F_1(x_1),F_2(x_2))$, for $x_1,x_2\geq 0$, where, for $u,v\in[0,1]$,
        \begin{equation*}
            C^{(1)}(u,v) = (u^{-2} + v^{-2} - 1)^{-\frac{1}{2}}, 
            \qquad 
            C^{(2)}(u,v) = \max\{u+v-1,\,0\}.
        \end{equation*}
That is, ${\bf X}^{(1)}$ is coupled via a Clayton copula, whereas ${\bf X}^{(2)}$ exhibits counter-monotonic dependence. 

Simulating $200{,}000$ observations from the joint distributions, we compute the conditional expectations of $X^{(j)}_i$ given the aggregate loss $S^{(j)} := X^{(j)}_1 + X^{(j)}_2$, for $i,j=1,2$, as shown in Figure~\ref{fig:conditional:expectations}. For ${\bf X}^{(1)}$, the conditional expectations $\mathbb{E}[X^{(1)}_i | S=s]$ are non-decreasing in $s$ for $i=1,2$; this is a consequence of the total positivity of order 2 of ${\bf X}^{(1)}$, which itself follows from the use of the Clayton copula; see, e.g.~Chapter 5.3 in \cite{denuit2006actuarial}.  In contrast, under the counter-monotonic copula, the conditional expectations $\mathbb{E}[X^{(2)}_i | S=s]$ move in opposite directions once $s$ is sufficiently large.
\begin{figure}[!h]
    \centering
     \begin{subfigure}{.5\textwidth}
                \centering
             \includegraphics[scale=0.35]{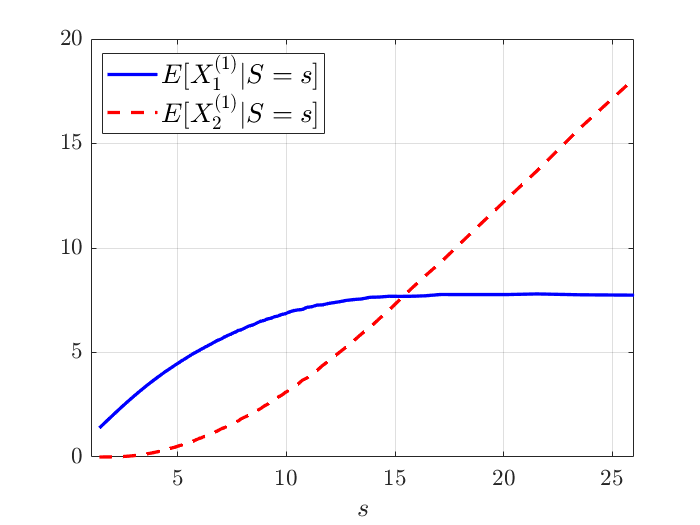}                \caption{$\mathbb{E}[X^{(1)}_i|S^{(1)}=s]$, $i=1,2$}
                \label{fig:cond:exp:1}
            \end{subfigure}%
     \begin{subfigure}{.5\textwidth}
                \centering
             \includegraphics[scale=0.35]{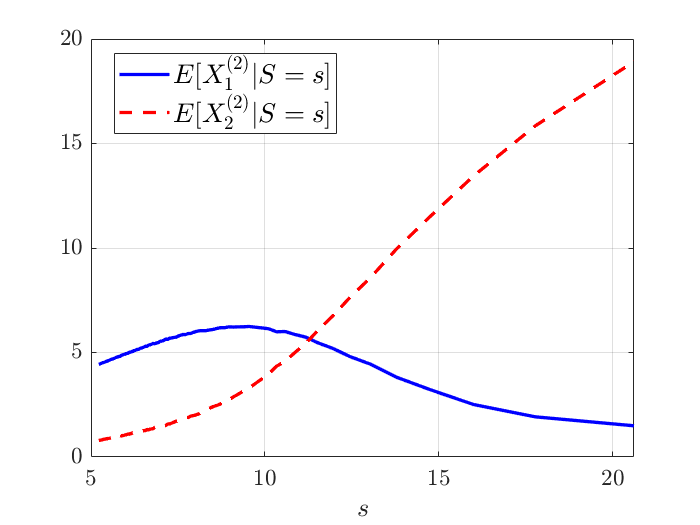}                \caption{$\mathbb{E}[X^{(2)}_i|S^{(1)}=s]$, $i=1,2$}
                \label{fig:cond:exp:2}
            \end{subfigure}%
    \caption{Conditional expectations of individual risks given the aggregate risk}
    \label{fig:conditional:expectations}
\end{figure}

We compute the induced shared risk for each vector using the Euler allocation principle by a distortion risk measure $K^{(j)}(\theta):=\rho_\theta(S^{(j)})$, $j=1,2$, $\theta\in(0,1)$, where the distortion function is given by Wang's transformation
$D_\theta(p) = \Phi(\Phi^{-1}(p) - \Phi^{-1}(1-\theta))$,  $p\in[0,1]$; recall, from Example \ref{eg:wang_example}, that Wang's transformation satisfies the conditions in Proposition \ref{pp:distortion}. The induced risk-sharing rules are given by, for $i,j=1,2$, 
    \begin{equation*}
        H_i({\bf X}^{(j)}) = \frac{\partial}{\partial t}\rho_\theta\left(S^{(j)} + tX^{(j)}_i \right) \bigg|_{t=0}.
    \end{equation*}
Figure \ref{fig:theta:euler} depicts the sampling level $\Theta^{(j)}$, $j=1,2$, against the realization $S=s$, where $\Theta^{(j)} = (K^{(j)})^{-1}(S^{(j)})$. Using this, Figure \ref{fig:H:Euler} displays the induced risk-sharing rules for ${\bf X}^{(j)}$, $j=1,2$, with respect to $S=s$. 
    \begin{figure}[!h]
        \centering
        \includegraphics[width=0.4\linewidth]{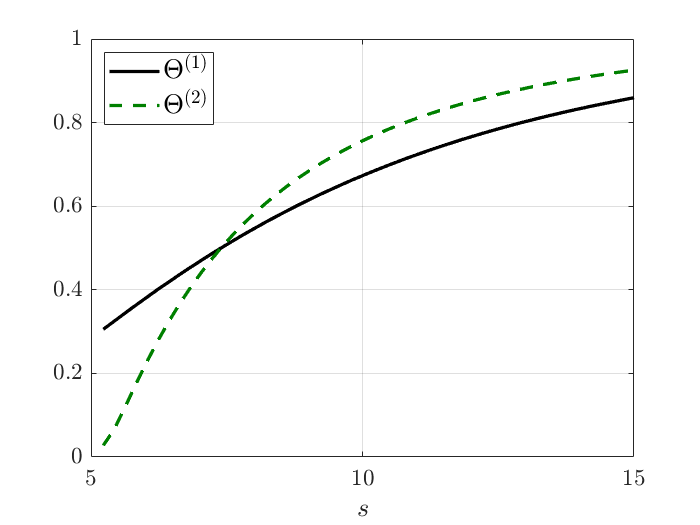}
        \caption{The realized sampling level $\Theta^{(j)}$ with $S^{(j)}=s$, $j=1,2$,}
        \label{fig:theta:euler}
    \end{figure}

Two key observations can be drawn from Figure~\ref{fig:H:Euler}. First, Member~1 (resp.~2) dominates the shared loss when the aggregate $s$ is small (resp.~large). This arises because $X_2$ is more right-skewed and heavy-tailed: for small aggregate losses, the lighter-tailed $X_1$ contributes more due to its higher density near the peak, whereas for large losses, the heavy tail of $X_2$ dominates the aggregate. These reflect the impact of members' marginal contributions as dictated by the Euler principle. Second, the induced risk-sharing rule is comonotonic for ${\bf X}^{(1)}$ but not for ${\bf X}^{(2)}$. Indeed, $\mathbb{E}[X^{(1)}_i|S=s]$ is non-decreasing in $s$ for $i=1,2$, implying comonotonicity of the induced rule (Proposition~\ref{pp:distortion:euler:comono}). In contrast, when $s$ is large, the aggregate loss is primarily contributed by Member~2. Member~1's contribution thus decreases under the counter-monotonic structure. 
\begin{figure}[!h]
    \centering
     \begin{subfigure}{.5\textwidth}
                \centering
             \includegraphics[scale=0.35]{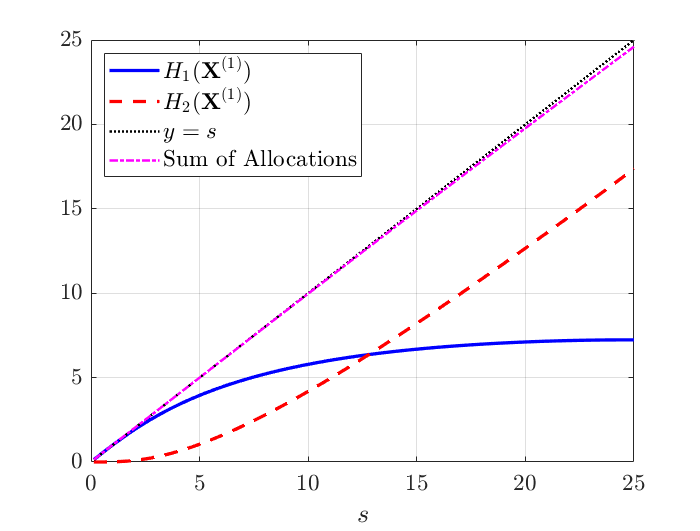}                \caption{${\bf H}({\bf X}^{(1)})$}
                \label{fig:H1:Euler}
            \end{subfigure}%
     \begin{subfigure}{.5\textwidth}
                \centering
             \includegraphics[scale=0.35]{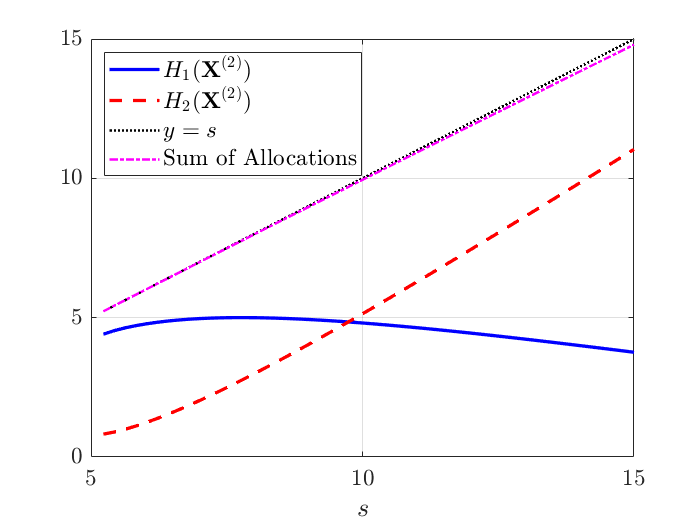}                \caption{${\bf H}({\bf X}^{(2)})$}
                \label{fig:H2:Euler}
            \end{subfigure}%
    \caption{The induced risk sharing rules ${\bf H}({\bf X}^{(j)})$ with $S^{(j)}=s$, $j=1,2$.}
    \label{fig:H:Euler}
\end{figure}

\end{example}

\begin{example}
\label{ex:numerical:weighted:risk}
    We next provide an example for the implementation of the induced risk-sharing rules based on the bottom-up weighted risk allocation principle. We consider the same risk vectors ${\bf X}^{(j)}$, $j=1,2$, as in Example \ref{ex:numerical:Euler}, and choose the size-biased allocation; i.e., $w_\theta(s) = s^\theta$, $s\geq 0, \theta\in \mathbb{R}$. Hence, for $i,j=1,2$, the risk-sharing rule is given by 
        \begin{equation*}
            H_i({\bf X}^{(j)}) = \frac{\mathbb{E}[X^{(j)}_i (S^{(j)})^\theta]}{\mathbb{E}[(S^{(j)})^\theta]}\bigg|_{\theta=\Theta^{(j)}}.
        \end{equation*}
    The sampling level $\Theta^{(j)}$, $j=1,2$, is illustrated in Figure \ref{fig:theta:weighted}.

   \begin{figure}[!h]
        \centering
        \includegraphics[width=0.4\linewidth]{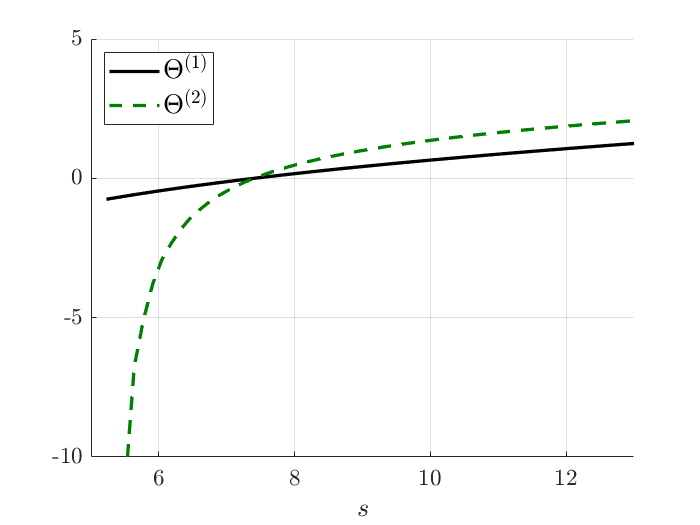}
        \caption{The realized sampling level $\Theta^{(j)}$ with $S^{(j)}=s$, $j=1,2$.}
        \label{fig:theta:weighted}
    \end{figure}

Figure \ref{fig:H:weighted} depicts the shared risk corresponding to the realized aggregate loss $S^{(j)}=s$ for both dependence structures. The observation is similar to that in Example \ref{ex:numerical:Euler}. In particular, we see a comonotonic structure for ${\bf H}({\bf X}^{(1)})$ under the Clayton copula, which arises because $\mathbb{E}[X^{(1)}_i \mid S=s]$ is non-decreasing in $s$ for both $i=1,2$ (Proposition \ref{pp:comono:weighted:risk}).

\begin{figure}[!h]
    \centering
     \begin{subfigure}{.5\textwidth}
                \centering
             \includegraphics[scale=0.35]{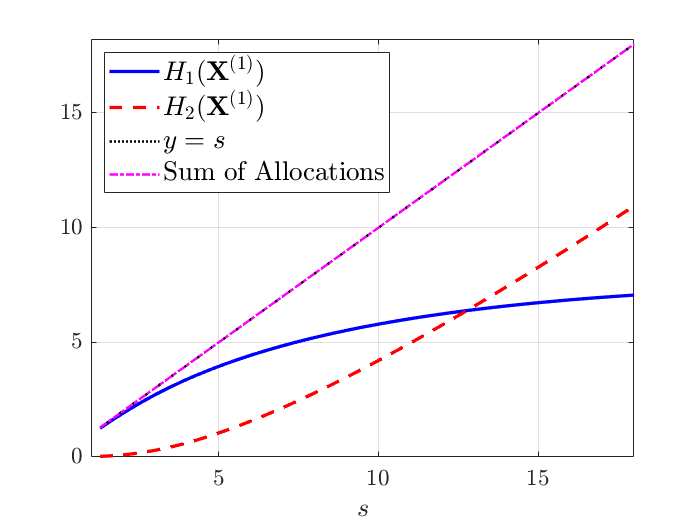}                \caption{${\bf H}({\bf X}^{(1)})$}
                \label{fig:H1:weighted}
            \end{subfigure}%
     \begin{subfigure}{.5\textwidth}
                \centering
             \includegraphics[scale=0.35]{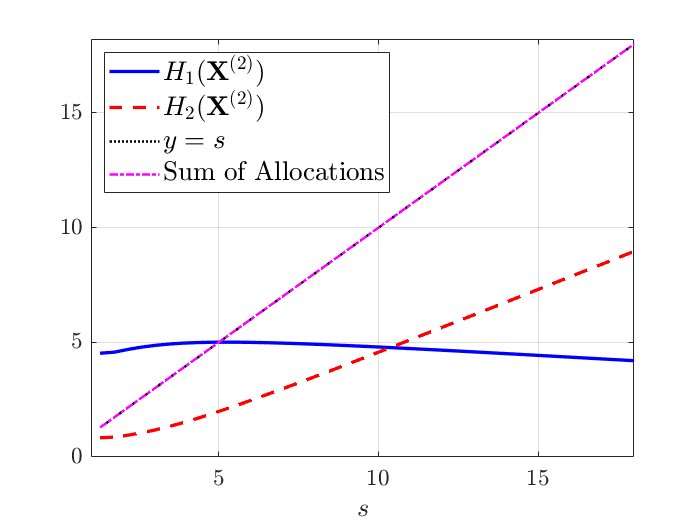}                \caption{${\bf H}({\bf X}^{(2)})$}
                \label{fig:H2:weighted}
            \end{subfigure}%
    \caption{The induced risk sharing rules ${\bf H}({\bf X}^{(j)})$ with $S^{(j)}=s$, $j=1,2$}
    \label{fig:H:weighted}
\end{figure}

\end{example}


\subsection{Dependence Structure Drives Who Bears Tail Risk}
In Examples \ref{ex:numerical:Euler} and \ref{ex:numerical:weighted:risk} with Clayton copula (positive dependence) versus counter-monotonic copula, the induced sharing rules differ dramatically. Under positive dependence, both participants' shares increase with the total loss, and thus are comonotonic. Under counter-monotonic dependence, when the total loss is large, the heavy-tailed participant dominates the aggregate, and the light-tailed participant's share actually decreases, reflecting that a large aggregate loss is driven almost entirely by the heavy-tailed risk.

The choice of risk-sharing rule must account for the dependence structure among participants' risks. If risks are positively correlated, all participants share proportionally in both small and large losses. If risks are negatively correlated or have different tail behavior, the allocation of extreme losses may concentrate on the risk that drives the tail. This insight is critical for designing risk pools where participants have different risk profiles, for example, mixing catastrophe risks with operational risks. As another example, a multinational corporation pooling currency risk (heavy-tailed, more volatile) and operational risk (lighter-tailed, less volatile) across subsidiaries should expect that, in extreme scenarios, the currency risk will dominate, and the sharing rule should reflect this to avoid unfairly burdening stable operations during currency crises.

\subsection{Residual Risk Sharing and Robustness}

The proposed framework opens several avenues for applications and future research. One immediate application is residual risk sharing: a first layer of risks can be allocated according to a given capital allocation principle from a family, and the remaining residual risks can then be shared using the induced risk-sharing rule derived from the same family. This two-stage design ensures philosophical consistency between capital determination and risk redistribution, and it is natural to explore which properties are preserved or modified under this iterative procedure, while also providing a practical mechanism for applications in regulatory and solvency contexts, as well as in collective investment or pension funds. As a second application, the randomized induction framework enables the incorporation of robustness into capital allocations. By generating multiple realized allocations from the same underlying principle, the randomization produces a continuum of plausible capital assignments that capture uncertainty in implementation. This allows one to explore robust or worst-case allocations and to evaluate how variations in the allocation might affect downstream decisions.

\section{Conclusion}

This paper establishes a novel and systematic connection between capital allocation and risk sharing through a randomization approach. By treating a family of capital allocation principles parameterized by a common parameter, and then selecting the parameter level such that the resulting aggregate capital matches the realized aggregate loss, we construct risk‑sharing rules that inherit the structure and properties of the underlying capital allocation principles in a scenario‑wise manner. This framework unifies a wide range of existing risk‑sharing rules, including conditional mean risk sharing, quantile risk sharing, and quota-share rules, as special cases arising from familiar allocation principles such as the Euler principle, optimization‑based allocation, weighted risk allocation, and the holistic principle. Central to the construction is the notion of measurable right‑invertibility of the aggregate capital function, for which we provide sufficient conditions for both top‑down and bottom‑up families. The induced risk‑sharing rules exhibit scenario‑wise Pareto optimality and, under appropriate conditions, yield comonotonic allocations that parallel those obtained from classical economic approaches. Numerical examples illustrate how the dependence structure among participants’ risks shapes the resulting sharing rules, highlighting the practical relevance of the framework for designing risk‑sharing arrangements in contexts ranging from syndicated loans to peer‑to‑peer insurance.

Several directions for future research emerge naturally from this work. First, the randomization framework can be extended to dynamic settings, where capital allocation principles are applied over multiple periods and risk‑sharing rules are updated as information unfolds. Second, the framework opens the door to studying residual risk sharing: after a first‑layer allocation according to a deterministic capital allocation principle, the remaining risks can be shared via the induced rule, and it would be of interest to investigate how properties such as Pareto optimality or comonotonicity are preserved or transformed under such iterative procedures. Third, the inherent randomness in the parameter sampling rule can be exploited to incorporate robustness and model uncertainty; one could quantify the sensitivity of the induced sharing rule to the choice of the underlying family. Finally, empirical applications, such as designing risk‑sharing agreements for insurance pools, infrastructure projects, or financial consortia, would provide valuable practical insights and test the theoretical predictions regarding the role of dependence structures and tail behavior in determining the distribution of shared losses.

\bibliographystyle{plainnat}
\bibliography{ref}

 \begin{appendices}

\section{Proofs of Proposition \ref{pp:K:surjective}}
\label{sec:app:pf:PP31}
  The statement (i) is by definition. The fact that $K(\Theta)=S$ a.s., implies that, for a.a. $\omega\in \Omega$, $K(\Theta)(\omega)=S(\omega)$, and thus there exists $\theta\in I$ such that $S(\omega)=K(\Theta)(\omega)=K\left(\theta\right)$, implying that $S(\omega)\in K\left(I\right)$.

    Suppose that $K(\cdot)$ is measurable, as well as non-decreasing or continuous, and $\mathcal{S}_{{\bf X}}\subseteq K(I)$. For any $s\in \mathcal{S}_{{\bf X}}$,
    \begin{equation*}
    \psi_K(s) := \{\theta\in I : K(\theta) = s \},
    \end{equation*}
    is non-empty and closed. By the measurability of $K(\cdot)$, for any open subset $U\subseteq I$, the set $\{s\in \mathcal{S}_{\bf X} : \psi_K(s)\cap U \neq \varnothing \}$ is Borel measurable.
    By the Kuratowski-Ryll-Nardzewski measurable selection theorem, there exists a measurable selection of the right-inverse function, $K^{-1+}: \mathcal{S}_{{\bf X}} \to I$, such that $K^{-1+}(s) \in \psi_K(s)$, for any $s\in \mathcal{S}_{\bf X}$. The statement (ii) then follows by defining the random variable  $\Theta: \Omega \to I$ by, for any $\omega\in\Omega$, $\Theta(\omega) = K^{-1+}(S)(\omega)$.

\section{Proofs in Section \ref{sec:top:down}}
This appendix consists of proofs of the statements in Section \ref{sec:top:down}.

\subsection{Proof of Proposition \ref{eq:absolute_penalty_sufficient_con}}
\label{sec:app:pf:PP41}
 By the equivalence of $\mathbb{Q}_i^\theta$ and $\mathbb{P}$ for any $\theta\in I$ and $i=1,2,\dots,n$,  we have $ \{ x\in\mathbb{R}: \mathbb{Q}_i^\theta(X_i\leq x) = 0 \} = \{ x\in\mathbb{R} : \mathbb{P}(X_i\leq x) = 0 \}  $, and $\{x\in\mathbb{R}:\mathbb{Q}_i^\theta(X_i>x)=0 \} = \{ x\in\mathbb{R} : \mathbb{P}(X_i>x) =0 \} $ which implies that $\{x\in\mathbb{R}:\mathbb{Q}_i^\theta(X_i\leq x) \leq 1 \} = \{ x\in\mathbb{R} : \mathbb{P}(X_i\leq x) \leq 1 \} $. Thus, by Theorem 6 of \cite{DHAENE20023},
    \begin{align*}
         F^{-1+}_{\bar{S}^c_\theta }(0) &= \sum_{i=1}^n \left(F^{\mathbb{Q}_i^\theta}_i\right)^{-1+}(0) =  \sum_{i=1}^n F_i^{-1+}(0) \leq F^{-1+}_S(0) , \\
         F^{-1 }_{\bar{S}^c_\theta}(1) &= \sum_{i=1}^n \left(F^{\mathbb{Q}_i^\theta}_i\right)^{-1}(1) =  \sum_{i=1}^n F_i^{-1}(1) \geq F_S^{-1}(1). 
    \end{align*}
  Therefore, $\mathcal{S}_{\bf X} \subset [ F^{-1+}_{\bar{S}^c_\theta }(0), F^{-1}_{\bar{S}^c_\theta }(1)]$ for any $\theta\in I$. 
  
  Finally, we have $K(\Theta)\in \mathcal{S}_{{\bf X}} \subset [ F^{-1+}_{\bar{S}^c_\theta }(0), F^{-1}_{\bar{S}^c_\theta }(1)]$, and thus
        \begin{equation*}
            \sum_{i=1}^n H_i({\bf X}) = \sum_{i=1}^n \left(F^{\mathbb{Q}_i^\theta }_i\right)^{-1(\alpha_{K(\theta)})}(F_{\bar{S}^c_\theta}(K(\theta))) \bigg|_{\theta=\Theta}  = F_{\bar{S}^c_\theta}^{-1(\alpha_{K(\theta)} )}(F_{\bar{S}^c_\theta}(K(\theta))) \bigg|_{\theta=\Theta} =K(\Theta) =S.
        \end{equation*}

\subsection{Proof of Proposition \ref{pp:distortion}}
\label{sec:app:pf:PP42}
  We first show that $K(\cdot)$ is non-decreasing.  
            Since $\theta\mapsto D_\theta(p)$ is non-decreasing for any fixed $p\in[0,1]$, one has
            $D_{\theta_2}(\bar{F}_S(s)) \geq D_{\theta_1}(\bar{F}_S(s))$ for any $s\in \mathbb{R}$, where $\theta_2\geq\theta_1$; by \eqref{eq:distortion}, we immediately have $K(\theta_2)\geq K(\theta_1)$. The monotonicity  also implies that $a^*\leq b^*$.

                
            We proceed to show the continuity of $K(\cdot)$ on $(a^*,b^*)$. Since $K(\theta)$ is non-decreasing, by the definition of $a^*$ and $b^*$, we have $|K(\theta)|<\infty$ for $\theta\in (a^*,b^*)$.  Fix $\theta_0 \in (a^*,b^*)$, and consider an increasing sequence $(\theta_n)_{n=1}^\infty$ with $\theta_n\uparrow \theta_0$ when $n\rightarrow\infty$.  For any $s\in\mathbb{R}$, the continuity and monotonicity of $\theta\mapsto D_\theta(\bar{F}_S(s))$ implies 
         $           D_{\theta_n}(\bar{F}_S(s)) \uparrow D_{\theta_0}(\bar{F}_S(s))$ when $n\rightarrow\infty$. Therefore, by the Monotone Convergence Theorem, we have 
            \begin{equation*}
            \lim_{n\to\infty} \int_0^\infty D_{\theta_n}(\bar{F}_S(s))\, ds = \int_0^\infty D_{\theta_0}(\bar{F}_S(s))\, ds. 
            \end{equation*}
        Also, for any $s\in\mathbb{R}$, $\tilde{D}_{\theta_n}(\bar{F}_S(s)) := 1-D_{\theta_n}(\bar{F}_S(s))\downarrow \tilde{D}_{\theta_0}(\bar{F}_S(s))$ when $n\rightarrow\infty$, and $\int_{-\infty}^0 \tilde{D}_{\theta_1}(\bar{F}_S(s))\,ds < \infty$, we can apply the Monotone Convergence Theorem to $\tilde{D}_{\theta_1}(\bar{F}_S(s)) - \tilde{D}_{\theta_n}(\bar{F}_S(s))\geq 0$, to show that 
            \begin{equation*}
                 \lim_{n\to\infty} \int_{-\infty}^0 \left(1-D_{\theta_n}(\bar{F}_S(s))\right)\, ds  = \int_{-\infty}^0 \left(1-D_{\theta_0}(\bar{F}_S(s))\right)\, ds. 
            \end{equation*}
    Combining the two limits, we have $\lim_{\theta\uparrow \theta_0 }K(\theta)=K(\theta_0)$ proving the left-continuity. The right-continuity can be shown in a similar fashion. 
         
            
           Next, we show that $\lim_{\theta \uparrow b}\rho_\theta(S) = F_S^{-1}(1)$. As $\theta\uparrow b$, for any $s\in\mathbb{R}$,
            \begin{equation*}
                 D_\theta(\bar{F}_S(s)) \uparrow \mathbbm{1}_{(0,1]}(\bar{F}_S(s)) =  \mathbbm{1}_{[0,1)}(F_S(s)) = \mathbbm{1}_{(-\infty,F_S^{-1}(1))}(s).
            \end{equation*}
        Hence, by the Monotone Convergence Theorem, 
            \begin{equation*}
                \lim_{\theta\uparrow b}K(\theta) = -\int_{-\infty}^0 \mathbbm{1}_{[F_S^{-1}(1),\infty)}(s)\,ds + \int_0^\infty \mathbbm{1}_{(-\infty,F_S^{-1}(1))}(s)\,ds = F_S^{-1}(1). 
            \end{equation*}
        Likewise, by noting that, for any $s\in\mathbb{R}$, 
     \begin{equation*}
                   D_\theta(\bar{F}_S(s)) \downarrow  \mathbbm{1}_{\{1\}}(\bar{F}_S(s)) = \mathbbm{1}_{\{0\}}(F_S(s)) =\begin{dcases}
                       \mathbbm{1}_{ (-\infty,F^{-1+}_S(0)] }(s), &\text{if }F_S\left(F^{-1+}_S(0)\right)=0, \\
                       \mathbbm{1}_{(-\infty,F_S^{-1+}(0))}(s), &\text{if }F_S\left(F^{-1+}_S(0)\right)>0,
                    \end{dcases} 
                \end{equation*}
       as $\theta \downarrow a$, by the Monotone Convergence Theorem, 
            \begin{equation*}
            \begin{aligned}
                 \lim_{\theta\downarrow a}K(\theta) &= \begin{dcases}
                    -\int_{-\infty}^0 \mathbbm{1}_{(F_S^{-1+}(0),\infty)}(s)\,ds + \int_0^\infty \mathbbm{1}_{(-\infty,F_S^{-1+}(0)]}(s)\,ds &\text{if }F_S\left(F^{-1+}_S(0)\right)=0, \\
                     -\int_{-\infty}^0 \mathbbm{1}_{[F_S^{-1+}(0),\infty)}(s)\,ds + \int_0^\infty \mathbbm{1}_{(-\infty,F_S^{-1+}(0))}(s)\,ds &\text{if }F_S\left(F^{-1+}_S(0)\right)>0,
                \end{dcases}\\
                &= F_S^{-1+}(0). 
            \end{aligned}
            \end{equation*}

        Finally, by the Intermediate Value Theorem, we have $K((a^*,b^*)) = (\lim_{\theta\downarrow a^*}K(\theta),$ $\lim_{\theta\uparrow b^*}K(\theta))$. Combining this with the limits at the boundaries (see also the footnote 1 for the extended limits at $a$ and $b$), we conclude that $K(\cdot)$ is surjective onto $(F_S^{-1+}(0),F_S^{-1}(1))$.

\subsection{Proof of Proposition \ref{pp:distortion:euler:comono}}
\label{sec:app:pf:PP43}
      
      By the continuity of the distribution of ${\bf X}$ and the statement's assumption, the mapping $p\in(0,1) \mapsto \mathbb{E}[X_i|S=\text{VaR}_{p}(S)]$ is (left-)continuous
      and non-decreasing, which is thus the quantile function of some random variable $\tilde{X}_i$, i.e., 
         $\text{VaR}_{p}(\tilde{X}_i)=  \mathbb{E}[X_i|S=\text{VaR}_{p}(S)]$.
    Hence, by \eqref{eq:allocation:distortion}, for any $\theta\in I^\circ$ and $i=1,\dots,n$,
        \begin{equation*}
               K_i(\theta)=  \int_0^1 \text{VaR}_{1-p}(\tilde{X}_i) \,dD_\theta(p) = \rho_\theta(\tilde{X}_i).
        \end{equation*}
    By Proposition \ref{pp:distortion}, $\theta\mapsto K_i(\theta)=\rho_\theta(\tilde{X}_i)$ is non-decreasing for all $i=1,\dots,n$. Therefore, any induced risk-sharing rule ${\bf K}(\Theta) = (K_i(\Theta))_{i=1}^n$ is comonotonic. 

\section{Proofs in Section \ref{sec:bottom-up}}
This appendix consists of proofs of the statements in Section \ref{sec:bottom-up}.

\subsection{Proof of Lemma \ref{lem:K:monotonic:weighted}}
\label{sec:app:pf:lem51}
 For any $\theta_1,\theta_2 \in I^\circ$ with $\theta_2\geq \theta_1$, we have 
        \begin{align*}
             \frac{\mathbb{E}[g(S)w_{\theta_2}(S)]}{\mathbb{E}[w_{\theta_2}(S)]}- \frac{\mathbb{E}[g(S)w_{\theta_1}(S)]}{\mathbb{E}[w_{\theta_1}(S)]} =  \frac{ \mathbb{E}[g(S)w_{\theta_2}(S)]\mathbb{E}[w_{\theta_1}(S)] - \mathbb{E}[g(S)w_{\theta_1}(S)]\mathbb{E}[w_{\theta_2}(S)]}{\mathbb{E}[w_{\theta_2}(S)]\mathbb{E}[w_{\theta_1}(S)]}.
        \end{align*}
  Since $\mathbb{E}[w_{\theta_2}(S)],\mathbb{E}[w_{\theta_1}(S)]>0$,  it suffices to prove the numerator is non-negative. Indeed,  
        \begin{align*}
          & \ \ \ \   \mathbb{E}[g(S)w_{\theta_2}(S)]\mathbb{E}[w_{\theta_1}(S)] - \mathbb{E}[g(S)w_{\theta_1}(S)]\mathbb{E}[w_{\theta_2}(S)]  \\
          &= \int_{\mathcal{S}_{{\bf X}}} \int_{ \mathcal{S}_{{\bf X}}} \left[ g(s_2)w_{\theta_2}(s_2)   w_{\theta_1}(s_1) -  g(s_1)w_{\theta_1}(s_1)   w_{\theta_2}(s_2) \right] \, dF_S(s_2)\,dF_S(s_1) \\
          &=  \int_{\mathcal{S}_{{\bf X}}} \int_{ \mathcal{S}_{{\bf X}}}   w_{\theta_1}(s_1) w_{\theta_2}(s_2) [g(s_2)-g(s_1)] \, dF_S(s_2)\,dF_S(s_1) \\
          &=  \int_{\mathcal{S}_{{\bf X}}} \int_{ \mathcal{S}_{{\bf X}}} w_{\theta_1}(s_2)w_{\theta_2}(s_1)  [g(s_1)-g(s_2)] \,dF_S(s_1)\,dF_S(s_2).
        \end{align*}
    Hence, we can write 
        \begin{align*}
             & \ \ \ \ \mathbb{E}[g(S)w_{\theta_2}(S)]\mathbb{E}[w_{\theta_1}(S)] - \mathbb{E}[g(S)w_{\theta_1}(S)]\mathbb{E}[w_{\theta_2}(S)] \\
             &= \frac{1}{2}\int_{\mathcal{S}_{{\bf X}}} \int_{ \mathcal{S}_{{\bf X}}}\left[w_{\theta_1}(s_1) w_{\theta_2}(s_2)-  w_{\theta_1}(s_2) w_{\theta_2}(s_1) \right] [g(s_2)-g(s_1)] \, dF_S(s_2)\,dF_S(s_1) \\
             &= \frac{1}{2}\int_{\mathcal{S}_{{\bf X}}} \int_{ \mathcal{S}_{{\bf X}}}w_{\theta_1}(s_2)w_{\theta_1}(s_1) \left[\frac{w_{\theta_2}(s_2)}{w_{\theta_1}(s_2)} - \frac{w_{\theta_2}(s_1)}{w_{\theta_1}(s_1)} \right] [g(s_2)-g(s_1)]\,  dF_S(s_2)dF_S(s_1) \\
             &\geq 0,
        \end{align*}
since $s\mapsto w_{\theta_2}(s)/w_{\theta_1}(s)$ and $s\mapsto g(s)$ are non-decreasing. 

\subsection{Proof of Proposition \ref{pp:weighted:surjective}}
\label{sec:app:pf:PP51}

 We first show that $\lim_{\theta\uparrow b}K(\theta) = F_S^{-1}(1)$. It is clear that $K(\theta)\leq F_S^{-1}(1)$ for any $\theta\in (a,b)$. In addition, for any $s\in(F_S^{-1+}(0),F_S^{-1}(1))$, we have 
          \begin{equation*}
          K(\theta)=  \frac{\mathbb{E}[Sw_\theta(S)]}{\mathbb{E}[w_\theta(S)]}   \geq \frac{s \mathbb{E}[w_\theta(S)\mathbbm{1}_{\{S > s \} } ] }{\mathbb{E}[w_\theta(S)\mathbbm{1}_{\{S \leq s \} } ] + \mathbb{E}[w_\theta(S)\mathbbm{1}_{\{S > s \} } ]}. 
        \end{equation*}
    By \eqref{eq:cond:w:limit}, we have $\lim\inf_{\theta\uparrow b} K(\theta) \geq s$; since such $s$ is arbitrary, and by Lemma \ref{lem:K:monotonic:weighted}, $\theta\mapsto K(\theta)$ is non-decreasing, we have $\lim_{\theta\uparrow b}K(\theta) = F_S^{-1}(1)$.

    Next, we show that $\lim_{\theta\downarrow a}K(\theta) = F_S^{-1+}(0)$. First, for any $\theta_1,\theta_2 \in (a,b^*)$ with $\theta_1\leq \theta_2$, using \eqref{eq:cond:K:weighted:w} and the fact that $S\geq 0$, we have, for any $s\in(F_S^{-1+}(0),F_S^{-1}(1))$,
    \begin{align*}
             \frac{\mathbb{E}[Sw_{\theta_1}(S)\mathbbm{1}_{\{S > s \} } ]}{\mathbb{E}[w_{\theta_1}(S)\mathbbm{1}_{\{S \leq s \} }]} &=  \frac{ \mathbb{E}\left[S\frac{w_{\theta_1}(S)}{w_{\theta_1}(s)}  \mathbbm{1}_{\{S > s \} } \right]}{\mathbb{E}\left[\frac{w_{\theta_1}(S)}{w_{\theta_1}(s)}\mathbbm{1}_{\{S \leq  s \} }\right]} \leq \frac{ \mathbb{E}\left[S\frac{w_{\theta_2}(S)}{w_{\theta_2}(s)}  \mathbbm{1}_{\{S > s \} } \right]}{\mathbb{E}\left[\frac{w_{\theta_2}(S)}{w_{\theta_2}(s)}\mathbbm{1}_{\{S \leq s \} }\right]} = \frac{\mathbb{E}[Sw_{\theta_2}(S)\mathbbm{1}_{\{S > s \} } ]}{\mathbb{E}[w_{\theta_2}(S)\mathbbm{1}_{\{S \leq s \} }]}.
        \end{align*}
    Hence, for any $\theta^* \in (a,b^*)$ and any $\varepsilon>0$, we can find $N^* = N^*({\theta^*},\varepsilon)$ such that 
        \begin{equation*}
             \frac{\mathbb{E}[Sw_{\theta}(S)\mathbbm{1}_{\{S > N^* \} } ]}{\mathbb{E}[w_{\theta}(S)\mathbbm{1}_{\{S \leq N^* \} }]} < \varepsilon
        \end{equation*}
    whenever $\theta \leq \theta^*$.

    To proceed, note that it is clear that $K(\theta)\geq F_S^{-1+}(0)$. Let $q_\theta(s) := \frac{\mathbb{E}[w_\theta(S)\mathbbm{1}_{\{S > s \} } ]}{\mathbb{E}[w_\theta(S)\mathbbm{1}_{\{S \leq  s \} } ]}$, for any $s\in(F_S^{-1+}(0),F_S^{-1}(1))$. Then, for any $\theta \in (a, \theta^*)$ with $\theta^* < b^*$, any $s > F_S^{-1+}(0)\geq 0$, any $\varepsilon>0$, and the corresponding $N^*$ chosen as above, we have
         \begin{align*}
              K(\theta) &  \leq \frac{s \mathbb{E}[w_\theta(S)\mathbbm{1}_{\{S \leq  s \} } ] }{\mathbb{E}[w_\theta(S)\mathbbm{1}_{\{S \leq s \} } ] + \mathbb{E}[w_\theta(S)\mathbbm{1}_{\{S > s \} } ]} + \frac{\mathbb{E}[Sw_\theta(S)\mathbbm{1}_{\{S>s \}}]}{\mathbb{E}[w_\theta(S)] } \\
              &\leq \frac{s \mathbb{E}[w_\theta(S)\mathbbm{1}_{\{S \leq  s \} } ] }{\mathbb{E}[w_\theta(S)\mathbbm{1}_{\{S \leq s \} } ] + \mathbb{E}[w_\theta(S)\mathbbm{1}_{\{S > s \} } ]} + \frac{\mathbb{E}[Sw_\theta(S)\mathbbm{1}_{\{ s < S \leq s+N^* \}}]}{\mathbb{E}[w_\theta(S)] }  + \frac{\mathbb{E}[Sw_\theta(S)\mathbbm{1}_{\{S>s+N^* \}}]}{\mathbb{E}[w_\theta(S)] } \\
              &\leq  \frac{s}{1 + q_\theta(s) } +  \frac{(s+N^*)q_\theta(s) }{1 + q_\theta(s) } +  \frac{\mathbb{E}[Sw_{\theta}(S)\mathbbm{1}_{\{S >s+ N^* \} } ]}{\mathbb{E}[w_{\theta}(S)\mathbbm{1}_{\{S \leq s+N^* \} }]} \frac{1}{1+q_\theta(s+N^*) } \\
              &\leq  \frac{s}{1 + q_\theta(s) } +  \frac{(s+N^*)q_\theta(s) }{1 + q_\theta(s) } +  \frac{\mathbb{E}[Sw_{\theta}(S)\mathbbm{1}_{\{S >N^* \} } ]}{\mathbb{E}[w_{\theta}(S)\mathbbm{1}_{\{S \leq N^* \} }]} \frac{1}{1+q_\theta(s+N^*) }\\
              &\leq  \frac{s}{1 + q_\theta(s) } +  \frac{(s+N^*)q_\theta(s) }{1 + q_\theta(s) } +   \frac{\varepsilon}{1+q_\theta(s+N^*) }. 
        \end{align*}
     Using \eqref{eq:cond:w:limit}, we have $\limsup_{\theta\to a^+} K(\theta) \leq s +\varepsilon$. Since $s>F_{S}^{-1+}(0)$ and $\varepsilon>0$ are arbitrary, and $K(\cdot)$ is non-decreasing, we conclude that     $\lim_{\theta\downarrow a}K(\theta) = F_S^{-1+}(0)$. 
        

  Finally, we show that $K : (a,b^*) \to \mathcal{S}_{{\bf X}}$ is continuous. The surjectivity then follows from the Intermediate Value Theorem and the limits of $K(\cdot)$ at the two end-points. Indeed, for any $\theta_0 \in (a,b^*)$ and any sequence $\{\theta_n\}_{n=1}^{\infty}\subseteq (a,b^*)$ with $\theta_n \to \theta_0$,  using the fact that $\theta\mapsto w_\theta(s)$ is continuous, by \eqref{eq:cond:w:integrability}, and by the Dominated Convergence Theorem applying to the numerator and denominator, 
\[
K(\theta_n) = \frac{\mathbb{E}[S w_{\theta_n}(S)]}{\mathbb{E}[w_{\theta_n}(S)]} \to \frac{\mathbb{E}[S w_{\theta_0}(S)]}{\mathbb{E}[w_{\theta_0}(S)]} = K(\theta_0),
\]
and thus $K(\cdot)$ is continuous on $(a,b^*)$.

\subsection{Proof of Proposition \ref{pp:comono:weighted:risk}}
\label{sec:app:pf:PP52}
 For any $i=1,\dots,n$ and $\theta\in I$, we have 
        \begin{equation*}
            K_i(\theta) = \frac{\mathbb{E}[X_i w_\theta(S)]}{\mathbb{E}[w_\theta(S)]} = \frac{\mathbb{E}[g_i(S)w_\theta(S)]}{\mathbb{E}[w_\theta(S)]},
        \end{equation*}
    where $g_i(s) := \mathbb{E}[X_i|S=s]$. Since $g_i(\cdot)$ is non-decreasing, by Lemma \ref{lem:K:monotonic:weighted}, $K_i(\cdot)$ is also non-decreasing for all $i=1,\dots,n$, and the comonotonicity of \eqref{eq:weighted:risk:sharing} thus follows.  

\subsection{Proof of Proposition \ref{pp:holistic}}
\label{sec:app:pf:PP53}
  By \eqref{eq:holistic:sol}, for $\theta\in I$,
            \begin{equation*}
                K(\theta) = (1-\beta)\rho_\theta(S) + \beta \sum_{j=1}^n \rho_{j,\theta}(X_j).
            \end{equation*}
   Using the fact that $0\leq \beta\leq 1$,
        \begin{align*}
           \lim_{\theta\to a^+} K(\theta) &\leq (1-\beta) F_S^{-1+}(0) + \beta \sum_{j=1}^n F_{X_i}^{-1+}(0) \\
           &= (1-\beta) F_S^{-1+}(0) + \beta F_{S^c}^{-1+}(0) \leq F_S^{-1+}(0);\\
            \lim_{\theta\to b^-} K(\theta) &\geq (1-\beta) F_S^{-1}(1) + \beta \sum_{j=1}^n F_{X_i}^{-1}(1) \\
           &= (1-\beta) F_S^{-1}(1) + \beta F_{S^c}^{-1}(1) \geq F_S^{-1}(1),
        \end{align*}
    where $S^c$ is the comonotonic sum of $(X_i)_{i=1}^n$.  Since $K(\cdot)$ is continuous on $(a,b)\subset I$, we infer that  $(F_S^{-1+}(0),$ $F_S^{-1}(1))\subset K((a,b))$. By Proposition \ref{pp:K:surjective}, there exists a random variable $\Theta$ taking values in $I$ such that $K(\Theta)=S$, and thus \eqref{eq:holistic:risk} is a risk-sharing rule.

\end{appendices}

\end{document}